%% file: main.tex
\documentclass[aps,pra,nofootinbib]{revtex4-2}
\usepackage{graphicx}
\usepackage{booktabs}

\date{\today}

\input{macros}

\usepackage{tabularx}

\begin{document}

\title{Classical Verifier Position Verification from Non-Local Games}

\author{Wen Yu Kon}
\affiliation{Global Technology Applied Research, JPMorganChase, New York, NY 10017, USA}

\author{Fatih Kaleoglu}
\affiliation{Global Technology Applied Research, JPMorganChase, New York, NY 10017, USA}

\author{Kaushik Chakraborty}
\email{kaushik.chakraborty@jpmchase.com}
\affiliation{Global Technology Applied Research, JPMorganChase, New York, NY 10017, USA}

\begin{abstract}
Secure position verification certifies that a remote prover occupies a claimed location, a task that is provably impossible with classical resources alone against colluding adversaries.
Most existing quantum position verification schemes require transmitting quantum states between verifiers and provers, and the resulting photon loss makes long-distance verification impractical.
Classical verifier position verification (CVPV) schemes with classical verifier-prover communication exist, but rely on complex quantum processes that generate certifiable randomness, placing them outside the reach of near-term hardware. 
Here we introduce a general compiler that maps any complete-support non-local game with the required quantum advantage into a multi-prover CVPV protocol.
This addresses both limitations, with entirely classical verifier-prover communication and all quantum resources confined to shared entanglement and local measurements on the prover devices.
Instantiating with the CHSH game enables near-term implementation on existing experimental platforms.
We prove finite-size security in the quantum random oracle model for both simultaneous verification of multiple provers and verification of a single prover with multiple prover devices.  
Notably, security in the multi-prover setting is governed not by the certified randomness of the joint provers' outputs, but by certified blind local randomness that depends critically on the spatial arrangement of provers and verifiers.
\end{abstract}

\maketitle

\input{Sections/Intro}

\section{CVPV Protocol}
\label{sec:protocol}
\input{Sections/Protocol}

\section{Protocol Security}
\label{sec:security}
\input{Sections/Security_analysis}

\section{Geometric Conditions in Practice}
\label{sec:geometry}
\input{Sections/Geometric_Conditions}

\section{Discussion}
\label{sec:discussion}
\input{Sections/Examples}
\input{Sections/Discussion}

\section{Methods}
\label{sec:methods}
\input{Sections/Methods}

\begin{acknowledgments}
    This paper was prepared for informational purposes with contributions from the Global Technology Applied Research center of JPMorgan Chase \& Co. This paper is not a product of the Research Department of JPMorgan Chase \& Co. or its affiliates. Neither JPMorgan Chase \& Co. nor any of its affiliates makes any explicit or implied representation or warranty and none of them accept any liability in connection with this paper, including, without limitation, with respect to the completeness, accuracy, or reliability of the information contained herein and the potential legal, compliance, tax, or accounting effects thereof. This document is not intended as investment research or investment advice, or as a recommendation, offer, or solicitation for the purchase or sale of any security, financial instrument, financial product or service, or to be used in any way for evaluating the merits of participating in any transaction.
\end{acknowledgments}

\bibliographystyle{naturemag}
\bibliography{refs}

\newpage

\appendix

\input{Appendix/prelims}

\input{Appendix/Overall_Security}

\input{Appendix/rom}

\input{Appendix/NL_Game_Reduction}

\input{Appendix/Blind_Randomness}

\input{Appendix/Finite_Size}

\input{Appendix/Two_prover_two_verifier_appendix}

\input{Appendix/Counterexample}

\input{Appendix/Geometry}

\end{document}

%% file: macros.tex
\usepackage{amsmath}
\usepackage{amsfonts}
\usepackage{amssymb}
\usepackage{amsthm}
\usepackage{thmtools}
\usepackage{float}

\usepackage[table,xcdraw]{xcolor}
\usepackage{mathrsfs}
\usepackage{tikz}
\usetikzlibrary{arrows.meta,calc,decorations.pathmorphing,positioning}

\usepackage[colorlinks]{hyperref}
\hypersetup{linkcolor=purple,filecolor=blue,citecolor=magenta,urlcolor=blue}
\usepackage{cleveref}
\usepackage[T1]{fontenc}

\usepackage{comment}
\usepackage{graphicx}
\usepackage[margin=1in]{geometry}

\usepackage{braket}
\usepackage{subcaption}
\usepackage{ulem}
\usepackage{array}
\usepackage{soul}
\usepackage{tabularx}
\usepackage{adjustbox}

\newif\ifcomments
\commentstrue

\DeclareRobustCommand{\emph}[1]{%
  \ifdim\fontdimen1\font>0pt
    \uline{#1}%
  \else
    \textit{#1}%
  \fi
}

\newtheorem{theorem}{Theorem}
\newtheorem{lemma}[theorem]{Lemma}
\newtheorem{claim}{Claim}
\crefname{claim}{claim}{claims}
\Crefname{claim}{Claim}{Claims}

\newtheorem{corollary}{Corollary}
\newtheorem{remark}{Remark}
\newtheorem{definition}{Definition}
\theoremstyle{definition}

\newcommand{\ketbra}[2]{\ket{#1}\!\!\bra{#2}}

\newcommand{\abs}[1]{\left| #1 \right|}

\newcommand{\R}{\mathbb{R}}

\newcommand{\channel}{\mathcal{E}}   
\newcommand{\ellipse}{\mathscr{E}}

\renewcommand{\vec}[1]{\mathbf{#1}} 

\renewcommand{\vec}[1]{\mathbf{#1}}
\newcommand{\Tr}{\mathsf{Tr}}

\newcommand{\norm}[1]{\left\lVert#1\right\rVert}

\newcommand{\Cplus}[1]{C^{+}\!\left(#1\right)}
\newcommand{\Cminus}[1]{C^{-}\!\left(#1\right)}
\newcommand{\Ctildeplus}[1]{\tilde{C}^{+}\!\left(#1\right)}

\newcommand{\Radv}{\mathcal{R}_{\mathcal A}}

\newcommand{\Kset}{\mathsf K}

\newcommand{\Sanc}[1]{\mathsf S_{#1}^{\mathrm{anc}}}
\newcommand{\Ssc}[1]{\mathsf S_{#1}^{\mathrm{sc}}}
\newcommand{\Spre}{\mathsf S^{\mathrm{pre}}}

\newcommand{\Rsc}[1]{R_{#1}^{\mathrm{sc}}}
\newcommand{\chset}[2]{\mathsf{ch}_{#1,#2}}
\newcommand{\Postset}[2]{\mathsf{post}_{#1,#2}}

\newcommand{\Emap}[1]{\channel_{#1}^{a_{#1}}}

\newcommand{\TD}{\mathrm{TD}}
\newcommand{\Var}{V}
\newcommand{\pg}{p_{\mathrm g}}
\newcommand{\pmis}{p_{\mathrm{mis}}}
\newcommand{\pmisi}[1]{p_{\mathrm{mis},#1}}
\newcommand{\pmisth}{\pmisi{\mathrm{th}}}
\newcommand{\omegacl}{\omega_{\mathrm{cl}}}
\newcommand{\omegaonecl}{\omega_{1\text{-}\mathrm{cl}}}
\newcommand{\omegaq}{\omega_q}
\newcommand{\omegath}{\omega_{\mathrm{th}}}
\newcommand{\omegaqclass}[1]{\omega_{q,k-1,#1}}
\newcommand{\omegaCHSH}{\omega_{\mathrm{CHSH}}}
\newcommand{\omegaGHZ}{\omega_{\mathrm{GHZ}}}

\newcommand{\Cmax}{C_{\mathrm{max}}}

\newcommand{\epsROM}{\varepsilon_{\mathrm{ROM}}}
\newcommand{\epssou}{\varepsilon_{\mathrm{sou}}}
\newcommand{\epscom}{\varepsilon_{\mathrm{com}}}

\usepackage{environ}
\usepackage{tcolorbox}
\tcbuselibrary{skins}

\newcounter{box}
\crefname{box}{Protocol}{Protocols}

\NewEnviron{protocol}[2][t]{%
  \refstepcounter{box}%
  \begin{table}[#1]
    \centering
    \begin{tcolorbox}[
      width=0.95\columnwidth,
      title={\textbf{Protocol \thebox:}\quad #2},   
      colbacktitle=cyan!60!black,              
      coltitle=white,
      colback=white!99!teal,                          
      colframe=teal,
      boxrule=0.8pt,
      arc=2pt,
      left=6pt, right=6pt,
      top=6pt, bottom=6pt,
      fonttitle=\bfseries,
      halign=flush left,                             
      sharp corners=south,
      enhanced,
      segmentation hidden
    ]
      \BODY
    \end{tcolorbox}
  \end{table}%
}

\newcommand{\protG}{\mathscr{P}_{G\text{-}\mathrm{PV}}}
\newcommand{\tildeprotG}{\tilde{\mathscr{P}}_{G\text{-}\mathrm{PV}}}
\newcommand{\tildeprotGG}[1]{\tilde{\mathscr{P}}_{G_{\textrm{#1}}\text{-}\mathrm{PV}}}

\newcommand{\protGG}[1]{\mathscr{P}_{G_{\textrm{#1}}\text{-}\mathrm{PV}}}
\newcommand{\protGGprime}[1]{\mathscr{P}_{G_{\textrm{#1}}'\text{-}\mathrm{PV}}}
\newcommand{\omegaG}{\omega}                 
\newcommand{\omegaGstrat}[1]{\omega(#1)}

\newcommand{\fail}{\mathsf{FAIL}}

%% file: Sections/Intro.tex
\section{Introduction}
\label{sec:introduction}

The ability to certify the physical location of a remote party is a valuable primitive in cryptography, with applications ranging from data residency enforcement and regulatory compliance~\cite{GDPR} to fraud detection and location-based access control.
In these settings, a prover's location claim must be difficult to falsify, i.e. an adversary should be unable to convince a verifier that it occupies a location that it does not.
No purely classical position verification protocol can be made secure against colluding adversaries, even with computational hardness assumptions unless the adversary's storage or retrieval capabilities are bounded~\cite{Chandran2009,Buhrman2014}.
A distributed coalition of adversary agents can always relay classical messages to simulate a party's presence at any claimed location, regardless of the timing constraints imposed.

Quantum position verification (QPV) addresses this fundamental limitation by exchanging quantum systems between the prover and verifiers.
Due to the no-cloning theorem, the same coalition of agents is no longer able to copy and transmit the quantum systems without introducing detectable errors~\cite{Buhrman2014,Kent2011}.
As a result, QPV has been proven secure against adversaries with bounded entanglement~\cite{Bluhm2022,Llorenc2023,Llorenc2025_oneshot}, bounded quantum gates~\cite{May2026} or with limited query access in the quantum random oracle model (QROM)~\cite{Unruh14}, and experimental demonstrations have recently been reported~\cite{Kanneworff_2025,Kavuri2026,FanYuan2026,Kon2026}.
Looking further ahead, emerging quantum networks capable of distributing entanglement between remote nodes~\cite{Pompili2021,Hermans2022,Ruskuc2025,Liu2024} provide a natural platform for scaling QPV deployment.
However, most existing QPV protocols require a quantum channel between the prover and at least one verifier, which poses a fundamental scaling challenge.
Realistic quantum channels suffer from photon loss that grows with distance, significantly degrading protocol performance.
Proposals to mitigate this loss exists~\cite{Llorenc2023,Allerstorfer2025,Llorenc2026}, but they add measurement complexity, and the core requirement of a quantum channel between verifier and prover remains.

This scaling barrier motivates the search for classical-verifier position verification (CVPV) protocols, in which verifiers exchange only classical messages with the prover while the prover performs quantum operations internally~\cite{LLQ22,KLC+26}. 
Eliminating the verifier-prover quantum channel removes the photon-loss bottleneck entirely. 
Existing CVPV constructions achieve this at significant cost in prover-side complexity.
The protocol of Ref.~\cite{LLQ22} requires fault-tolerant quantum computation, while that of Ref.~\cite{KLC+26} requires sampling from Haar-random unitaries and exponential classical post-processing. 
Ref.~\cite{KLC+26} established a foundational conceptual link, demonstrating that certified randomness generated by a quantum prover is sufficient to achieve CVPV.
However, the practical complexity of these constructions places them outside the reach of near-term quantum hardware.
Separately, Ref.~\cite{Kavuri2026} noted an alternative, more practically-motivated path toward CVPV, observing that relocating the quantum devices in their device-independent QPV implementation away from the verifiers would allow the verifiers to become entirely classical, though this direction was left unformalized.

Here, we propose a general CVPV protocol with significantly lower prover complexity.
Rather than the complex internal quantum processes of earlier proposals, we rely on the quantum advantage of non-local games, cooperative games in which separated players produce correlated responses to shared challenges without communicating.
Our protocol acts as a compiler, converting any complete-support non-local game with quantum advantage into a multi-prover CVPV protocol that simultaneously certifies multiple independent prover locations.
Verifiers communicate with provers classically, with all quantum resources confined to shared entanglement and local measurements at the prover side.
A non-local game involves several separated players, so the protocol requires multiple prover devices rather than one.
In return, the prover-side quantum requirements are significantly reduced, needing only shared entangled state preparation and local measurements with no complex post-processing.
In particular, instantiating the protocol with the CHSH game requires only the entanglement sources and local measurement already demonstrated in loophole-free Bell test and application experiments~\cite{Kavuri2026,Nadlinger2022,Lu2026,Zhang2022,Kulikov2026}. 
We stress here that the prover-prover separation need not be large and need not scale with prover-verifier separation, since one can consider the certification of a single prover operating multiple prover devices within a small region, for example within a data center.
This simultaneously resolves the near-term implementability barrier of prior CVPV and the quantum channel scaling barrier of standard QPV, providing a clear path towards near-term CVPV implementation.

We prove finite-size security in the QROM~\cite{Unruh14}, and in doing so extend the framework of CVPV to the multi-prover setting and clarify the randomness structure underlying the security. 
In the single-prover setting of Ref.~\cite{KLC+26}, there is only one operative notion of certified randomness. 
In the multi-prover setting, one must distinguish between global randomness of the joint provers' outputs and blind local randomness~\cite{Miller2017}. 
Our analysis shows that it is generally blind local randomness that underlies security, and that this condition depends critically on the geometric arrangement of provers and verifiers rather than being an intrinsic property of the non-local game alone. 
A non-local game can exhibit perfect quantum advantage and globally random outputs yet still admit an explicit attack when the geometry fails to enforce security.
Conversely, certain geometric arrangements can naturally recover a global randomness condition, illustrating how the operative randomness notion in multi-prover CVPV is fundamentally shaped by the geometry.

%% file: Sections/Protocol.tex
\subsection{Position Verification}
\label{subsec:pv}

The goal of position verification is to certify the location of prover(s) $P_1, \ldots, P_k$ with the aid of trusted verifiers $V_1, \ldots, V_m$, with $m \geq k$.
The verifiers send classical challenges $\{a_{i,j}\}$ at pre-determined times, and the provers return classical responses $\{x_{i,j}\}$ that must arrive within a timing threshold consistent with the claimed prover locations.
In QPV, additional exchange of quantum systems between verifiers and provers may be necessary, but we restrict our attention to CVPV, where all challenges and responses are classical.
A secure position verification protocol should satisfy two conditions.
Firstly, honest provers at the claimed locations should be certified with high probability.
Secondly, any attempt by an adversary to spoof the provers' location should be detected with high probability.
More formally, these are defined as the completeness and soundness,

\begin{definition}[Completeness]
\label{def:completeness}
A position verification protocol $\mathcal{P}$ has completeness error $\epscom$ for a set of honest provers operating at the claimed locations if the verifiers output \textsc{pass} with probability at least $1 - \epscom$.
\end{definition}

\begin{definition}[Soundness]
\label{def:soundness}
Let $\mathcal{A}$ be a class of adversaries.
A protocol $\mathcal{P}$ has soundness error $\epssou$ with respect to $\mathcal{A}$ if for any adversarial strategy in $\mathcal{A}$, the verifiers output \textsc{pass} with probability at most $\epssou$.
\end{definition}

\subsection{Non-local games}

Our compiler builds a CVPV protocol from non-local games, in which $k$ players receive challenges $\vec{a}=(a_1,\dots,a_k)$ from a referee (sampled according to $p_{\vec{a}}$), return responses $x_1, \dots, x_k$ without communicating, and are
assigned a score $\omega(a, x) \in [0, 1]$.
Players share a quantum state $\rho_{Q_1 \cdots Q_k}$ and respond via local measurements $\{A^{a_i}_{x_i}\}_{x_i}$ on their respective subsystems $Q_i$.
After player $i$ measures and obtains outcome $x_i$, we denote the residual post-measurement register by $Q_i'$.
A non-local game has \emph{quantum advantage} if $\omegaq > \omegacl$, where $\omegacl$ and $\omegaq$ are the optimal classical and quantum scores. 
We say that $G$ has \emph{complete support} if $p_{\vec{a}} > 0$ for every joint challenge $\vec{a}$. 
Without loss of generality, we focus on complete support non-local games throughout since any game $G$ without complete support can be converted to one with complete support while preserving a scaled advantage (see Appendix~\ref{app:ghz_gap_dilution}). 
A key insight underlying the security of our protocol is the idea of \emph{blind local randomness}, where a high score in a complete-support non-local game prevents one player's output from being reliably predicted by another player~\cite{Miller2017}.

\subsection{Protocol Description}
\label{sec:construction}

Protocol~$\protG$ is a general compiler that takes any $k$-player complete-support non-local game $G$ with quantum advantage and produces a CVPV protocol with $k$ provers and $m \geq k$ verifiers.
To construct the CVPV protocol, we utilize the \emph{split-and-hash paradigm} introduced in Ref.~\cite{Unruh14} and applied to CVPV constructions~\cite{LLQ22,KLC+26}.
We split each prover's non-local game inputs into challenge shares $a_{i,j}$, from which the challenge is recovered as $a_i = f_i(a_{i,1}, \ldots, a_{i,m})$, where $f_i$ is some cryptographic hash function.
The provers play the non-local game with the computed challenge, and sends their responses to every verifier, each of which expects a response within a set time window.
After $N$ rounds, the verifiers perform two checks: (1) that the non-local game score is high and (2) the prover responses to different verifiers are consistent ($x_{i,j}=x_{i,j'}$ for $j\neq j'$).
We introduce a \emph{score selector} $g : [k] \to [m]$, which specifies, for each prover $P_i$, which verifier $V_{g(i)}$ receives the response $x_{i,g(i)}$ used to compute the non-local game score.
The remaining response copies $x_{i,j}$ for $j \neq g(i)$ are used for consistency checks.
We term $x_{i,g(i)}$ the score-response and $x_{i,j}$ ($j\neq g(i)$) a copy-response.

To have a secure CVPV protocol, there are additional geometrical constraints on the prover-verifier placements.
The placements must ensure 
\begin{enumerate}
    \item \textbf{Score-input isolation}: the score-response $x_{i,g(i)}$ should not be causally influenced by other inputs $a_s$ ($s\neq i$),
    \item \textbf{Guess-forcing checks}: there exist suitable consistency checks forcing any adversary to provide a copy-response $x_{i,j}$ that matches score-response $x_{i,g(i)}$ with access to side-information $I_{i,j}$ which does not contain the score-response and the quantum system that generated it.
\end{enumerate}
The first condition allows for a reduction of any adversary strategy to that of a non-local game, where one player's response $x_i$ is returned without information of other players' inputs $a_s$ ($s\neq i$).
The second condition then forces the adversary to use other players' or external information to guess the score-response $x_{i,g(i)}$, of which the success is limited by the blind local randomness of the underlying non-local game.
The precise geometric conditions that are required for the security and how they translate to concrete placement criteria are discussed in later sections.

\begin{protocol}{\label{prot:qpv_from_nl_game}Classical-Verifier Position Verification from Non-Local Games $\protG$} 
\textbf{Parameters:} $k$-player complete-support non-local game $G = (p, \omega)$ with challenge alphabets $\mathcal{A}_1, \ldots, \mathcal{A}_k$ and response alphabets $\mathcal{X}_1, \ldots, \mathcal{X}_k$, $k$ provers $P_1, \ldots, P_k$ at claimed positions, $m \geq k$ verifiers $V_1, \ldots, V_m$, strategy $\Gamma=(\rho_{Q_1\cdots Q_k},\{A_{x_i}^{a_i}\}_{x_i})$ hash functions $f_i : \{0,1\}^{n m} \to \mathcal{A}_i$ for $i \in [k]$, score selector $g : [k] \to [m]$, number of rounds $N$, acceptance thresholds $\omegath$ and
$\pmisth$.\\
\begin{enumerate}
    \item Repeat for rounds $r = 1, \ldots, N$:

    \begin{enumerate}
        \item \textbf{Quantum state preparation.} The provers prepare a joint quantum state $\rho_{Q_1 \cdots Q_k}$, with subsystem $Q_i$ held by prover $P_i$. This step may be performed in advance of the round. 

        \item \textbf{Verifier broadcast.} Each verifier $V_j$ independently samples uniform random bit strings as challenge shares $a_{1,j} \in \{0,1\}^{n_1}, \ldots, a_{k,j} \in \{0,1\}^{n_k}$ and sends $a_{i,j}$ to prover $P_i$, timed so that all strings arrive at $P_i$ simultaneously at a pre-agreed target time $t_0$.

        \item \textbf{Prover Response.} Each prover $P_i$ receives the challenge shares $a_{i,1}, \ldots, a_{i,m}$ at time $t_0$ and computes the challenge $a_i = f_i(a_{i,1}, \ldots, a_{i,m})$. Prover $P_i$ then performs the measurement $\{A^{a_i}_{x_i}\}_{x_i}$ on subsystem $Q_i$ to produce response $x_i$, and sends a copy $x_{i,j}$ to every verifier $V_j$ as quickly as possible.

        \item \textbf{Timing Check and Score Assignment.} 
        Each verifier records the arrival time of all received responses. If any response arrives later than the deadline consistent with the claimed prover location, the verifiers assign the worst score $\omega_r = 0$ and record a mismatch $T_r = 1$ for this round. Otherwise, the verifiers compute the round score $\omega_r = \omega(a_1, \ldots, a_k,\, x_{1,g(1)}, \ldots, x_{k,g(k)})$ using the selected response copies. The verifiers perform consistency checks. If $x_{i,j} \neq x_{i,j'}$ for any prover $i$ and verifiers $j \neq j'$, the verifiers record $T_r = 1$. Otherwise, they record $T_r = 0$.
        
    \end{enumerate}

    \item \textbf{Parameter Estimation.} After $N$ rounds , the verifiers compute the average score $\bar{\omega} = \frac{1}{N}\sum_{r=1}^{N} \omega_r$ and the total mismatch rate $\pmis = \frac{1}{N}\sum_{r=1}^{N} T_r$. If $\bar{\omega} \geq \omegath$ and $\pmis \leq \pmisth$, the verifiers output \textsc{pass} and certify the locations of all provers. Otherwise, the verifiers output \textsc{fail}.
\end{enumerate}
\end{protocol}

The full protocol is given in \cref{prot:qpv_from_nl_game}, and an example of the protocol with 2 provers and 2 verifiers in a collinear arrangement is shown in Fig.~\ref{fig:protocol}.
For simplicity, we present the protocol without accounting for internal processing delays at the provers.
The effect of such delays on the timing constraints is discussed in
Section~\ref{sec:discussion}.

\begin{figure}
    \centering
    \includegraphics[width=\linewidth]{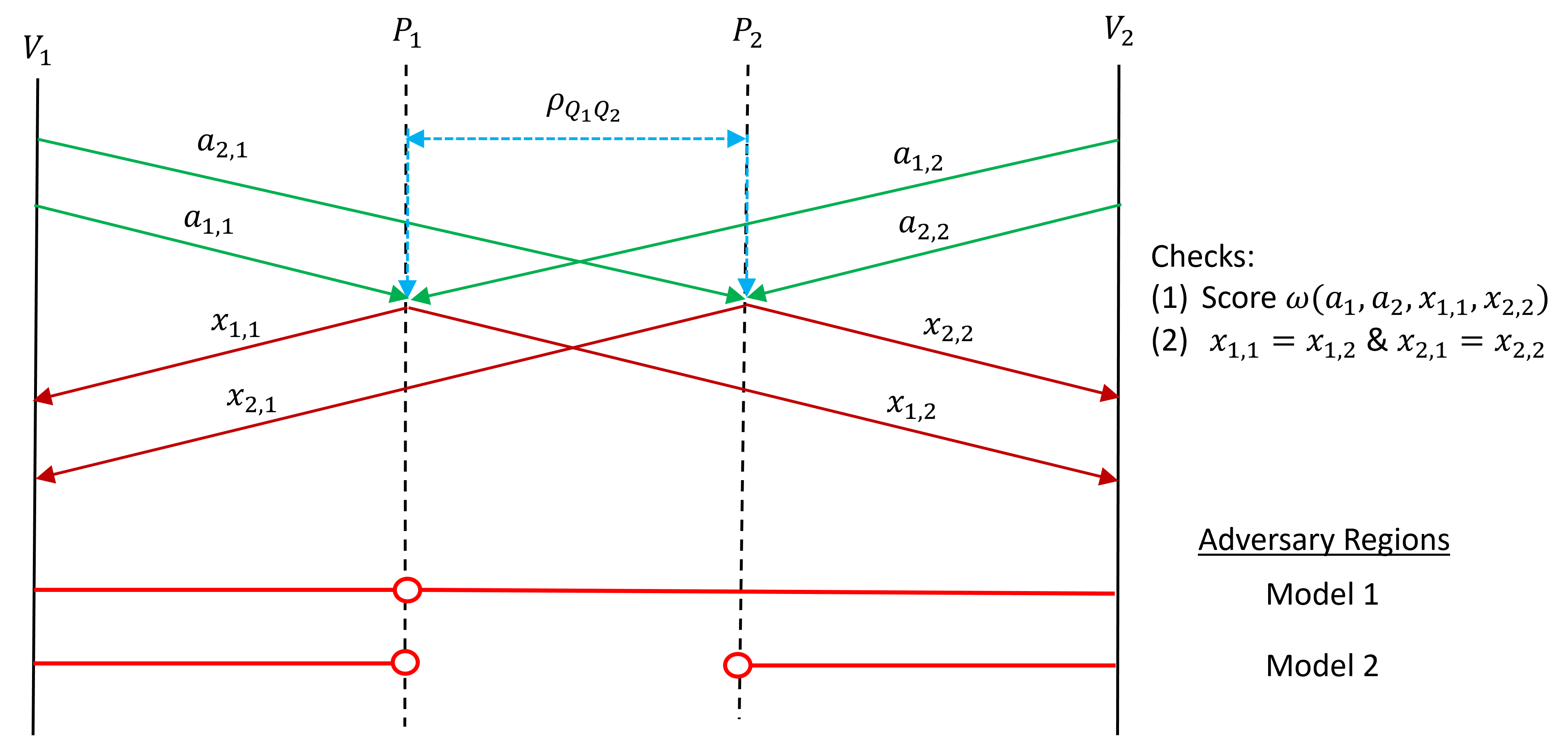}
    \caption{Spacetime diagram indicating the challenges and responses in a single round of protocol $\protG$ for a 2-prover 2-verifier instantiation. The lines in red represent the positions that an adversary in the corresponding model can be located in, while the red circle represents excluded positions.}
    \label{fig:protocol}
\end{figure}

%% file: Sections/Security_analysis.tex
\subsection{Adversary Model} 
\label{sec:adversary_model}

The security of Protocol~$\protG$ is analyzed against two physically motivated adversary models, which correspond to two different use cases.
In both models, the adversary is a collection of agents confined to an adversary-accessible spacetime region $\Radv \subseteq \Omega$. 
We make the following assumptions on the protocol and adversary in both models:
\begin{enumerate}
    \item \textit{Sequential rounds.} The protocol runs sequentially, with each round completed before the next round begins. The adversary attacks sequentially, but may carry quantum systems and classical information between rounds.
    \item \textit{Bounded oracle queries.} The adversary makes at most $q$ queries in total to each hash function $f_j$.
    \item \textit{Finite-dimensional systems.} All quantum systems held or transmitted by the adversary agents are finite-dimensional.\footnote{Note that for this assumption refers to arbitrary finite-dimensional states. For coherent states, these can be approximated by finite-dimensional states by imposing suitable energy bounds.}
\end{enumerate}

The two adversary models are (illustrated in Fig.~\ref{fig:protocol} for a collinear two-prover two-verifier arrangement):

\textbf{Model~1 (Single missing prover).}
One general goal for multi-prover position verification can be to certify the locations of all provers simultaneously.
This can be useful if multiple location certification requests in a network are required, replacing the need to perform multiple single-prover protocols.
Certifying all provers require security against an adversary that may place colluding agents anywhere in spacetime except at one claimed prover location $P_u$.
The index $u$ is not fixed by the protocol, and the adversary may choose which prover location to vacate, or mix over choices.

\textbf{Model~2 (Protected region).}
Another goal can be to certify that a prover's collection of prover devices lies within a protected region.
This captures settings such as a data center or secure facility containing several separated prover devices, in which passing the protocol certifies that the required devices are present in the region.
This suits applications where in-region attestation rather than precise location certification is required, such as to satisfy data residency requirements.
Here the adversary's agents are excluded from all claimed prover locations, or more generally from a region containing all prover devices, but may lie anywhere outside it.

\subsection{Security Analysis}

We establish the soundness of Protocol~$\protG$ (see Definition~\ref{def:soundness}) in four steps.

\textbf{Step 1.}
Consider a simplified protocol $\tildeprotG$ obtained from protocol $\protG$ by removing challenge shares $a_{i,j}$ and hash function $f_i$, and each challenge $a_i$ is available only in the spacetime region where all its shares $a_{i,j}$ are accessible.
By extending the split-and-hash analysis~\cite{Unruh14} to the multi-prover setting, we show that in the QROM with limited query access, the soundness of Protocol~$\protG$ reduces to that of a simplified protocol $\tildeprotG$ with a penalty $Nkq\sqrt{m}\cdot 2^{-n/2-1}$.

\textbf{Step 2.} 
By partitioning any adversarial strategy in $\tildeprotG$ into spacetime regions that generate responses $x_{i,g(i)}$ from challenge $a_i$, we use the score-input isolation condition to map the strategy to a valid quantum strategy for $G$ with the same score.
The consistency checks $x_{i,j}=x_{i,g(i)}$ that is \underline{guess-forcing} relative to side-information $I_{i,j}$ imposes additional guessing steps in the non-local game strategy to generate the $x_{i,j}$ copy-responses, and the probability of guessing $x_{i,g(i)}$ from the accessible $I_{i,j}$ lower bounds the mismatch probability $\pmis$.

\textbf{Step 3.} 
By extending the blind local randomness analysis of Ref.~\cite{Miller2017} from two-party to multi-party non-local games, we bound the guessing probability of $x_{i,g(i)}$ analytically.
For model 1 with a consistency check $x_{i,j}=x_{i,g(i)}$ that is guess-forcing with $X_{i,g(i)},Q_i'\notin I_{i,j}$ for every prover, the mismatch obeys $\pmis\geq C'(\omega-\omegaonecl)^2$ for some constant $C'$, where $\omegaonecl$ is the optimal score for strategies where one party is forced to be classical ($\omegaq\geq\omegaonecl\geq\omegacl$).
For model 2 with consistency check $x_{i,j}=x_{i,g(i)}$ that is guess-forcing with $X_{[k]\setminus[i]},Q_{[k]\setminus[i]}'\notin I_{i,j}$ for every prover $P_i$ ($i\geq 2$), the mismatch probability obeys $\pmis\geq C''(\omega-\omegacl)^2$ for some constant $C''$.

\textbf{Step 4.} Apply the concentration bound in Ref.~\cite{Vanhimbeeck2019} to lift the single-round bound to the soundness of the $N$-round protocol.

The completeness for honest provers can be established separately by applying concentration bounds~\cite{Zubkov2013_SVBound}.
The full results are stated in Corollary~\ref{cor:overall_model1} and Corollary~\ref{cor:overall_model2}.

%% file: Sections/Geometric_Conditions.tex
Recall that a secure CVPV protocol requires two geometric conditions, score-input isolation and guess-forcing checks.
Here, we illustrate them with a concrete verifier placement of five provers on a ring and verifiers are on a second ring with larger radius (see Fig.~\ref{fig:geometry_final}).
The full derivations of the illustrations from spacetime constraints are provided in Appendix~\ref{app:Geometry}.
The score-generating region $\Rsc{i}$ (Fig.~\ref{fig:geometry_final}a) is the set of positions where challenge $a_i$ is accessible and the score-response $x_{i,g(i)}$ can be influenced. 
Score-input isolation requires that for every foreign challenge $a_s$, there exists at least one challenge share $a_{s,j}$ that is inaccessible to agents in $\Rsc{i}$.
This is enforced by a verifier far enough from $V_{g(i)}$ that $d(\ell_{V_{g(i)}}, \ell_{V_j}) > d(\ell_{V_{g(i)}}, \ell_{P_i}) + d(\ell_{V_j}, \ell_{P_s})$.
Fig.~\ref{fig:geometry_final}a shows the range of valid placements, which $V_3$ satisfies.

\begin{figure}[t]
    \centering
    \includegraphics[width=\textwidth]{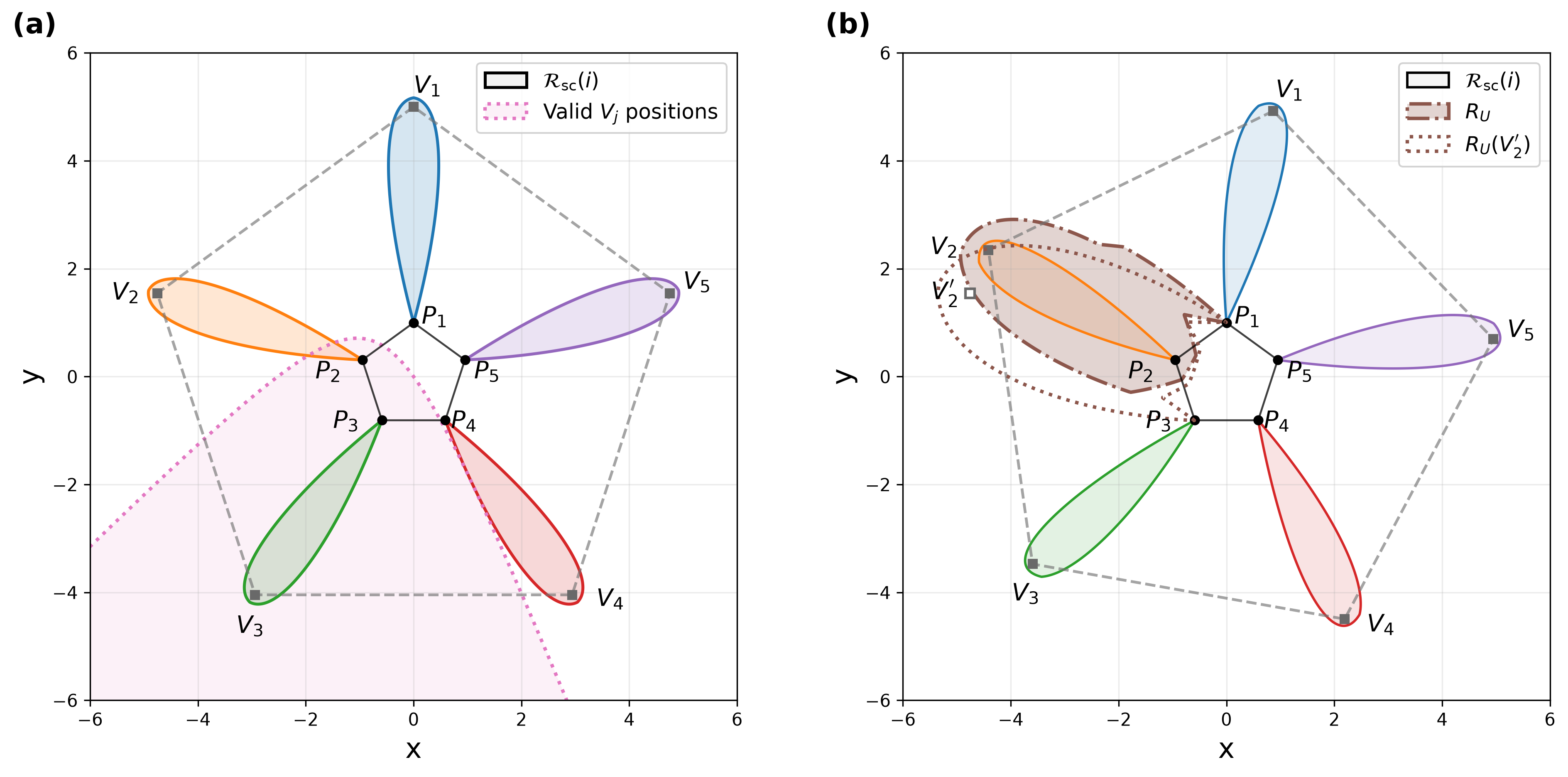}
    \caption{Example of a five verifier five prover arrangement uniformly distributed on a pentagon, with the score-generating regions $\Rsc{i}$ represented by the colored region with solid outline. (a) The region in pink with dotted outline represent the set of valid verifier $V_j$ positions that can enforce input $a_2$ is inaccessible any party generating $x_{1,g(1)}$. 
    (b) Effect of a verifier rotation on the accessible side-information for the guess-forcing check on $x_{1,2}$. With the rotated placement (union region $\mathcal{R}_{U}$ in brown), the response $x_{1,2}$ can be influenced by events in the score-generating regions of $P_1$, $P_2$. Without rotation (dotted region, with $V_2'$ marking the original position), the events in the score-generating region of $P_3$ becomes accessible. The same arrangement is analyzed in detail in Appendix~\ref{app:Geometry} (Fig.~\ref{fig:ellipse_geometry_app},\ref{fig:guess_forcing_app}).} 
    \label{fig:geometry_final}
\end{figure}

A sufficient condition to verify if $X_{s,g(s)}Q_s'$ is accessible during the consistency check $x_{i,j}=x_{i,g(i)}$ is to check for positions that can both generate score-response $x_{s,g(s)}$ and influence $x_{i,j}$.
For visualization, let $\mathcal{R}_{U}$ be the union over $s$ of the regions form which $a_s$ is accessible and $x_{i,j}$ can be influenced. 
If $\Rsc{s}$ does not intersect with $\mathcal{R}_{U}$, it implies that $X_{s,g(s)}Q_s'$ is inaccessible to an agent responding with $x_{i,j}$.
Fig.~\ref{fig:geometry_final}b shows $\mathcal{R}_{U}$, where we note that $X_{1,g(1)}Q_1'$ is accessible, but this can be excluded using the adversary-accessible region $\Radv$.
Notably, the accessible side-information, and hence the form of blind local randomness enforced by the guess-forcing condition is shaped by the geometry.
A small rotation of the verifier placement in Fig.~\ref{fig:geometry_final}b removes $Q_3'X_3$ from the side-information available for the agent to exploit when providing response $x_{1,2}$.
This sensitivity of randomness structure to spatial arrangement is a central feature of the multi-prover setting.

%% file: Sections/Examples.tex
\subsection{Practical Implementation using CHSH Game}
\label{sec:chsh}

\begin{figure}
    \centering
    \includegraphics[width=\textwidth]{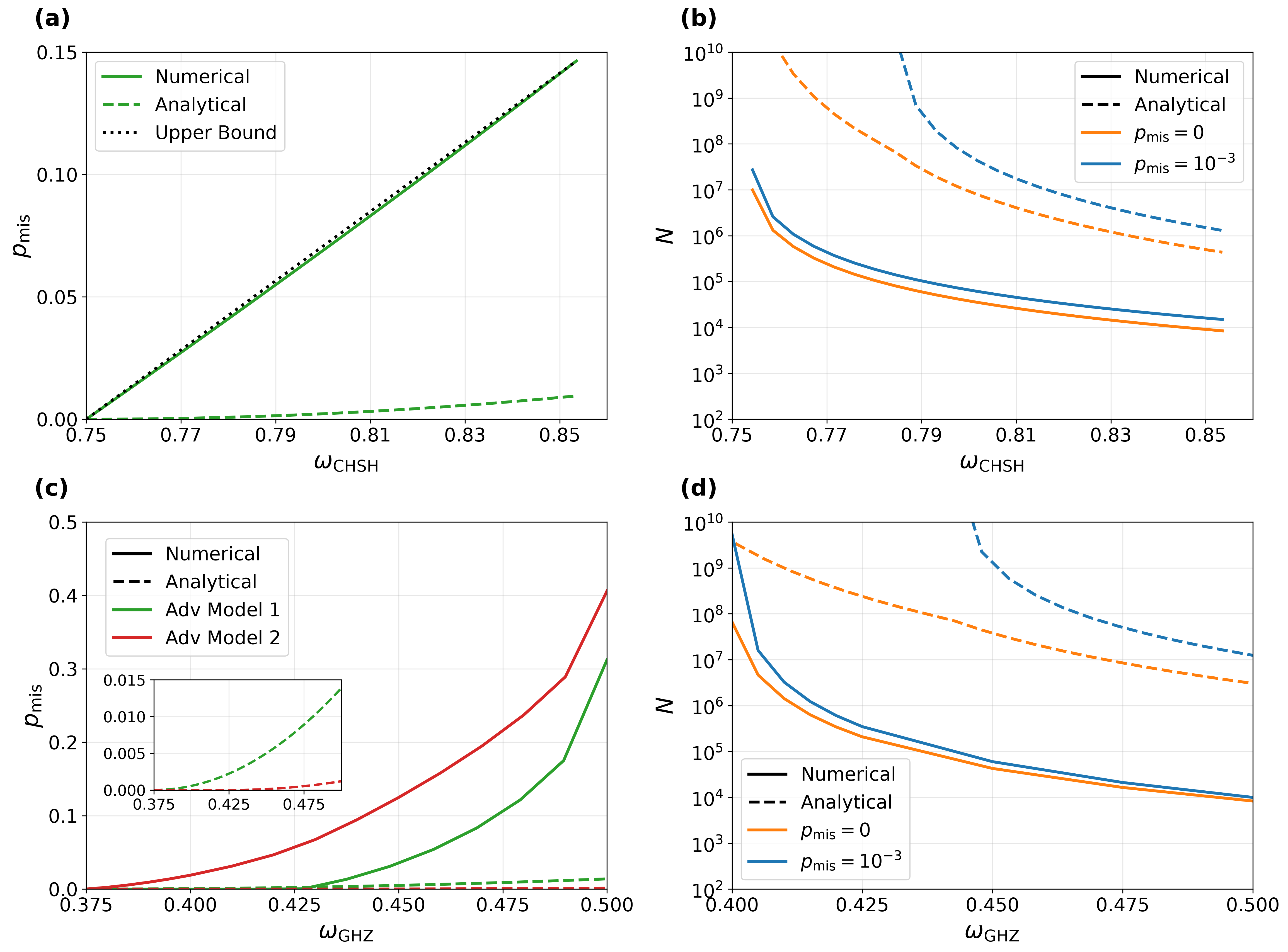}
    \caption{\label{fig:combined_qpv} (a) Mismatch probability $\pmis$ as a function of the CHSH score $\omegaCHSH$ for the CHSH instantiation of Protocol $\protGG{CHSH}$ for the analytical bound and numerical SDP bound. The dotted line indicates a tight adversary strategy for Model~1, which is a convex combination of (1) the adversary agents performing the CHSH test honestly and responding with $x_{1,2} = x_{2,1}$ and $x_{2,2} = x_{1,1}$, and (2) the adversary agents fixing all responses to zero. (b) Minimum number of rounds $N$ required for secure CHSH-based CVPV as a function of the honest CHSH score $\omega_{CHSH} \in [0.75, \frac{1}{2}+\frac{1}{2\sqrt{2}}]$, for mismatch probabilities $\pmis = 0$ (solid) and   $\pmis = 10^{-3}$ (dashed), at soundness error $\epssou = 10^{-6} + \epsROM$ and completeness error $\epscom = 10^{-6}$. (c) Mismatch probability $\pmis$ as a function of the GHZ score $\omegaGHZ$ for the modified GHZ instantiation of Protocol $\protGGprime{GHZ}$ with complete-support completion 
    $G'_\mathrm{GHZ}$. The inset zooms in to the analytical bounds. (d) Minimum number of rounds $N$ required for secure GHZ-based position verification as a function of the honest score $\omegaGHZ$, for the modified geometry, at soundness error  $\epssou = 10^{-6}$ and completeness error $\epscom=10^{-6}$.}
\end{figure}

We now instantiate the general compiler with the CHSH game, the simplest non-local game with quantum advantage, to establish near-term feasibility.
The CHSH game $G_\mathrm{CHSH}$~\cite{CHSH1969} is a two-player non-local game with challenges $(a_1, a_2) \in \{0,1\}^2$ and player responses $(x_1, x_2) \in \{0,1\}^2$, with score $\omega=\begin{cases} 1 & x_1 \oplus x_2 = a_1 \cdot a_2\\ 0 & otherwise\end{cases}1$.
Consider $\protGG{CHSH}$ with two provers and two verifiers in a collinear geometry (see Fig.~\ref{fig:protocol}).
This placement satisfies score-input isolation and has consistency check $x_{1,2}=x_{1,1}$ that is guess-forcing relative to $A_1A_2Q_2'X_2$.

The analytical mismatch-score bound [from Corollary~\ref{cor:model1_all} (Model~1) and Theorem~\ref{thm:single_round_model2} (Model~2)]
for both models is
\begin{equation}
    \label{eq:chsh_analytical_bound}
    \pmis \geq \frac{8}{9}\left(\omega - \frac{3}{4}\right)^2.
\end{equation}
Tighter bounds can be obtained by minimizing $\pmis$ numerically for each score $\omega$ over all possible adversary strategies.
This relaxes into the semidefinite program (SDP) in Eqn.~\eqref{eqn:SDP_CHSH_Model_1}, and the results are shown in Figure~\ref{fig:combined_qpv}a. 
Fig.~\ref{fig:combined_qpv}b shows the minimum number of rounds $N$ needed to achieve soundness error $\epssou = 10^{-6} + \epsROM$ and completeness error $\epscom = 10^{-6}$, using the numerical model 1 bound and the analytical bound, across a range of honest CHSH scores $\omegaCHSH$ and mismatch probabilities $\pmis = 0, 10^{-3}$.

The CHSH instantiation is particularly attractive from a practical standpoint. 
Loophole-free Bell test experiments achieving $\omegaCHSH>0.75$ have already been demonstrated across a variety of physical platforms~\cite{Kavuri2026,Nadlinger2022,Lu2026,Zhang2022,Kulikov2026}, and the same entangled-pair sources and local measurement apparatus used in those experiments constitute the entirety of the quantum hardware required by our protocol. 
Unlike prior QPV schemes~\cite{Bluhm2022,Llorenc2023,FanYuan2026,Kavuri2026}, which require a quantum channel between prover and verifier that is fundamentally limited by photon loss, our protocol confines all quantum resources to the prover side, with verifiers exchanging only classical messages over standard network infrastructure. 
This separation of quantum and classical responsibilities means that verifier nodes can be deployed incrementally by augmenting existing classical networks, with no quantum-optical hardware required at the verifier side, providing a clear and practical path toward near-term spoof-resistant location authentication.
As illustrated in Fig.~\ref{fig:network}, this enables two deployment phases corresponding to the two adversary models of Section~\ref{sec:adversary_model}.
In the near term, independent data centers each hosting a pair of close but spatially separated entangled devices can already be certified under adversary Model 2.
In the longer term when entanglement networks mature, the same verifier infrastructure can additionally certify multiple entanglement network node locations under adversary Model 1, with no modification to the verifier hardware.

\begin{figure}[t]
    \centering
    \includegraphics[width=0.8\linewidth]{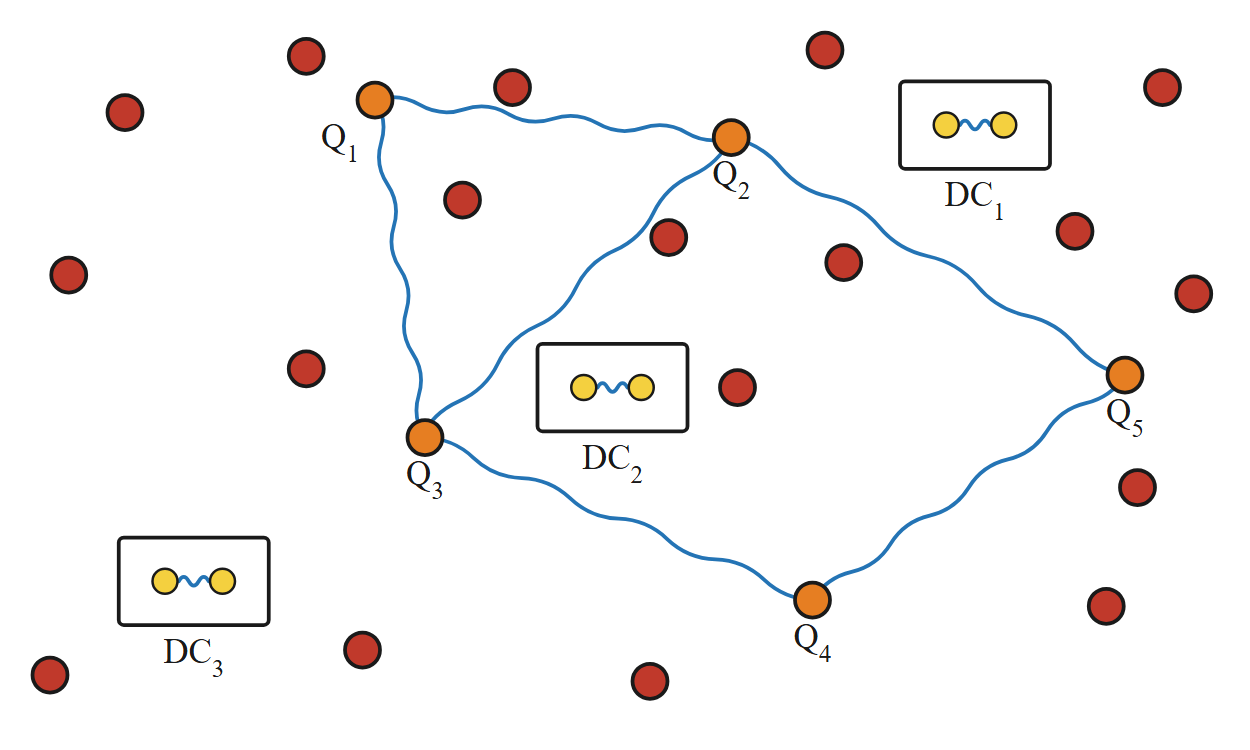}
    \caption{ Conceptual deployment architecture for the CVPV protocol in a metropolitan setting. 
    In a practical implementation, multiple classical verifiers may be scattered around a metropolitan area, with two different operational regimes. 
    Classical verifiers (red) are distributed across the region and can communicate wirelessly to any provers.
    In the near-term, the deployment supports verification of independent data center location $DC_i$ (white box), each housing a pair of spatially separated entangled devices (yellow) within the data center, under adversary model 2 without any long-distance quantum communication.
    In the longer-term where entanglement networks are mature, the deployment supports verification of multiple network nodes $Q_j$ (orange) simultaneously, under adversary model 1.
    The choice of verifier nodes to certify different data centers or entanglement network nodes can be based on the geometry rules discussed in Sec.~\ref{sec:geometry}.}
    \label{fig:network}
\end{figure}

\subsection{Security Insights from the GHZ Game}
\label{sec:ghz}

The GHZ game illustrates our second main finding that quantum advantage and globally random outputs are necessary but not sufficient for multi-prover security and the operative guarantee is a geometry-dependent blind local randomness. 
We show this via an explicit attack, then two independent ways to restore security.

The GHZ game is a three-player non-local game in which a referee samples challenges $(a_1, a_2, a_3)$ uniformly from $\{000, 011, 101, 110\}$ and the players win ($\omega=1$) if $x_1 \oplus x_2 \oplus x_3 = 0$ when $(a_1, a_2, a_3) = (0,0,0)$ and $x_1 \oplus x_2 \oplus x_3 = 1$ otherwise. 
The quantum optimal strategy shares $\ket{GHZ} = \frac{1}{\sqrt{2}}(\ket{000} + \ket{111})$ and measures in the Pauli-$X$ basis for challenge $0$ and Pauli-$Y$ for challenge $1$, achieving winning probability $1$, while the best classical strategy wins with probability at most $0.75$.

Despite its perfect quantum advantage and global certified randomness, the GHZ game admits an explicit strategy on the equilateral-triangle geometry in Fig.~\ref{fig:ghz_counterexample}.
With no agent at $P_1$, agents $E_2$ and $E_3$ co-located with $P_2$ and $P_3$ execute the honest strategy, while $E_1$ at $V_1$ measures a pre-shared GHZ state to produce a valid $x_1$.
Agents $E_4$ and $E_5$, co-located with $V_2$ and $V_3$, wait to receive $x_2$, $x_3$, and $a$, deduce $x_1$ deterministically, and still deliver it to their respective verifiers on time.
This strategy succeeds because the GHZ winning predicate uniquely determines $x_1$ from $(x_2, x_3, a)$, i.e.\ $p_g(X_1 \mid X_2, X_3, A) = 1$.
A detailed description of the strategy is provided in Appendix~\ref{app:ghz}.

\begin{figure}[t]
    \centering
    \begin{tikzpicture}[scale=2.5, every node/.style={font=\small}]
\def\R{1.55}     
\def\r{0.6}     

\coordinate (V1) at (90:\R);
\coordinate (V2) at (210:\R);
\coordinate (V3) at (330:\R);

\coordinate (P1) at (90:\r);
\coordinate (P2) at (210:\r);
\coordinate (P3) at (330:\r);

\draw[thick, decorate, decoration={snake, amplitude=0.8pt, segment length=4pt}]
  (P1) -- (P2);
\draw[thick, decorate, decoration={snake, amplitude=0.8pt, segment length=4pt}]
  (P2) -- (P3);
\draw[thick, decorate, decoration={snake, amplitude=0.8pt, segment length=4pt}]
  (P3) -- (P1);

\draw[->, thick, shorten <=2pt, shorten >=2pt] (P1) -- (V1);
\draw[->, thick, shorten <=2pt, shorten >=2pt] (P1) -- (V2);
\draw[->, thick, shorten <=2pt, shorten >=2pt] (P1) -- (V3);

\draw[->, thick, shorten <=2pt, shorten >=2pt] (P2) -- (V1);
\draw[->, thick, shorten <=2pt, shorten >=2pt] (P2) -- (V2);
\draw[->, thick, shorten <=2pt, shorten >=2pt] (P2) -- (V3);

\draw[->, thick, shorten <=2pt, shorten >=2pt] (P3) -- (V1);
\draw[->, thick, shorten <=2pt, shorten >=2pt] (P3) -- (V2);
\draw[->, thick, shorten <=2pt, shorten >=2pt] (P3) -- (V3);

\draw[thick, red, decorate, decoration={snake, amplitude=0.8pt, segment length=4pt}]
  (V1) -- (P2);
\draw[thick, red, decorate, decoration={snake, amplitude=0.8pt, segment length=4pt}]
  (P2) -- (P3);
\draw[thick, red, decorate, decoration={snake, amplitude=0.8pt, segment length=4pt}]
  (P3) -- (V1);

\fill (V1) circle (0.55pt); \node[above=2pt of V1] {$V_1/\textcolor{red}{E_1}$};
\fill (V2) circle (0.55pt); \node[below left=1pt of V2] {$V_2/\textcolor{red}{E_4}$};
\fill (V3) circle (0.55pt); \node[below right=1pt of V3] {$V_3/\textcolor{red}{E_5}$};

\fill (P1) circle (0.45pt); \node[right=6pt of P1] {$P_1$};
\fill (P2) circle (0.55pt); \node[below=6pt of P2] {$P_2/\textcolor{red}{E_2}$};
\fill (P3) circle (0.55pt); \node[below=6pt of P3] {$P_3/\textcolor{red}{E_3}$};

\end{tikzpicture}
    \caption{Schematic of the GHZ counterexample. Provers $P_1$, $P_2$, $P_3$ are located at the vertices of an inner equilateral triangle of length $r$, verifiers $V_1$, $V_2$, $V_3$ at the vertices of a concentric outer equilateral triangle of length $R > r$. Adversarial agents $E_1$, $E_4$, $E_5$ are co-located with $V_1$, $V_2$, $V_3$ respectively, and agents $E_2$, $E_3$ are  co-located with $P_2$, $P_3$. Wavy lines indicate pre-shared GHZ states between $\{P_1, P_2, P_3\}$ (honest strategy) and between $\{E_1, E_2,E_3\}$ (adversary strategy). Arrows indicate response transmissions.}
    \label{fig:ghz_counterexample}
\end{figure}

This strategy highlights that quantum advantage and global randomness from all provers' output in an underlying non-local game is necessary but not sufficient for multi-prover CVPV security. 
The protocol geometry and its complete support nature are equally critical ingredients. 
This demonstration, together with the sample verifier-prover placement in Sec.~\ref{sec:geometry} further illustrates the natural appearance of blind local randomness over global randomness of the provers' output in the security of multi-prover CVPV.
For each score-response, providing a consistent copy-response can generally involve an adversary with side-information $I_{i,j}$ which may contain other score-responses (e.g. $E_4$ having access to $X_2,X_3$), and the prover-verifier separation determines the set of information available. 
The probability of passing the consistency check can therefore be bounded by the corresponding guessing probability $p_g(X_{i,g(i)}|I_{i,j})$, which matches the blind local randomness property where one player's outcome remains unpredictable to a subset of other players.

Converting the GHZ game to a complete-support game $G'_\mathrm{GHZ}$ with uniform choice of inputs and setting $\omega=0$ for challenges $\{001,010,100,111\}$ restores a positive mismatch lower bound, since the adversary can no longer deterministically infer missing responses.
Alternatively, rotating the verifiers restores security, since agents $E_4$ and $E_5$ can no longer observe $x_2$, $x_3$ and the full challenge tuple $\vec{a}$ before the deadline for $x_1$. 

Consider the complete-support GHZ game $G'_\mathrm{GHZ}$ with the modified geometry, the mismatch-score tradeoff can be computed numerically using the SDP in Eqn.~\eqref{eqn:SDP_GHZ_Model_1} and \eqref{eqn:SDP_GHZ_Model_2}. 
Figure~\ref{fig:combined_qpv}c shows the mismatch probability $\pmis$ as a function of the GHZ score $\omega$ for the modified protocol, comparing the analytical bound from Corollary~\ref{cor:model1_all} with the numerical bound.
The analysis also highlights the gap between adversary models. 
The protocol becomes insecure when score is below $\omegaonecl=0.427$ for model~1, while the protocol remains secure above $\omegacl=0.375$ for model~2.
This gap stems from the presence of a valid strategy for adversary model~1, where colluding agents are assigned to occupy all prover location except $P_u$ to execute the quantum strategy with score $\omegaonecl$, while the classical strategy of the single player can be replicated across multiple agents that provide the score-response and copy-response of $P_u$.
The required number of rounds for a target soundness error $\epssou = 10^{-6}$ and completeness error $\epscom=10^{-6}$ is shown in Fig.~\ref{fig:combined_qpv}d as a function of the honest score $\omega$.

%% file: Sections/Discussion.tex
\subsection{Prover separation and positional uncertainty}
\label{sec:prover_separation}

The security analysis of Section~\ref{sec:security} assumes that the honest provers can respond instantaneously with zero internal processing time $\tau$. 
In practice, prover devices require a finite time to receive and compute the challenge, perform the measurement, and transmit the response, and must be accounted for in any realistic deployment. 
When the processing time is $\tau > 0$, the expected response time at each verifier shifts by $\tau$ relative to the zero-delay case. 
This shift propagates through all the geometric constraints, so that the score-generating regions $\Rsc{i}$ grow and the score-input isolation and guess-forcing conditions become harder to satisfy. 
In practice, the verifier placement must be redesigned with the processing time $\tau$ explicitly incorporated into the distance inequalities of Section~\ref{sec:geometry}, replacing $d(V_{j'}, P_{i'})$ with $d(V_{j'}, P_{i'}) + \tau$ throughout.

With zero processing time, the positional uncertainty region, which is the set of locations that the verifiers cannot distinguish from the claimed prover location, is determined purely by the timing resolution of the verifiers and the speed of light, and is independent of the prover separation. 
With finite processing time $\tau$, the timing window within which a response is accepted widens by $\tau$, and the positional uncertainty region grows correspondingly.
The verifiers can only certify that the prover is within a region of linear size proportional to $\tau$ around the claimed location, rather than a point.

The required minimum prover separation grows, though it depends on the adversary model.
In Model~1 the adversary is absent from at least one prover location $P_u$ but may co-locate honest agents with all others. 
The prover separation must therefore be large enough to ensure that an agent at $\ell_{P_i}$ cannot receive both the challenge $a_u$ and the score output $X_u$ in time to meet the required deadline.
The adversary is free to choose which prover location to vacate, so this separation condition must hold for every possible choice of $u$. 
With finite processing time $\tau$, the score-generating regions expand as described above, and the minimum separation must increase accordingly to maintain the security.
In Model~2, the adversary is excluded from the entire protected region containing all provers. 
If the prover separation is too small, a single adversarial agent located just outside the protected region may be able to receive multiple challenges and respond to multiple verifiers within the accepted timing window, effectively impersonating multiple provers simultaneously. 
With zero processing time, the minimum separation required to prevent this is determined by the geometry of the verifier placement. 
With finite processing time $\tau$, the accepted timing window widens, and the minimum separation must increase to ensure that no single agent outside the protected region can cover all response deadlines. 

\subsection{Open Problems}

The detailed geometric analysis of how the ellipse conditions, score-input isolation, and guess-forcing checks must be modified for finite processing time, and the precise characterization of the resulting positional uncertainty region and minimum prover separation as a function of $\tau$ and the verifier placement, are left for future work.
Exploration into how other imperfections such as finite classical transmission speeds and timing uncertainties of the verifier devices can impact the geometry and security statements can be useful to provide more instruction on its practical implementation.
Furthermore, explorations of other arrangements of provers and verifiers and disallowed geometries (e.g. one prover surrounded by other provers) can help design practical implementation and placement of verifiers for verification.
An efficient algorithm of selecting a suitable set of verifiers from candidate verifiers to perform the protocol would also be valuable.

In our security analysis, we require the use of complete-support non-local games due to the nature of the blind randomness analytical proof and the GHZ counterexample.
It would be interesting to explore how additional geometric constraints can lift this requirement.
If the geometric design is such that there is guess-forcing of $X_u$ relative to $A_uA_{u+1}X_{u+1}Q_{u+1}'E$ for all $u$, then the blind randomness guarantee is against a subset of challenges and post-measurement state, where complete-support may not be necessary to guarantee.
For instance, the GHZ game has blind randomness if there is guess-forcing of $X_1$ relative to $A_1A_2X_2Q_2'E$ since $A_3$ and $X_3$ are additionally required to guess $X_1$ without fail.

%% file: Sections/Methods.tex
\subsection{Protocol Security}
\label{sec:overall_security}

The security of the protocol is captured its soundness and completeness.
In an honest run, the score $\omega_r$ and mismatch indicator $T_r$ are independently and identically distributed across rounds, since the provers repeat the same quantum strategy independently each round. 
Moreover, $\omega_r$ and $T_r$ are mutually independent since the score is determined by the quantum measurement whereas mismatches arise from classical timing noise. 
The two variables can become correlated only if a timing check fails, which we preclude by assuming the honest provers can always respond within the required window. 
Under these conditions, completeness follows from the Hoeffding bound and tight binomial bound~\cite{Zubkov2013_SVBound},
\begin{theorem}[Completeness]
\label{thm:completeness}
Suppose the honest provers perform an i.i.d.\ strategy achieving expected score $\mathbb{E}[\omega_r] = \omega$ and expected mismatch 
$\mathbb{E}[T_r] = p_{\mathrm{mis}}$ in each round, and let 
$\omegath < \omega$ and $\pmisth > \pmis$. 
The completeness error $\epscom = \Pr[\fail]$ satisfies
\begin{equation}
    \label{eq:completeness_bound}
    \epscom \leq \varepsilon_{\mathrm{com},\omega} + \varepsilon_{\mathrm{com},\mathrm{mis}},
\end{equation}
where
\begin{equation}
    \label{eq:completeness_components}
    \varepsilon_{\mathrm{com},\mathrm{mis}} = \Phi\!\left(-\sqrt{2N\, H(\pmisth,\, 
    p_{\mathrm{mis}})}\right),
    \qquad
    \varepsilon_{\mathrm{com},\omega} = e^{-2N(\omega - \omegath)^2},
\end{equation}
$\Phi(x)$ is the standard normal cumulative distribution function, and $H(x, p) = x\log(x/p) + (1-x)\log[(1-x)/(1-p)]$ is the 
Kullback--Leibler divergence of binary distributions.
\end{theorem}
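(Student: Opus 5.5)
The argument is a union bound followed by one concentration inequality per failure mode. The protocol outputs \textsc{fail} exactly when $\bar{\omega} < \omegath$ or $\pmis > \pmisth$, where $\bar\omega$ and $\pmis$ now denote the empirical quantities from Protocol~$\protG$. Hence
\begin{equation*}
\Pr[\fail] \le \Pr\!\left[\tfrac1N\textstyle\sum_r \omega_r < \omegath\right] + \Pr\!\left[\tfrac1N\textstyle\sum_r T_r > \pmisth\right].
\end{equation*}
No independence between $\omega_r$ and $T_r$ is needed for this step. The mutual independence discussed before the statement is only a modelling justification for treating the two sequences separately, with the timing-failure coupling excluded by assumption. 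I will bound the two terms separately.

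For the score term, the i.i.d.\ assumption makes $\omega_1,\dots,\omega_N$ independent with values in $[0,1]$ and mean $\omega$. Hoeffding's inequality applied to the lower deviation $t = \omega - \omegath > 0$ gives
\begin{equation*}
\Pr[\bar\omega \le \omega - t] \le e^{-2Nt^2},
\end{equation*}
which is exactly $\varepsilon_{\mathrm{com},\omega}$.

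For the mismatch term, the $T_r$ are i.i.d.\ Bernoulli random variables with parameter equal to the expected mismatch. Therefore $S=\sum_r T_r$ is binomial, and the event to bound is $S > N\pmisth$, an upper tail with $\pmisth$ above the mean. I will invoke the Zubkov--Serov bound~\cite{Zubkov2013_SVBound}. For $S\sim\mathrm{Bin}(N,p)$ and integer $k$, it states
\begin{equation*}
\Phi\!\left(\mathrm{sign}(k/N-p)\sqrt{2N H(k/N,p)}\right) \le \Pr[S\le k] \le \Phi\!\left(\mathrm{sign}((k+1)/N-p)\sqrt{2N H((k+1)/N,p)}\right).
\end{equation*}
Apply the lower bound with $k=\lfloor N\pmisth\rfloor$. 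Then $\Pr[S>k] = 1-\Pr[S\le k] \le 1-\Phi(\sqrt{2NH(k/N,p)}) = \Phi(-\sqrt{2NH(k/N,p)})$, using $k/N \ge p$ and the symmetry $1-\Phi(x)=\Phi(-x)$.

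The main obstacle is bookkeeping at this last step, namely the floor and rounding. The bound comes out with $k/N$ rather than $\pmisth$, so I would either assume $N\pmisth$ is an integer, which is harmless because the threshold can be chosen that way, or use monotonicity of $x\mapsto H(x,p)$ for $x\ge p$. With the latter, any residual gap is absorbed by noting that the claimed expression is intended at $x=\pmisth$ on the lattice. I also need to match the sign convention, so that the argument of $\Phi$ is negative when $\pmisth > p$. Summing the two tail bounds then yields \eqref{eq:completeness_bound}.
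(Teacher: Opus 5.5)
Your route is the paper's own: the paper only cites a union bound, Hoeffding's inequality for the score, and the Zubkov--Serov binomial bound for the mismatch count. Your score step, and your use of the lower Zubkov--Serov bound on $\Pr[S\le k]$, are correct. The problem is the rounding at the end. Let $p=\mathbb{E}[T_r]$ and $k=\lfloor N\pmisth\rfloor$. When $p\le k/N\le\pmisth$, the function $H(\cdot,p)$ is increasing, so $\Phi(-\sqrt{2NH(k/N,p)})\ge\Phi(-\sqrt{2NH(\pmisth,p)})$. The monotonicity fallback therefore runs the wrong way: the bound you obtain is weaker than the claimed one and cannot be upgraded to it.

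This is not lost bookkeeping but a genuine restriction on the statement. For $\pmisth\in(k/N,(k+1)/N)$ the failure event is $\{S\ge k+1\}$. The upper Zubkov--Serov bound gives $\Pr[S\ge k+1]\ge\Phi(-\sqrt{2NH((k+1)/N,p)})$, so for $\pmisth$ just below $(k+1)/N$ the claimed $\varepsilon_{\mathrm{com},\mathrm{mis}}$ is violated. For example, $N=100$, $p=1/2$, $\pmisth=0.5099$ gives $\Pr[S\ge 51]\approx 0.460$ against $\Phi(-0.198)\approx 0.421$. Taking $\omega-\omegath=0.15$, so that $\varepsilon_{\mathrm{com},\omega}\approx 0.011$, makes the theorem itself fail. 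Hence $N\pmisth\in\mathbb{Z}$ must be added as an explicit hypothesis, or equivalently $\varepsilon_{\mathrm{com},\mathrm{mis}}$ must be evaluated at $\lfloor N\pmisth\rfloor/N$. That hypothesis is also what justifies your step $k/N\ge p$, which fails for $k=\lfloor N\pmisth\rfloor$ whenever $N(\pmisth-p)<1$. With it stated, your proof is complete; the paper's one-line citation glosses over the same point.
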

When the round score $\omega_r$ is binary, the Hoeffding bound  can be replaced by the tighter binomial tail bound of Ref.~\cite{Zubkov2013_SVBound}, which we use in the two instantiations discussed in the main text.

Combining this with the soundness result gives the following security statement,
\begin{restatable}{corollary}{modelasoundness}
\label{cor:overall_model1}
Let $G$ be a complete-support $k$-player non-local game, and consider Protocol $\protG$ run for $N$ rounds. 
Let $\omegaG$ and $\pmis$ be the expected score and mismatch probability of the honest strategy. 
Assume the adversary satisfies the assumptions of Section~\ref{sec:adversary_model}, cannot be located at any single prover location, the geometry is score-input isolated, and for every $u \in [k]$ there exists $j \neq g(u)$ such that $X_{u,j} = X_u$ is guess-forcing relative to $AX_{[k]\setminus u}Q_{[k]\setminus u}'E$. 
Then Protocol $\protG$ with acceptance thresholds $\omegath$ and $\pmisth$ is $\epscom$-complete and $(\epssou' + \epsROM)$-sound, where:
\begin{itemize}
    \item the completeness error $\epscom$ is given by 
    Theorem~\ref{thm:completeness},
    \item the soundness error satisfies $\epssou' \leq \varepsilon$ provided the thresholds satisfy~\eqref{eq:threshold_condition} via    Lemma~\ref{lem:linearisation} with $\omega_* = \omegaonecl$ and $c = 3\Cmax/2$,
    \item the QROM error is $\epsROM = Nk \cdot q\sqrt{m }\cdot 2^{-n/2-1}$, with $q$ the adversary's total number of oracle queries and $n$ the hash output length.
\end{itemize}
\end{restatable}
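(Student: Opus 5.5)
The corollary assembles the four-step soundness argument of the Security Analysis section with the completeness bound of Theorem~\ref{thm:completeness}. Completeness needs no new work: the honest provers play the i.i.d.\ strategy $\Gamma$ inside the timing windows, so $\epscom$ is exactly the bound of Theorem~\ref{thm:completeness}. The proof is therefore about soundness, and I would chain the steps in this order: QROM reduction, then single-round reduction to a non-local game with guessing, then a blind-randomness bound, then concentration.

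\textbf{Step 1 (QROM reduction).} Using the multi-prover split-and-hash lemma of the appendix, I replace $\protG$ by the simplified protocol $\tildeprotG$. In $\tildeprotG$, each challenge $a_i$ is available only where all of its shares $a_{i,j}$ are available. Per round and per hash $f_i$, the trace distance between the two protocols is bounded by $q\sqrt{m}\,2^{-n/2-1}$. A union bound over $k$ hash functions and $N$ rounds gives $\Pr[\textsc{pass}]_{\protG} \le \Pr[\textsc{pass}]_{\tildeprotG} + \epsROM$.

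\textbf{Step 2 (single-round reduction).} Fix a round of $\tildeprotG$ with Model~1 adversary. I condition on the history, and on the vacated prover $u$ or a mixture over choices, which only convexifies the bound. Score-input isolation then lets me group the agents' operations into $k$ commuting local maps. Each map produces the score-response $x_{i,g(i)}$ from $a_i$ and a share of the pre-round state. This defines a valid quantum strategy for $G$ with the same expected score $\omega$. Guess-forcing of $X_{u,j}=X_u$ relative to $AX_{[k]\setminus u}Q'_{[k]\setminus u}E$ gives $1-\pmis \le \pg(X_u \mid AX_{[k]\setminus u}Q'_{[k]\setminus u}E)$, averaged over $u$.

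\textbf{Step 3 (blind local randomness).} I invoke the multi-party blind randomness bound in the form of Corollary~\ref{cor:model1_all}. A strategy with $\pg$ close to $1$ can be turned into one where player $u$ is effectively classical: the guessed value replaces $x_u$, at a cost controlled by $\sqrt{1-\pg}$. Complete support keeps that cost bounded by $\Cmax$ uniformly over challenges. This yields a per-round bound $\omega \le \omegaonecl + c\sqrt{\pmis}$ with $c=3\Cmax/2$. Because $\Radv$ contains every prover location except $P_u$, the comparison value is $\omegaonecl$ rather than $\omegacl$, consistent with the GHZ discussion. I expect this step to be the main obstacle. It needs care to make the comparison strategy legitimately one-party-classical while the other parties keep access to their post-measurement states.

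\textbf{Step 4 (concentration).} The per-round constraint holds conditioned on the past, for the sequential adversary with quantum memory. Lemma~\ref{lem:linearisation} linearises it at $\omega_*=\omegaonecl$, producing a martingale-friendly affine tradeoff between $\omega_r$ and $T_r$. Applying the concentration bound of Ref.~\cite{Vanhimbeeck2019} to that affine combination gives $\Pr[\bar\omega\ge\omegath \wedge \pmis\le\pmisth]\le\varepsilon$ whenever the thresholds satisfy~\eqref{eq:threshold_condition}. Combining this with Step 1 gives $(\epssou'+\epsROM)$-soundness.
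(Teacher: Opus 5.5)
Your proposal is correct and follows the paper's proof: reduce to $\tildeprotG$ via Corollary~\ref{cor:qrom_overall}, then obtain the per-round bound $\omega \le \omegaonecl + \tfrac{3\Cmax}{2}\sqrt{\pmis}$ from the geometry and Corollary~\ref{cor:model1_all}, then apply Lemma~\ref{lem:linearisation}, and take completeness from Theorem~\ref{thm:completeness}; your remark that mixing over the vacated $u$ is harmless, because this constraint is concave in $\pmis$, is a valid extra detail. Only some wording is off: Theorem~\ref{thm:qrom_single} gives a one-sided bound on the passing probability (up to the penalty, the simplified protocol can only help the adversary) rather than trace-distance closeness of the two protocols; Lemma~\ref{lem:linearisation} linearises $\sqrt{\pmis}$ at the operating point $p_0$, not at $\omega_*$; and the one-party-classical comparison strategy behind Corollary~\ref{cor:model1_all} comes from pre-measuring player $u$ on all challenges via approximate commutativity, not from substituting the guess for $x_u$.
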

\begin{proof}
By Corollary~\ref{cor:qrom_overall}, it suffices to prove soundness of $\tildeprotG$ with error $\epssou'$, at the cost of an additional QROM error $\epsROM$. 
The geometry assumptions and Corollary~\ref{cor:model1_all} give the single-round tradeoff $\pmis \geq \frac{4}{9\Cmax^2}
(\omega - \omegaonecl)^2$, which can be rewritten as $\omega\leq\omega_*+c\sqrt{\pmis}$ with $\omega_* = \omegaonecl$ and $c = 3\Cmax/2$. 
Applying Lemma~\ref{lem:linearisation} gives the soundness bound $\epssou'$. 
Completeness follows from Theorem~\ref{thm:completeness}.
\end{proof}

\begin{restatable}{corollary}{modelbsoundness}
\label{cor:overall_model2}
Let $G$ be a complete-support $k$-player non-local game, and consider Protocol $\protG$ run for $N$ rounds. 
Let $\omegaG$ and $\pmis$ be the expected score and mismatch probability of the honest strategy. 
Assume the adversary satisfies the assumptions of Section~\ref{sec:adversary_model}, cannot be located at any prover location, the geometry is score-input isolated, and for each $u \in \{2, \ldots, k\}$ there exists $j_u \neq g(u)$ such that $X_{u,j_u} = X_u$ is guess-forcing relative to 
$A X_{[u-1]} Q'_{[u-1]} E$. 
Then Protocol $\protG$ with acceptance thresholds $\omegath$ and $\pmisth$ is $\epscom$-complete and $(\epssou' + \epsROM)$-sound, where:
\begin{itemize}
    \item the completeness error $\epscom$ is given by 
    Theorem~\ref{thm:completeness},
    \item the soundness error satisfies $\epssou' \leq \varepsilon$ provided the thresholds satisfy~\eqref{eq:threshold_condition} via     Lemma~\ref{lem:linearisation} with $\omega_* = \omegacl$ and $c = 3\sum_{u=2}^k C_u/2$,
    \item the QROM error is $\epsROM = Nk \cdot q\sqrt{m} \cdot 2^{-n/2-1}$, with $q$ the adversary's total number of oracle queries and $n$ the 
    hash output length.
\end{itemize}
\end{restatable}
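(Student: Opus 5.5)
The proof will run in parallel with that of Corollary~\ref{cor:overall_model1}, with the Model~1 single-round tradeoff swapped for the Model~2 one. We first invoke Corollary~\ref{cor:qrom_overall}, which reduces soundness of $\protG$ to soundness of the simplified protocol $\tildeprotG$. In $\tildeprotG$ each challenge $a_i$ is available only where all of its shares are accessible, and this reduction costs the additive QROM error $\epsROM = Nk\cdot q\sqrt{m}\cdot 2^{-n/2-1}$. The adversary class here is agents excluded from every prover location, making at most $q$ queries per hash function. It is contained in the class handled by that corollary, so the reduction applies unchanged.

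Next, for $\tildeprotG$ we will obtain the single-round mismatch--score tradeoff from Theorem~\ref{thm:single_round_model2}. Score-input isolation lets us partition any adversary round into agents producing each score-response $x_u$ from $a_u$ alone. This gives a valid quantum strategy for $G$ with the same score $\omega$. The guess-forcing hypothesis then does the rest. For each $u\ge 2$, the copy-response $x_{u,j_u}$ must be produced from side information contained in $AX_{[u-1]}Q'_{[u-1]}E$. Hence the probability of the mismatch event $X_{u,j_u}\neq X_u$ is at least $1-p_g(X_u\mid AX_{[u-1]}Q'_{[u-1]}E)$.

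The multi-party blind-randomness bound applies the two-party analysis of Miller et al.\ sequentially along the chain $u=2,\dots,k$. It bounds the score in terms of these guessing deficits. When every player $u\ge2$ is perfectly guessable from the preceding players, the strategy collapses to a classical one. The per-step deficits then add up, giving $\omega\le\omegacl+\frac{3}{2}\sum_{u=2}^k C_u\sqrt{\pmis}$. Equivalently, $\pmis\ge C''(\omega-\omegacl)^2$. The union over $u$ of mismatch events is bounded by $\pmis$, so each per-$u$ mismatch probability is at most $\pmis$. This is why the constants add linearly, giving $c=3\sum_{u=2}^k C_u/2$.

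Finally, we put the tradeoff in the form $\omega\le\omega_*+c\sqrt{\pmis}$ with $\omega_*=\omegacl$, and apply Lemma~\ref{lem:linearisation}. That lemma handles the concavity of $\sqrt{\cdot}$ through a tangent-line linearisation and uses the concentration bound of Ref.~\cite{Vanhimbeeck2019} for adaptive sequential adversaries. It gives $\epssou'\le\varepsilon$ whenever the thresholds satisfy \eqref{eq:threshold_condition}. Completeness is exactly Theorem~\ref{thm:completeness}, since the honest behaviour does not depend on the adversary model.

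The main obstacle is the second step: checking that Theorem~\ref{thm:single_round_model2}'s hypotheses are exactly what the geometric assumptions supply. In particular, the adversary's cross-round memory $E$ must be included as side information, so the single-round bound holds conditioned on the past. This is what Lemma~\ref{lem:linearisation} needs as its martingale input. If that conditioning is not already built in, we would first rederive the single-round bound with $E$ standing for the full transcript and quantum memory carried over from earlier rounds.
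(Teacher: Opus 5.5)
Your proposal is correct and follows the paper's proof. It reduces to $\tildeprotG$ via Corollary~\ref{cor:qrom_overall}, takes the single-round tradeoff of Theorem~\ref{thm:single_round_model2} in the form $\omega\le\omegacl+\frac{3}{2}\sum_{u=2}^k C_u\sqrt{\pmis}$, feeds it into Lemma~\ref{lem:linearisation}, and gets completeness from Theorem~\ref{thm:completeness}. Your cross-round memory worry needs no rederivation: the single-round bound holds for an arbitrary preparation state $\rho_{Q_1\cdots Q_k E}$, and any memory carried over from earlier rounds is simply absorbed into that state.
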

\begin{proof}
The proof is analogous to that of Corollary~\ref{cor:overall_model1}, with Theorem~\ref{thm:single_round_model2} replacing Theorem~\ref{thm:single_round_model1}, and $\omega_* = \omega_{\mathrm{cl}}$ and $c = 3\sum_{u=2}^k C_u/2$ replacing the Model~1 parameters.
\end{proof}

We present a sketch of the four steps of the soundness analysis below, and the detailed steps and proofs can be found in Appendix~\ref{app:protocol_security}.

\subsection{Step 1: QROM reduction}
\label{sec:qrom_reduction}

The first step reduces the soundness of Protocol~$\protG$ to that of a simplified protocol $\tildeprotG$ in which the challenge shares $a_{i,1}, \ldots, a_{i,m}$ and hash oracle $f_i$ are removed, and each challenge $a_i$ is only accessible to parties within the \emph{modified challenge light cone} 
\begin{equation}
    \Ctildeplus{a_i} := \bigcap_{j=1}^{m} \Cplus{z_{a_{i,j}}}.
    \label{eq:modified_light_cone}
\end{equation}
Here $\Cplus{z_{a_{i,j}}}$ is the future light cone of the event where $a_{i,j}$ is sent by verifier $V_j$, so $\Ctildeplus{a_i}$ is the spacetime region in which all shares of $a_i$ are simultaneously available. 
This prevents the adversary from using challenge shares $a_{i,j}$ directly, for instance via instantaneous non-local quantum communication~\cite{Buhrman2014}, to spoof the provers' location.
By extending the split-and-hash analysis~\cite{Unruh14} to a multi-prover setting, we show that in the QROM with limited query access, the soundness of Protocol~$\protG$ reduces to that of a simplified protocol $\tildeprotG$ with a soundness penalty.
Formally, we prove
\begin{restatable}{theorem}{qromsingle}
\label{thm:qrom_single}
Fix a prover $P_i$ and a single round.
Let $f_i$ be modeled as a random oracle with output length $n$, and let $\tildeprotG^{(i)}$ denote the simplified protocol in which the challenge shares and oracle for challenge $a_i$ are removed and $a_i$ is made directly accessible within $\Ctildeplus{a_i}$ for this round only, with all other aspects of the protocol unchanged.
Suppose that $\tildeprotG^{(i)}$ has soundness error $\epssou$ against adversaries making at most $q$ queries to $f_i$.
Then Protocol~$\protG$ has soundness error at most
\begin{equation}
    \epssou + q\sqrt{m} \cdot 2^{-n/2-1}
\end{equation}
against the same class of adversaries.
\end{restatable}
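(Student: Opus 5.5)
We follow Unruh's split-and-hash argument, adapted to a single prover index $i$ and a single round. The idea is to turn any adversary $\mathcal{A}$ against $\protG$ into an adversary $\tilde{\mathcal{A}}$ against $\tildeprotG^{(i)}$ that makes the same number of queries and has nearly the same passing probability.

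First we write the oracle as $f_i = \tilde f$ reprogrammed at the single point $\vec a_i^* := (a_{i,1},\dots,a_{i,m})$ with value $a_i$. Here $\tilde f$ is an independent random function and $a_i$ is sampled uniformly. Because $f_i$ is a random oracle, the pair $(f_i, a_i = f_i(\vec a_i^*))$ has the same joint distribution as $(\tilde f[\vec a_i^* \mapsto a_i], a_i)$.

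The simulator $\tilde{\mathcal{A}}$ samples the shares $a_{i,j}$ itself, uniformly and independently of everything else. Each agent located at a spacetime point can use the shares whose light cones contain that point. Agents inside $\Ctildeplus{a_i}$ hold all $m$ shares and also receive $a_i$ from $\tildeprotG^{(i)}$, so they can answer queries with the reprogrammed oracle $\tilde f[\vec a_i^* \mapsto a_i]$ exactly. Agents outside $\Ctildeplus{a_i}$ are missing at least one share $a_{i,j}$, because that share was sent by $V_j$ at $z_{a_{i,j}}$ and has not reached them. These agents answer with $\tilde f$ alone.

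The only difference between $\mathcal{A}$ run with the true oracle and $\tilde{\mathcal{A}}$ comes from queries made outside $\Ctildeplus{a_i}$ that have amplitude on $\vec a_i^*$. By the one-way-to-hiding (O2H) lemma, the change in acceptance probability is at most $2q\sqrt{P_{\mathrm{find}}}$. Here $P_{\mathrm{find}}$ is the probability that measuring a uniformly chosen query among these $q$ queries returns $\vec a_i^*$. The constant can later be tightened to match the stated bound.

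To bound $P_{\mathrm{find}}$, note that every agent outside $\Ctildeplus{a_i}$ lacks some share $a_{i,j}$, and that share is uniform on $\{0,1\}^{n}$ and independent of that agent's view. Outputting the full tuple $\vec a_i^*$ therefore requires guessing an unseen $n$-bit string. A union bound over which share $j$ is missing, over $m$ possible indices, gives $P_{\mathrm{find}} \le m \cdot 2^{-n}$ in the worst case. A sharper accounting, partitioning the outside region by the missing share and applying O2H region by region, should give the claimed $q\sqrt{m}\,2^{-n/2-1}$; this is where the $\sqrt m$ would come from.

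The main obstacle is the relativistic setting. The agents are spread over spacetime and carry quantum memory, so "the adversary's view" is not a single sequential state. O2H needs a sequential query algorithm, so we would first serialize the agents' operations along a causal ordering (foliation) consistent with the light cones. We then have to check two things: that the missing share really is independent of the state at each query outside $\Ctildeplus{a_i}$, and that the later reprogramming inside the region does not disturb this independence. We would handle this by letting $\tilde{\mathcal{A}}$ delay sampling each share $a_{i,j}$ until its light cone is entered, which keeps the extraction argument valid.
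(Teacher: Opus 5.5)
Your architecture matches the paper's proof. The hybrids $\mathrm{Hyb}_1$--$\mathrm{Hyb}_3$ do three things:
\begin{itemize}
\item puncture $f_i$ at $(a_{i,1},\dots,a_{i,m})$ outside $\Ctildeplus{a_i}$;
\item let parties inside $\Ctildeplus{a_i}$ reprogram the punctured oracle back to $a_i$;
\item have the reduction sample the shares itself, together with a $2q$-wise independent stand-in for the oracle via Lemma~\ref{lem:zhandry}.
\end{itemize}
Your single simulator does all three in one step. Pre-sharing a full random-function table instead of using Lemma~\ref{lem:zhandry} is harmless information-theoretically.

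The real difference is the reprogramming lemma, and there your argument does not reach the stated bound. One-way-to-hiding costs $2q\sqrt{P_{\mathrm{find}}}$. With your $P_{\mathrm{find}}\le m2^{-n}$ you prove an error term $2q\sqrt{m}\,2^{-n/2}$, which is four times the claimed $q\sqrt{m}\,2^{-n/2-1}$. The deferred tightening is not bookkeeping, and your proposed remedy misdiagnoses the problem. The $\sqrt m$ is already in your bound; what is missing is a factor $1/4$. Partitioning the outside region by missing share cannot supply it. Each query gate sits at a fixed event and so has a fixed missing share, so each query alone hits the tuple with probability at most $2^{-n}$. That gives $P_{\mathrm{find}}\le 2^{-n}$ with no union bound at all. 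O2H then yields $2q\,2^{-n/2}$, which implies the stated bound only when $m\ge 16$.

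The paper gets its constant from the reprogramming lemma as stated in Lemma~\ref{lem:bbbv}. Let $W$ be the total query weight on $(a_{i,1},\dots,a_{i,m})$ from queries made outside $\Ctildeplus{a_i}$. If $W\le\varepsilon^2/q$, the final states are within trace distance $\varepsilon/2$, so $|p_0-p_1|\le\tfrac12\sqrt{qW}$.

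The paper then shows $W\le qm2^{-n}$ by contradiction:
\begin{itemize}
\item If not, some $j$ carries weight above $q2^{-n}$ on queries outside $\Cplus{z_{a_{i,j}}}$.
\item Run only the parties outside that cone and measure a uniformly random one of the $q$ queries.
\item This guesses the hidden uniform share $a_{i,j}$ with probability above $2^{-n}$, which is impossible.
\end{itemize}
Substituting $W\le qm2^{-n}$ gives exactly $q\sqrt{m}\,2^{-n/2-1}$.

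To prove the statement as written, replace O2H with Lemma~\ref{lem:bbbv}; your hiding argument already supplies the weight bound. Your attention to serializing the spacelike-separated agents, and to sampling each share lazily so the hidden share stays independent, covers a point the paper's extractor (``halt just before the $r$-th query'') passes over. That part of your plan is, if anything, more complete than the paper's.
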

Applying Theorem~\ref{thm:qrom_single} to all $k$ provers across all $N$ rounds via a union bound completes the reduction.
\begin{corollary}[Overall QROM reduction]
\label{cor:qrom_overall}
Suppose that $\tildeprotG$ has soundness error $\epssou$ against adversaries making at most $q$ queries to each $f_i$.
Then Protocol~$\protG$ run for $N$ rounds has soundness error at most $\epssou + \epsROM$, where
\begin{equation}
    \epsROM = Nk \cdot q\sqrt{m} \cdot 2^{-n/2-1},
\end{equation}
against the same class of adversaries.
\end{corollary}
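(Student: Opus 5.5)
The plan is a hybrid argument over all (round, prover) pairs, applying Theorem~\ref{thm:qrom_single} once per pair and summing the penalties by the triangle inequality. Order the $Nk$ pairs $(r,i)$ lexicographically. Let $\mathscr{P}^{(0)}=\protG$, and let $\mathscr{P}^{(t)}$ be the protocol in which, for the first $t$ pairs, the shares and the oracle call for $a_i$ in round $r$ are removed and $a_i$ is supplied directly inside $\Ctildeplus{a_i}$. Then $\mathscr{P}^{(Nk)}=\tildeprotG$. For every adversary, the goal is to show
\[
\bigl|\Pr[\textsc{pass};\mathscr{P}^{(t-1)}]-\Pr[\textsc{pass};\mathscr{P}^{(t)}]\bigr|\le q\sqrt{m}\,2^{-n/2-1}.
\]
Summing over $t$ gives $\Pr[\textsc{pass};\protG]\le \epssou+Nk\,q\sqrt{m}\,2^{-n/2-1}$, which is the claim.

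Each step $t$ is a single instance of Theorem~\ref{thm:qrom_single}, applied to the environment $\mathscr{P}^{(t-1)}$ rather than to $\protG$ itself. To make this legitimate, I would note that the proof of Theorem~\ref{thm:qrom_single} is really a distinguishing bound. It compares two worlds that differ only in how $a_i$ is generated in a fixed round: via $f_i$ of fresh uniform shares, or sampled uniformly and released within $\Ctildeplus{a_i}$. It bounds the change in the acceptance probability for any adversary making at most $q$ queries to $f_i$, and this argument does not care how the other challenges and rounds are generated.

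One can then absorb all other rounds and all other provers' challenges, whether hashed or already idealized, into the adversary and the verifiers. These are simulable: the other oracles $f_{i'}$ are independent random oracles, so an adversary can simulate them, or query them, without touching $f_i$. Likewise the already-idealized challenges are independent uniform values that a reduction can sample and place in the right light cones.

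The query budget is the one point needing care. The adversary has $q$ queries to $f_i$ in total across all rounds, not per round. Applying the single-round bound with the full budget $q$ at every hybrid step is nonetheless valid, since each step only needs the total number of queries to $f_i$ to be at most $q$. This gives the stated $Nk$ factor without any refinement.

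The main obstacle is making sure the per-round theorem still applies when the same oracle $f_i$ is reused across rounds. Queries made in earlier rounds, or points queried on the fresh shares of round $r$, could in principle correlate with later uses of $f_i$. I would handle this by observing that the shares of round $r$ are fresh and uniform, and are revealed only at round $r$. The one-way-to-hiding style argument underlying Theorem~\ref{thm:qrom_single} then bounds the effect of reprogramming $f_i$ at the single point $(a_{i,1},\ldots,a_{i,m})$ by the probability of querying that point before all shares are jointly available.

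In the hybrid, replacing $f_i(\text{shares})$ by an independent uniform $a_i$ is exactly such a reprogramming, and the earlier and later rounds are part of the distinguisher's view. A fixed input point's future evaluations are unaffected, except with the collision probability that future shares coincide, which I would fold into the same bound or neglect as $2^{-nm}$. This makes each hybrid step a direct invocation of Theorem~\ref{thm:qrom_single}, and the union bound completes the reduction.
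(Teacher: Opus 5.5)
Your proposal is correct and is essentially the paper's argument: the paper applies Theorem~\ref{thm:qrom_single} to each of the $Nk$ (round, prover) pairs and sums the penalties $q\sqrt{m}\,2^{-n/2-1}$ by a union bound. Your hybrid chain $\mathscr{P}^{(0)},\ldots,\mathscr{P}^{(Nk)}$ makes explicit what the paper leaves implicit, namely that each invocation is made on an intermediate protocol rather than on $\protG$ itself; note that you only need the one-sided step $\sup_{\mathcal{A}}\Pr[\textsc{pass};\mathscr{P}^{(t-1)}]\le\sup_{\mathcal{A}'}\Pr[\textsc{pass};\mathscr{P}^{(t)}]+q\sqrt{m}\,2^{-n/2-1}$, which is all the theorem's reduction-based proof gives, so your per-adversary absolute-value form is stronger than you need or can directly claim.
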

Therefore, it suffices to prove soundness of $\tildeprotG$.
We note that the QROM error $\epsROM$ can be made arbitrarily small by choosing the hash output length $n$ sufficiently large. 
For a target $\epsROM$, it suffices to take $n \geq 2\log_2(Nk\sqrt{m}q/\epsROM) + 2$.

\subsection{Step 2: Reduction to a non-local game}
\label{sec:reduction_nlgame}

The second step uses the two geometric conditions to reduce the soundness of Protocol $\tildeprotG$ to the score and mismatch probability of the underlying non-local game $G$. 
Let $C^-(z_{x_{i,g(i)}})$ be the past light cone of the event where response $x_{i,g(i)}$ can latest be received by verifier $V_{g(i)}$.
The first condition, score-input isolation, ensures that the region generating each score-response cannot access any foreign challenge,
\begin{definition}[Score-input isolation]
\label{def:score_input_isolation}
Let $z_{x_{i,g(i)}}$ denote the target event of the score response copy $x_{i,g(i)}$, the latest spacetime event at verifier $V_{g(i)}$ at which $x_{i,g(i)}$ can still be accepted.
The protocol geometry is \emph{score-input isolated} with respect to $g$ and adversary-accessible region $\Radv$ if, for every $i \in [k]$ and every $s \neq i$,
\begin{equation}
    \Radv \cap \Cminus{z_{x_{i,g(i)}}} \cap \Ctildeplus{a_s} = \emptyset.
    \label{eq:score_input_isolation}
\end{equation}
Equivalently, no adversary-accessible event that can causally influence the score copy $x_{i,g(i)}$ can also access the foreign challenge $a_s$.
\end{definition}
Under this condition, any adversary strategy decomposes into a preparation phase with no access to inputs and $k$ score-generating regions that are causally independent and disjoint, and can be recast as a quantum strategy for $G$ with the same score.
The preparation phase produces a joint state $\rho_{Q_1 \cdots Q_k E}$ across the score-generating subsystems $Q_1, \ldots, Q_k$ and a spectator register $E$, while the agent action in each score-generating region maps to a channel $\Emap{i} : Q_i \to x_{i,g(i)} Q_i'$ that depends only on $a_i$.

The second condition, guess-forcing, reduces the consistency checks to guessing the game outputs.
For every copy-response $X_{i,j}$ ($j\neq g(i)$) used in consistency check $X_{i,j} = X_{i,g(i)}$, the side-information $I_{i,j}$ accessible to an agent is fixed by the geometry, 
\begin{definition}
\label{def:accessible_sets}
For $i \in [k]$ and $j \neq g(i)$, let $z_{x_{i,j}}$ denote the target event of response copy $x_{i,j}$.
Define the accessible challenge index set
\begin{equation}
    \chset{i}{j}:= \{s \in [k] : \Radv \cap \Cminus{z_{x_{i,j}}} \cap \Ctildeplus{a_s} \neq \emptyset\},
\end{equation}
and the accessible post-score index set
\begin{equation}
    \Postset{i}{j} := \{s \in [k] : \Radv \cap \Cminus{z_{x_{i,j}}} \cap \Cplus{\Rsc{s}} \neq \emptyset\},
\end{equation}
where $C^+(\Rsc{s})$ represents the future light cone the score-generating region $\Rsc{s}$.
The geometry-induced side-information register is $I_{i,j} := A_{\chset{i}{j}} X_{\Postset{i}{j}} Q_{\Postset{i}{j}}'$.
\end{definition}
We define a consistency check as \underline{guess-forcing} relative to $I_{i,j} E $ since the agent has maximally access to these registers to guess $X_{i,g(i)}$.
In both models, we require there to be guess-forcing checks where $X_{i,g(i)}\notin I_{i,j}$, where the agent would fail the consistency check with probability at least $1-\pg(X_{i,g(i)} \mid I_{i,j}E)$, which lower bounds the mismatch probability $\pmis$.

\subsection{Step 3: Blind randomness and the score-mismatch tradeoff}
\label{sec:blind_randomness}

Guessing a player's output $x_i$ from other players' information is governed by the \emph{blind local randomness}~\cite{Miller2017} of the non-local game. 
A strategy whose output $x_i$ is highly predictable for other players is close to a classical strategy for player $i$, and therefore achieves a lower score that is near the classical score.
We extend the blind local randomness analysis of Ref.~\cite{Miller2017} from two-party to multi-party non-local games, using it to bound the guessing probability of $x_{i,g(i)}$ and hence the mismatch, analytically.
For model~1, assuming the adversary does not occupy the location of prover $P_u$, we prove a score-mismatch tradeoff whenever there is at least one guess-forcing check $X_{u,g(u)}\notin I_{u,j}$ for that prover.
Let $\omegaqclass{u}$ be the optimal score for a strategy with classical player $u$ and $C_u$ be defined in Eq.~\ref{eqn:Cuqa_defn}.
\begin{restatable}{theorem}{singleroundmodela}
\label{thm:single_round_model1}
Let $G$ be a complete-support $k$-player non-local game with arbitrary input distribution $p_{\vec{a}}$. 
Consider one round of Protocol $\tildeprotG$. Assume the adversary satisfies the assumptions of Section~\ref{sec:adversary_model}, cannot be located at $P_u$, the geometry is score-input isolated, and there exists $j \neq g(u)$ such that $X_{u,j} = X_u$ is guess-forcing relative to $AX_{[k]\setminus u}Q_{[k]\setminus u}'E$. 
Then every adversarial strategy with score $\omega$ and mismatch probability $\pmis$ satisfies
\begin{equation}
\label{eq:model1_single_round}
    \pmis \geq \frac{4}{9C_u^2} \left(\omega - \omegaqclass{u}\right)^2
\end{equation}
whenever $\omega \geq \omegaqclass{u}$.
\end{restatable}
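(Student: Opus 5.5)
Under the hypotheses of the theorem, we first use the Step~2 reduction to bring any adversarial strategy in one round of $\tildeprotG$ into a normal form. By score-input isolation, the strategy is equivalent to a state $\rho_{Q_1\cdots Q_k E}$ together with local instruments $\Emap{i}:Q_i\to X_i Q_i'$ that depend only on $a_i$; the score $\omega$ is the game value of this $k$-player quantum strategy. By the guess-forcing hypothesis, the copy-response $x_{u,j}$ is produced by some channel acting on $A X_{[k]\setminus u} Q'_{[k]\setminus u} E$. Hence
\begin{equation*}
\pmis \geq 1-\pg(X_u\mid A X_{[k]\setminus u} Q'_{[k]\setminus u} E).
\end{equation*}
Write $\delta:=1-\pg$ for this guessing error. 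The task then reduces to a purely game-theoretic statement: any strategy in which player $u$'s output can be guessed with error $\delta$ by the other players (after they have measured) has score at most $\omegaqclass{u}+\tfrac{3}{2}C_u\sqrt{\delta}$. Rearranging gives $\pmis\geq\delta\geq \frac{4}{9C_u^2}(\omega-\omegaqclass{u})^2$ whenever $\omega\geq\omegaqclass{u}$.

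To prove this game-theoretic bound, we follow Miller--Shi's blind randomness argument, adapted to the multipartite setting. Let $\{M^{a,x_{-u}}_{y}\}$ be the optimal guessing measurement on $Q'_{[k]\setminus u}E$, which is conditioned on the full challenge $a$ and the other outputs $x_{-u}$. From it we construct a modified strategy in which player $u$ is classical. Concretely, we let the other players, together with $E$, do the following:
\begin{enumerate}
\item first perform their own measurements for their inputs $a_{-u}$;
\item then, for every possible $a_u$, apply the guessing measurement;
\item output guesses $y(a_u)$ that player $u$ answers with deterministically.
\end{enumerate}
Because the guessing measurements for different $a_u$ do not commute, this cannot be done directly. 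Instead we use the complete-support assumption: the guesser's success averaged over $p_{\vec a}$ controls the success for each $a_u$ up to a factor $1/\min_{\vec a} p_{\vec a}$, which we expect to enter $C_u$. We then apply the gentle-measurement lemma (or the sequential measurement bound $\|\rho-\rho'\|\le 2\sqrt{\epsilon}$) sequentially over the $|\mathcal{A}_u|$ inputs. This yields a global state from which a classical-$u$ strategy is obtained whose outcome statistics are within trace distance $O(\sqrt{\delta})$ of the original ones. Its score is at most $\omegaqclass{u}$ by definition, so $\omega\le\omegaqclass{u}+C_u'\sqrt\delta$.

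The main obstacle is making this simulation honest about causality. The guess uses $X_{-u}$ and $Q'_{-u}$, which are produced after the other players have measured on their own inputs. The classical assignment for player $u$ must not depend on the other players' inputs in a way that breaks the non-local-game structure. A simple ``guess after measuring'' therefore yields an outcome correlated with the other players' data, not a legitimate classical player. Our first attempt would be to treat the guesser's post-measurement correction as reversible. We would purify $E$, note that the other players' measured outputs are kept, and view the combined operation as a joint instrument on $Q_{-u}E$ followed by a classical function that assigns $x_u$ from the local table $y(\cdot)$. Here the table is generated by an operation on $Q_{-u}E$ that must be shown, up to $O(\sqrt\delta)$ error, to be independent of $a_{-u}$; complete support plus the gentle-measurement step should give this. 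The constant $\tfrac32 C_u$ would come from combining the $\sqrt{\delta}$ disturbance terms of the sequential gentle measurements with the $p_{\vec a}$ reweighting. We would fix it to match Eq.~\eqref{eqn:Cuqa_defn} only at the end.
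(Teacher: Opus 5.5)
\textbf{There is a genuine gap in the game-theoretic bound.} Your first step matches the paper. Score-input isolation gives a $k$-player quantum strategy with the same score, and guess-forcing gives $\pmis \geq \delta_u := 1-\pg(X_u\mid A X_{[k]\setminus u}Q'_{[k]\setminus u}E)$. Everything then rests on the bound $\omega \leq \omegaqclass{u} + \tfrac{3}{2}C_u\sqrt{\delta_u}$, and your proof of that bound does not go through.

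\emph{The surrogate is not a classical-$u$ strategy.} You build the classical surrogate for player $u$ from the \emph{guesser's} measurements, run by the other players after they have measured on $a_{[k]\setminus u}$. The table $y(\cdot)$ is therefore a function of $a_{[k]\setminus u}$ and $x_{[k]\setminus u}$. Correlations in which the other players write player $u$'s table after seeing their own inputs and outputs are not bounded by $\omegaqclass{u}$. In CHSH, the table $y(a_1)=x_2\oplus a_1a_2$ wins with certainty. So ``its score is at most $\omegaqclass{u}$ by definition'' does not follow. Your proposed repair, that the table is independent of $a_{[k]\setminus u}$ up to $O(\sqrt{\delta})$ by complete support and gentle measurement, is asserted rather than argued, and it is the whole difficulty.

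\emph{The sequential gentle-measurement step fails as stated.} A guessing POVM for input $a_u$ succeeds with high probability only on the states conditioned on player $u$'s outcome for that input. It therefore need not be gentle on the coherent joint state of $Q_u$ and the guesser's systems, which is what the next input needs. For example, let player $u$ measure $Z$ or $X$ on half of a maximally entangled pair whose partner sits untouched in $Q'_{[k]\setminus u}$. Then $\delta_u=0$, yet the perfect $Z$-guess already moves the global state by a constant. No $O(\sqrt{\delta})$ disturbance bound for sequential guesses can hold.

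\textbf{The missing idea} is Miller--Shi's Proposition~5 (Lemma~\ref{lem:commutation-base}), applied to player $u$'s \emph{own} measurement. For each fixed $\vec a$, if $X_u$ can be guessed from $X_{[k]\setminus u}Q'_{[k]\setminus u}$ with error $\delta_{\vec a}$, then the pinching $\Phi^{Q_u}_{a_u}$ is $(2\sqrt{\delta_{\vec a}}+\delta_{\vec a})$-commutative with $(\bigotimes_{j\neq u}\Phi^{X_j}_{a_j})(\rho_Q)$. The paper then proceeds as follows:
\begin{enumerate}
\item The classical surrogate is player $u$ pre-measuring its own system on all its inputs in increasing order before the game. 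This is a legitimate table that never sees $a_{[k]\setminus u}$.
\item A telescoping triangle inequality charges each extra pinching at $a'_u<a_u$ at most $3\sqrt{\delta_{(a_{[k]\setminus u},a'_u)}}$.
\item Regrouping these terms produces the weights $q^{(u)}_{\vec a}$. Cauchy--Schwarz against $p_{\vec a}$ gives exactly $3C_u\sqrt{\delta_u}$.
\item Halving the $\ell_1$ distance gives $\tfrac{3}{2}C_u\sqrt{\delta_u}$.
\end{enumerate}
Complete support enters here, because every tuple $(a_{[k]\setminus u},a'_u)$ must carry positive weight in $\delta_u$, and $1/p_{\vec a}$ appears in $C_u$. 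A crude $1/\min_{\vec a}p_{\vec a}$ reweighting would not reproduce the constant $\tfrac{4}{9C_u^2}$ in the statement.
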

Since the adversary in Model~1 is free to choose which prover location $u$ to vacate, the geometry must supply a guess-forcing check for every possible $u$,
\begin{restatable}{corollary}{singleroundmodelaall}
\label{cor:model1_all}
Under the same assumptions as Theorem~\ref{thm:single_round_model1}, with a guess-forcing check available for every $u \in [k]$, and with $\Cmax := \max_u C_u$, every adversarial strategy satisfies
\begin{equation}
    \label{eq:model1_corollary}
    \pmis \geq \frac{4}{9\Cmax^2}   \left(\omega - \omegaonecl\right)^2
\end{equation}
whenever $\omega \geq \omegaonecl$.
\end{restatable}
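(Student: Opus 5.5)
The plan is to derive Corollary~\ref{cor:model1_all} directly from Theorem~\ref{thm:single_round_model1}, handling the fact that the adversary in Model~1 may vacate any prover location or randomize over which one it vacates. First I would fix the adversary's choice. If the adversary deterministically vacates $P_u$, then the hypotheses of Theorem~\ref{thm:single_round_model1} hold for that $u$, because by assumption a guess-forcing check exists for every $u$. The theorem then gives
\begin{equation}
\pmis \geq \frac{4}{9C_u^2}\left(\omega-\omegaqclass{u}\right)^2
\end{equation}
whenever $\omega\geq\omegaqclass{u}$.

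Next I would pass to uniform constants. By definition $\omegaonecl=\max_u\omegaqclass{u}$, so $\omega\geq\omegaonecl$ implies $\omega\geq\omegaqclass{u}$ and $\omega-\omegaqclass{u}\geq\omega-\omegaonecl\geq 0$. Squaring preserves this ordering for nonnegative numbers, and $C_u\leq\Cmax$ gives $1/C_u^2\geq 1/\Cmax^2$. Together these give
\begin{equation}
\pmis\geq\frac{4}{9\Cmax^2}(\omega-\omegaonecl)^2 .
\end{equation}
This is the claim in the deterministic case.

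The main obstacle is the mixed case. Here the adversary's round is a convex combination $\sum_u \lambda_u(\omega_u,p_u)$ of strategies that each vacate a single location $P_u$. The function $h(\omega)=\frac{4}{9\Cmax^2}\max(0,\omega-\omegaonecl)^2$ is convex and nondecreasing. Each component satisfies $p_u\geq h(\omega_u)$: this is the bound above when $\omega_u\geq\omegaonecl$, and trivial otherwise since $p_u\geq0$. By Jensen's inequality,
\begin{equation}
\pmis=\sum_u\lambda_u p_u\geq\sum_u\lambda_u h(\omega_u)\geq h\Big(\sum_u\lambda_u\omega_u\Big)=h(\omega).
\end{equation}
This gives the corollary for arbitrary mixtures.

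One further point needs checking. The adversary might share a random variable selecting $u$ as part of its preparation. Conditioned on that variable, the strategy is a valid single-$u$ strategy, so the decomposition above is legitimate. If instead the choice of $u$ is folded into $E$, it is equally fine, because Theorem~\ref{thm:single_round_model1} already allows arbitrary side information $E$.
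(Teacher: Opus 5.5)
Your deterministic-$u$ step is the paper's proof. The paper applies Theorem~\ref{thm:single_round_model1} for each $u$ and takes $\min_u \frac{4}{9C_u^2}(\omega-\omegaqclass{u})^2$, since the adversary picks the vacated location. It then replaces $\omegaqclass{u}$ by $\omegaonecl$ and $C_u$ by $\Cmax$.

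Where you differ is the mixed case. The paper's proof does not treat it, even though Model~1 explicitly lets the adversary ``mix over choices''. Your order of operations matters here. The per-$u$ minimum $\min_u \frac{4}{9C_u^2}\max(0,\omega-\omegaqclass{u})^2$ need not be convex in $\omega$ when the $C_u$ differ, so Jensen cannot be applied to it directly. After you uniformize the constants, the single function $h$ is convex, and the conditioning-plus-Jensen argument goes through. Conditioned on a selector fixed during preparation, the challenge distribution is unchanged, so each branch is a legitimate single-$u$ strategy.

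Your route makes the corollary hold as literally stated for every Model~1 adversary, including ones that randomize the vacated location round by round. The paper's route is shorter and loses nothing downstream, because the affine bound from Lemma~\ref{lem:linearisation} is preserved under mixtures anyway.

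One correction: drop your closing remark that the selector could instead be folded into $E$. Theorem~\ref{thm:single_round_model1} requires a fixed vacated location. A strategy whose vacated location is controlled by a register in $E$ would need gates at every prover location, so it does not meet that hypothesis. Only the conditioning argument works.
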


For model~2, the stronger adversary restriction enables a stronger security statement.
By requiring a sequence of independent guessing of player $u$'s output $X_u$ by the information accessible to players $1$ through $u-1$, we prove that a classical-quantum gap in the non-local game is sufficient to result in non-zero mismatch probability,
\begin{restatable}{theorem}{singleroundmodelb}
\label{thm:single_round_model2}
Let $G$ be a complete-support $k$-player non-local game with arbitrary input distribution $p_{\vec{a}}$, and fix an ordering $(1, 2, \ldots, k)$ of the provers. 
Consider one round of Protocol $\protG$. 
Assume the adversary satisfies the assumptions of Section~\ref{sec:adversary_model}, cannot be located at any prover location, the geometry is score-input isolated, and for each $u \in \{2, \ldots, k\}$ there exists $j_u \neq g(u)$ such that $X_{u,j_u} = X_u$ is guess-forcing relative to 
$A X_{[u-1]} Q'_{[u-1]} E$. 
Then every adversarial strategy with score $\omegaG$ and mismatch probability $\pmis$ satisfies
\begin{equation}
    \label{eq:model2_single_round}
    \pmis \geq \frac{4}{9}    \left(\frac{\omega - \omegacl}  {\sum_{u=2}^k C_u}\right)^2
\end{equation}
whenever $\omega \geq \omegacl$.
\end{restatable}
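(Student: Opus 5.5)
We prove Theorem~\ref{thm:single_round_model2} by a hybrid argument that makes players $k, k-1, \ldots, 2$ classical one at a time, and we pay for each replacement with the guessing probability forced by the corresponding consistency check.

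\textbf{Step 1: reduce to a game strategy.} By score-input isolation (Definition~\ref{def:score_input_isolation}), the reduction of Step~2 recasts any adversary strategy for $\tildeprotG$ as a quantum strategy for $G$ with the same score $\omega$. This strategy consists of a state $\rho_{Q_1\cdots Q_k E}$ and instruments $\Emap{i}: Q_i \to X_i Q_i'$. For each $u\geq 2$, the copy-response $x_{u,j_u}$ is produced by some channel acting on $A X_{[u-1]} Q'_{[u-1]} E$. This holds because guess-forcing with respect to that register means the agent can access at most these registers. Hence
\[
\pmis \geq \Pr[X_{u,j_u}\neq X_u] \geq 1-\pg\big(X_u \mid A X_{[u-1]} Q'_{[u-1]} E\big) =: \delta_u .
\]
This bound holds for every $u$ simultaneously, since a mismatch in any check sets $T_r=1$.

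\textbf{Step 2: blind local randomness lemma.} We extend Ref.~\cite{Miller2017} to the multi-party setting. Fix $u$ and a strategy in which players $u+1,\dots,k$ are already classical. Suppose $X_u$ can be guessed from $A X_{[u-1]}Q'_{[u-1]}E$ with error $\delta_u$. We want a modified strategy in which player $u$ is classical and the score changes by at most $\tfrac32 C_u\sqrt{\delta_u}$.

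The construction is as follows. The guessing measurement acts on registers available after players $1,\dots,u-1$ have measured. Using complete support, with $C_u$ absorbing $\max_{\vec a} 1/p_{\vec a}$ and the alphabet sizes as in Eq.~\eqref{eqn:Cuqa_defn}, high guessing success on average over $\vec a$ implies high guessing success for every fixed input. That input-wise guess, applied conditioned on each possible $a_u$, defines a classical response table for player $u$: it is a collection of pre-determined outcomes $\hat x_u(a_u)$ obtained by running the guess for all $a_u$ in parallel. This uses that the guess commutes with the other players' local operations.

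A gentle-measurement/Winter-type bound then gives a trace-distance change of at most $\sqrt{\delta_u}$ per input. Summing with the $p_{\vec a}^{-1}$ weights and the $\tfrac32$ factor from the gentle-measurement constant yields $|\omega^{(u)}-\omega^{(u-1)}|\le \tfrac32 C_u\sqrt{\delta_u}$.

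The resulting strategy is still a valid strategy in which later guesses refer only to players $1,\dots,u-1$. Those players are untouched, so the guessing errors $\delta_{u-1},\dots,\delta_2$ are preserved. This is why we proceed in the order $u=k,k-1,\dots,2$.

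\textbf{Step 3: telescope.} After all replacements, players $2,\dots,k$ are classical. Player $1$ alone holding quantum systems cannot beat $\omegacl$, since a single quantum party with classical others is classical. Telescoping gives
\[
\omega-\omegacl \le \tfrac32\sum_{u=2}^k C_u\sqrt{\delta_u} \le \tfrac32\Big(\sum_u C_u\Big)\sqrt{\pmis},
\]
and rearranging yields \eqref{eq:model2_single_round}.

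\textbf{Main obstacle.} The hardest part is Step~2. We must turn an average guess into a classical player $u$ whose answers do not disturb the joint state seen by players $1,\dots,u-1$, and we must control the $1/p_{\vec a}$ blow-up. We would first try to mirror Miller--Shi's argument directly, treating $A X_{[u-1]}Q'_{[u-1]}E$ as a single "blind" party. We would also need to make sure that conditioning on all of $A$, including $a_u$, does not trivialize the classical table. Complete support is exactly what prevents that collapse, as the GHZ counterexample shows.
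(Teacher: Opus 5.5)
Your Steps~1 and~3 agree with the paper: the reduction of Theorem~\ref{thm:reduction_nlg_full}, the bound $\pmis\ge\max_{u\ge2}\tilde\delta_u$, and the observation that one quantum player facing classical players yields a classical correlation. Your replacement order $u=k,\dots,2$ is also compatible with what the argument needs.

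The gap is the construction in Step~2. A table $\hat x_u(a_u)$ obtained by running the optimal guessing measurement is not a strategy in which player $u$ is classical, for three reasons.
\begin{itemize}
\item That measurement acts on $Q'_{[u-1]}E$, which player $u$ does not hold.
\item It is chosen as a function of the whole challenge tuple and of the realized outputs $x_{[u-1]}$. So $\hat x_u$ depends on the other players' inputs and outcomes, and it only exists after they have measured.
\item Guesses for different values of $a_u$ act on the same register and need not be jointly performable, so ``running them in parallel'' is not well defined either.
\end{itemize}
The substituted correlation can therefore be far from classical, and the telescoping to $\omegacl$ in Step~3 collapses. Concretely, take CHSH with the Tsirelson strategy. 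The rule $\hat x_2=x_1\oplus a_1a_2$ guesses $X_2$ from $A_1A_2X_1$ with probability $\cos^2(\pi/8)$, so $\tilde\delta_2\approx 0.15$. Yet substituting $\hat x_2$ for $x_2$ produces the PR box, which has score $1$.

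The missing idea is the Miller--Shi one (Ref.~\cite{Miller2017}), which the paper extends: the classical table must be built from player $u$'s own measurements. Player $u$ nondestructively pre-measures $Q_u$ with $\Phi^{Q_u}_{1},\Phi^{Q_u}_{2},\ldots$ in increasing order and answers with the stored outcome for the actual $a_u$. This is manifestly local.

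The guessing probability never defines the responses. It enters only through Lemma~\ref{lem:commutation-extended}. Predictability of $X_u$ from $X_{[u-1]}Q'_{[u-1]}$ at fixed $\vec a$ makes the pinching $\Phi^{Q_u}_{a_u}$ $(2\sqrt{\tilde\delta_{u,\vec a}}+\tilde\delta_{u,\vec a})$-commutative with $\bigl(\bigotimes_{j<u}\Phi^{X_j}_{a_j}\bigr)(\rho_Q)$. So each inserted pre-measurement barely changes the statistics.

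Summing these disturbances over the earlier pinchings, by the triangle inequality, gives $3\sum_{\vec a}q^{(u)}_{\vec a}\sqrt{\tilde\delta_{u,\vec a}}$. Cauchy--Schwarz against $p_{\vec a}$ then turns this into $3C_u\sqrt{\tilde\delta_u}$, with exactly the $C_u$ of Eq.~\eqref{eqn:Cuqa_defn}. This step is also where complete support is used, since it keeps $C_u$ finite.

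Your alternative passes from the average guessing error to per-input errors via $\tilde\delta_{u,\vec a}\le\tilde\delta_u/p_{\vec a}$. That yields the constant $\sum_{\vec a}q^{(u)}_{\vec a}/\sqrt{p_{\vec a}}\ge C_u$, or worse if you use $\max_{\vec a}1/p_{\vec a}$. Even with a valid classical construction, it would not give \eqref{eq:model2_single_round} with the stated constant.
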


\subsection{Step 4: Finite-size analysis}
\label{sec:finite_size}

Using the concentration bounds in Ref.~\cite{Vanhimbeeck2019} and linearizing the score-mismatch tradeoff, we convert the single-round tradeoff into a security guarantee for the full $N$-round protocol against sequential adversaries.
\begin{restatable}{lemma}{finitesizelinear}
\label{lem:linearisation}
Consider an adversarial strategy whose single-round score and mismatch satisfy $    \omegaG \leq \omega_* + c\sqrt{p_{\mathrm{mis}}}$ for constants $\omega_*$ and $c > 0$. Then for any $p_0 > 0$,
\begin{equation}
    \Pr\!\left[\frac{1}{N}\sum_{r=1}^N \omega_r \geq \omegath,\;    \frac{1}{N}\sum_{r=1}^N T_r \leq \pmisth    \right] \leq \varepsilon,
\end{equation}
where the thresholds satisfy
\begin{equation}
\label{eq:threshold_condition}
    \omegath - c_2 - c_1 p_{\mathrm{mis},\mathrm{th}} = \sqrt{\frac{2V \ln(1/\varepsilon)}{N}} +    \frac{Y_{\mathrm{max}}}{3N}\ln(1/\varepsilon)
\end{equation}
with $c_1 = \frac{c}{2\sqrt{p_0}} > 0$, $c_2 = \omega_* + \frac{c\sqrt{p_0}}{2}$,  $Y_{\mathrm{max}} = 1 - c_2$ and 
\begin{equation}
    \label{eq:variance_proxy}
    V = \max\{1 - 2c_2,\, 0\} + c_2^2 + \max\{c_1^2 + 2c_1 c_2,\, 0\}.
\end{equation}
\end{restatable}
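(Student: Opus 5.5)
The plan is to linearize the concave constraint $\omega \leq \omega_* + c\sqrt{p}$ into an affine one, turn it into a per-round statement about a bounded random variable, and then apply the Bernstein-type martingale concentration of Ref.~\cite{Vanhimbeeck2019}.

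\textbf{Linearization.} Since $\sqrt{p}$ is concave, it lies below its tangent at $p_0>0$:
\begin{equation}
\sqrt{p} \leq \sqrt{p_0} + \frac{p-p_0}{2\sqrt{p_0}} = \frac{\sqrt{p_0}}{2} + \frac{p}{2\sqrt{p_0}}.
\end{equation}
Hence every admissible single-round strategy satisfies $\omega \leq c_2 + c_1 p_{\mathrm{mis}}$, with $c_1 = c/(2\sqrt{p_0})$ and $c_2 = \omega_* + c\sqrt{p_0}/2$.

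\textbf{Per-round supermartingale.} The adversary is sequential and may carry quantum memory, so the single-round tradeoff must be applied conditionally. Conditioned on the history $\mathcal{F}_{r-1}$, the adversary's action in round $r$ is still an admissible single-round strategy. Its conditional expected score and mismatch therefore obey $\mathbb{E}[\omega_r\mid\mathcal{F}_{r-1}] \leq c_2 + c_1\mathbb{E}[T_r\mid \mathcal{F}_{r-1}]$. Define
\begin{equation}
Y_r := \omega_r - c_1 T_r - c_2,
\end{equation}
so that $\mathbb{E}[Y_r\mid\mathcal{F}_{r-1}] \leq 0$ and $\sum_{r} Y_r$ is a supermartingale. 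The passing event $\frac{1}{N}\sum_r\omega_r \geq \omegath$, $\frac{1}{N}\sum_r T_r\leq \pmisth$ implies
\begin{equation}
\frac{1}{N}\sum_r Y_r \geq \omegath - c_1\pmisth - c_2 .
\end{equation}
It therefore suffices to bound the probability that $\frac1N\sum_r Y_r$ exceeds the right-hand side of \eqref{eq:threshold_condition}.

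\textbf{Concentration.} I would apply the Bernstein/Freedman inequality from Ref.~\cite{Vanhimbeeck2019} for martingale differences bounded above by $Y_{\max}$ with conditional second moment at most $V$. It gives tail probability $\varepsilon$ at deviation $\sqrt{2V\ln(1/\varepsilon)/N} + Y_{\max}\ln(1/\varepsilon)/(3N)$, which is exactly the threshold condition.

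\textbf{Constants.} The upper bound is immediate: since $\omega_r\in[0,1]$, $T_r\in\{0,1\}$ and $c_1>0$, we have $Y_r \leq 1 - c_2 = Y_{\max}$. If the inequality requires centered increments, I would use $Y_r - \mathbb{E}[Y_r\mid\mathcal{F}_{r-1}]$ and absorb the nonpositive mean.

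The main obstacle is checking that $V$ in \eqref{eq:variance_proxy} bounds $\mathbb{E}[Y_r^2\mid\mathcal{F}_{r-1}]$ uniformly. Expanding,
\begin{equation}
Y_r^2 = \omega_r^2 - 2c_2\omega_r + c_2^2 + (c_1^2 + 2c_1c_2)T_r - 2c_1\omega_r T_r ,
\end{equation}
where I used $T_r^2 = T_r$. I would then bound the terms as follows:
\begin{itemize}
\item $\omega_r^2 - 2c_2\omega_r \leq (1-2c_2)\omega_r \leq \max\{1-2c_2, 0\}$, using $\omega_r^2\le\omega_r$.
\item $(c_1^2 + 2c_1c_2)T_r \leq \max\{c_1^2 + 2c_1c_2, 0\}$.
\item The cross term $-2c_1\omega_rT_r \leq 0$ and can be dropped.
\end{itemize}
Summing gives $V$. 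If the cited inequality is instead phrased with conditional variance, the same bound holds a fortiori, since variance is at most the second moment.
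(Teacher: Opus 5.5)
Your proposal is correct and follows the paper's proof essentially step for step: tangent-line linearization of $\sqrt{p_{\mathrm{mis}}}$ at $p_0$, the supermartingale differences $Y_r=\omega_r-c_1T_r-c_2$ with $Y_r\le 1-c_2$, the same expansion of $Y_r^2$ (using $\omega_r^2\le\omega_r$ and $T_r^2=T_r$, and dropping the cross term), and then the concentration bound of Ref.~\cite{Vanhimbeeck2019}. One small caution: your centering fallback is unnecessary and would not preserve the stated constant, because $Y_r-\mathbb{E}[Y_r\mid\mathcal{F}_{r-1}]$ is only bounded above by $Y_{\max}-\mathbb{E}[Y_r\mid\mathcal{F}_{r-1}]\ge Y_{\max}$; the Bernstein/Freedman-type bound applies directly to supermartingale differences that are bounded above and have a conditional second-moment bound, which is how the paper uses it.
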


\subsection{Numerical Bounds}

Numerical bounds can give a tighter score-mismatch tradeoff than the analytical bounds derived in step 3.
After the step 2 reduction, the adversary's strategy is equivalent to a quantum strategy for $G$, and the mismatch probability is lower bounded by a weighted average of guessing probabilities $\pg(X_{u_l} \,|\, AX_{J_l} Q_{J_l}')$, where $J_l \subseteq [k]\setminus\{u_l\}$ indexed by $l$ that is determined by the geometry.
For a fixed set of admissible matching checks and weighting $\{p_l\}$, we can bound the achievable mismatch by minimizing it at fixed score.
This optimization problem can be relaxed to an SDP using the NPA hierarchy~\cite{NPA2008}. 
The dual solution yields a linear bound $\pmis \geq c_1 \omega + c_2$ that is valid for all adversary strategies, typically much tighter than the analytical bound. 

In the CHSH game instantiation, the reduced strategy is parameterized by $(\rho_{Q_1 Q_2}, \{A^{a_1}_{x_{1}}\}, \{B^{a_2}_{x_2}\}, \{C^{a_1,a_2}_{x_{1}}\})$, where $\{A^{a_1}_{x_{1}}\}$ applies to $Q_1$ to generate score-response $x_{1,1}$, $\{B^{a_2}_{x_2}\}$ applies to $Q_2$ to generate score-response $x_{2,2}$, and $\{C^{a_1,a_2}_{x_{1}}\}$ applies on the post-measurement register $Q_2'$ to produce copy-response $x_{1,2}$. 
This minimization relaxes, via the NPA hierarchy~\cite{NPA2008}, to na SDP which we solve at level $2 + \mathrm{CBA} + \mathrm{BCB} + \mathrm{CBC} + \mathrm{CC'BA}$. 
The full derivation, including the reduction to two agents, and the explicit SDP formulation are given in Appendix~\ref{app:chsh} [Eq.~\eqref{eqn:SDP_CHSH_Model_1}].

For the complete-support GHZ game $G'_{\mathrm{GHZ}}$ with the modified geometry, the reduced strategy against a model~1 adversary can be parameterized by $\rho_{Q_1Q_2Q_3}$, with measurement operators $\{A^{a_1}_{x_1}\}$, $\{B^{a_2}_{x_2}\}$, $\{C^{a_3}_{x_3}\}$ for the three players, together with $\{D^{a_1,a_2,a_3}_{x_1}\}$ for the guess of $X_1$ from $\vec{A}X_2X_3Q_2'Q_3'$.
For model~2, the relevant guessing tasks are to predict $X_2$ from $A_2A_3X_3Q_3'$ and to predict $X_1$ from $\vec{A}X_2X_3Q_2'Q_3'$.
This requires the introduction of an additional measurement operator $\{G^{a_2,a_3}_{x_2}\}$.
Both minimizations relax to SDPs via the NPA hierarchy~\cite{NPA2008}. 
The explicit SDPs are given in Appendix~\ref{app:ghz} [Eq.~\eqref{eqn:SDP_GHZ_Model_1} and \eqref{eqn:SDP_GHZ_Model_2}].

%% file: Appendix/prelims.tex
\section{Preliminaries}
\label{sec:preliminaries}

\subsection{Notation and spacetime framework}
\label{subsec:notation}

We write $[k] = \{1, \ldots, k\}$ for integer ranges.
For a tuple $\vec{a} = (a_1, \ldots, a_k)$ indexed by $[k]$, we write $\vec{a}_{[j]} = (a_1, \ldots, a_j)$ for the sub-tuple through index $j$, $\vec{a}_{[k]\setminus u} = (a_i)_{i \neq u}$ for the tuple with element $u$ removed, and $\vec{a}_{[k]\setminus[u]} = (a_{u+1}, \ldots, a_k)$ for the tail from index $u+1$.
We use uppercase letters ($X$, $A$, $Q$) for random variables and lowercase ($x$, $a$, $q$) for their realized values.
We write $\Pr[E]$ for the probability of an event $E$ and $\abs{S}$ for the cardinality of a finite set $S$.

We model spacetime as a set of events $\Omega$, where each event
$z = (t, \ell) \in \mathbb{R} \times \mathbb{R}^d$ specifies a time $t$ and a spatial location $\ell$ in $d$ dimensions.
We write $\ell_P$ for the location of party $P$ and $d(\ell_1,\ell_2)$ for the Euclidean distance between locations $\ell_1$ and $\ell_2$.
We write $z \preceq z'$ if information originating at event $z$ can reach event $z'$ at or below the speed of light $c=1$.
For an event $z \in \Omega$, we define its future and past light cones as
\begin{equation}
    C^+(z) := \{z' \in \Omega : z \preceq z'\},
    \qquad
    C^-(z) := \{z' \in \Omega : z' \preceq z\}.
\end{equation}
For a region $\mathcal{R} \subseteq \Omega$, we define $C^+(\mathcal{R}) := \bigcup_{z \in \mathcal{R}} C^+(z)$.
We model adversarial strategies using the notion of a spacetime circuit.

\begin{definition}[Spacetime circuit]
\label{def:spacetime_circuit}
A spacetime circuit is a finite-dimensional directed acyclic circuit $\mathcal{N} = (V, W)$, where $V$ is the set of gates and $W$ is the set of directed wires.
Each gate $v \in V$ is assigned a spacetime event $z(v) \in \Omega$.
Each wire $(v, w) \in W$ carries a finite-dimensional classical or quantum register and must respect causality,
\begin{equation}
    (v, w) \in W \implies z(v) \preceq z(w).
\end{equation}
Each gate applies a completely positive trace-preserving (CPTP) map to its incoming registers.
For two gate sets $U, T \subseteq V$, we write $U \not\to T$ if there is no directed path in $\mathcal{N}$ from any gate in $U$ to any gate in $T$.
\end{definition}

This framework captures the most general physically realizable adversarial strategy, including strategies that use pre-shared entanglement, quantum memory, and adaptive classical communication, subject only to the constraint that no signal travels faster than light and that all quantum systems are finite-dimensional.

\subsection{Random oracle model}
\label{subsec:rom}

The split-and-hash paradigm at the heart of our protocol uses a hash function to combine challenge shares from multiple verifiers into a single challenge for each prover.
The security analysis is carried out in the quantum random oracle model (QROM)~\cite{Unruh14}, in which all parties, including adversaries, have black-box superposition access to a random function $F : \{0,1\}^m \to \{0,1\}^n$, modeled as a unitary $U_F \ket{x}\ket{y} = \ket{x}\ket{y \oplus F(x)}$.
The QROM is a standard and well-studied assumption in quantum cryptography, and can be instantiated concretely with any cryptographically secure hash function.

Two standard results in the QROM are used in our security proof. 
\begin{lemma}[{\cite{BBBV97}}]
\label{lem:bbbv}
Let $\mathcal{A}$ be an oracle algorithm that makes at most $T$ queries to a function $H : \{0,1\}^m \to \{0,1\}^n$. Define $\ket{\phi_i}$ as the global state after $\mathcal{A}$ makes $i$ queries, and let $W_y(\ket{\phi_i})$ denote the sum of squared amplitudes in $\ket{\phi_i}$ of terms in which $\mathcal{A}$ queries $H$ on input $y$. 
Let $\varepsilon > 0$ and let $\mathcal{F} \subseteq \{0, 1, \ldots, T-1\} \times \{0,1\}^m$ be a set of time-input pairs such that
\begin{equation}
    \sum_{(i,y) \in \mathcal{F}} W_y(\ket{\phi_i}) \leq 
    \frac{\varepsilon^2}{T}.
\end{equation}
For $i \in \{0, 1, \ldots, T-1\}$, let $H'_i$ be an oracle obtained by reprogramming $H$ on inputs in $\{y \in \{0,1\}^m : (i,y) \in \mathcal{F}\}$ to arbitrary outputs, and let $\ket{\phi'_T}$ be the global state after $\mathcal{A}$ is run with oracle $H'_i$ on the $i$-th query instead of $H$. 
Then
\begin{equation}
    \TD(\ket{\phi_T}, \ket{\phi'_T}) \leq \frac{\varepsilon}{2},
\end{equation}
where $\TD(\cdot, \cdot)$ denotes the trace distance.
\end{lemma}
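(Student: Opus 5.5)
The plan is the standard hybrid argument of~\cite{BBBV97}, organised so that the constant can be tracked precisely. Write $\mathcal{F}_i := \{y : (i,y)\in\mathcal{F}\}$, and let $P_i$ be the projector onto the query-input register taking a value in $\mathcal{F}_i$. Let $U_j$ be the algorithm's unitary between queries. For $i=0,\dots,T$, define the hybrid $\ket{\psi^{(i)}}$ to be the final state obtained by running $\mathcal{A}$ with the modified oracles $H'_0,\dots,H'_{i-1}$ on the first $i$ queries and the true $H$ afterwards. Then $\ket{\psi^{(0)}}=\ket{\phi_T}$ and $\ket{\psi^{(T)}}=\ket{\phi'_T}$, and the triangle inequality gives $\norm{\ket{\phi_T}-\ket{\phi'_T}}\le\sum_{i}\norm{\ket{\psi^{(i)}}-\ket{\psi^{(i+1)}}}$.

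\textbf{Step 1 (single-hybrid bound).} The hybrids $\ket{\psi^{(i)}}$ and $\ket{\psi^{(i+1)}}$ agree up to the $i$-th query. Since everything after that query is unitary, their distance equals $\norm{(U_H-U_{H'_i})\ket{\phi_i}}$. Because $U_H$ and $U_{H'_i}$ coincide outside the range of $P_i$, this equals $\norm{(U_H-U_{H'_i})P_i\ket{\phi_i}}$. Writing $U_{H'_i}^\dagger U_H$ on the range of $P_i$ as a controlled shift of the output register, I bound this by $\norm{(\mathbb{1}-X^{c})P_i\ket{\phi_i}}\le 2\norm{P_i\ket{\phi_i}} = 2\sqrt{\sum_{y\in\mathcal{F}_i}W_y(\ket{\phi_i})}$.

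\textbf{Step 2 (Cauchy--Schwarz).} Summing over $i$ and applying Cauchy--Schwarz gives $\sum_i\sqrt{\sum_{y\in\mathcal{F}_i}W_y}\le\sqrt{T\sum_{(i,y)\in\mathcal{F}}W_y(\ket{\phi_i})}\le\varepsilon$. Hence $\norm{\ket{\phi_T}-\ket{\phi'_T}}\le 2\varepsilon$.

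\textbf{Step 3 (Euclidean distance to trace distance).} For pure states, $\TD=\sqrt{1-\abs{\braket{\phi_T|\phi'_T}}^2}\le\norm{\ket{\phi_T}-\ket{\phi'_T}}$.

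\textbf{Main obstacle.} The obstacle is Step 3 together with the factor $2$ from Step 1. Steps 1--3 yield $\TD\le 2\varepsilon$, whereas the statement claims $\varepsilon/2$, a factor of $4$ smaller. My first attempt to close this gap would be to replace the bound $\norm{(\mathbb{1}-X^c)v}\le 2\norm{v}$ by an exact computation, and then to bound $\TD$ through $1-\mathrm{Re}\braket{\phi_T|\phi'_T}$ rather than through the norm.

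Before investing in that, I would check a minimal instance, and it shows that no argument can reach $\varepsilon/2$ as the statement is literally phrased. Take $T=1$ and $\ket{\phi_0}=\ket{y}\ket{0}$, so that $W_y(\ket{\phi_0})=1$ and the hypothesis holds with $\varepsilon=1$. Reprogram $H(y)=0$ to $H'(y)=1$. The final states $\ket{y,0}$ and $\ket{y,1}$ are orthogonal, so $\TD=1>\varepsilon/2$.

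I would therefore not claim a proof of the constant $\varepsilon/2$. What the plan establishes is $\TD(\ket{\phi_T},\ket{\phi'_T})\le 2\varepsilon$, and the counterexample shows this is essentially sharp. The original BBBV formulation is $\norm{\ket{\phi_T}-\ket{\phi'_T}}\le\varepsilon$, and in the example above that bound fails as well, since the Euclidean distance is $\sqrt{2}>1$. I would flag that the constant in Lemma~\ref{lem:bbbv} must be adjusted, to $2\varepsilon$ or to an equivalently rescaled hypothesis. I would then propagate the adjusted constant into Theorem~\ref{thm:qrom_single}. That propagation changes only the numerical prefactor of $\epsROM$, not the structure of the QROM reduction.
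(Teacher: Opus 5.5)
Your refutation is correct. Under the XOR oracle $U_F\ket{x}\ket{y}=\ket{x}\ket{y\oplus F(x)}$ that the paper uses, your $T=1$ instance satisfies the hypothesis with $\varepsilon=1$ and produces orthogonal final states, so $\TD=1>\varepsilon/2$. The reading $\tfrac12\norm{\ket{\phi_T}-\ket{\phi'_T}}=1/\sqrt2$ also exceeds $\varepsilon/2$, so no interpretation of the distance rescues the statement. The paper gives no proof of this lemma; it is imported from \cite{BBBV97} with a misstated constant, so there is no argument to compare yours against. The correct statement is $\TD\le 2\varepsilon$, which is what your hybrid argument gives once one step is repaired.

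\textbf{Step 1 has the hybrids in the wrong order.} You put the modified oracles on the first $i$ queries. Then the state entering query $i$ in both $\ket{\psi^{(i)}}$ and $\ket{\psi^{(i+1)}}$ is the state after $i$ queries answered by $H'_0,\dots,H'_{i-1}$, not $\ket{\phi_i}$. The single-step error is therefore governed by query weights of that modified state, and the hypothesis, which bounds only $W_y(\ket{\phi_i})$, does not control them. Reverse the order: use the true $H$ on the first $i$ queries and $H'_i,\dots,H'_{T-1}$ afterwards. Equivalently, use the telescoping identity $\ket{\phi'_T}-\ket{\phi_T}=\sum_{i=0}^{T-1}V'_{>i}\,(U_{H'_i}-U_H)\ket{\phi_i}$, where $V'_{>i}$ is the unitary modified evolution after query $i$. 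With this change your Steps 2 and 3 go through unchanged and give $\TD\le 2\varepsilon$.

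\textbf{Your example does not show sharpness.} It has $\varepsilon=1$, where the bound $2\varepsilon$ is vacuous. It also leaves open whether the lemma holds for small $\varepsilon$, which is the regime Claim~\ref{claim:hyb01} actually uses, with $\varepsilon=2\delta\ll 1$. A one-parameter family settles both points. Take $T=1$, $y'\neq y$, and
\[
\ket{\phi_0}=\sqrt{1-\varepsilon^2}\,\ket{y'}\ket{0^n}+\varepsilon\,\ket{y}\ket{\hat s},\qquad \ket{\hat s}=2^{-n/2}\sum_z(-1)^{s\cdot z}\ket{z}.
\]
Reprogram $H'(y)=H(y)\oplus c$ with $s\cdot c=1$. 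The $y$-branch picks up a relative sign, so $\abs{\braket{\phi_1|\phi'_1}}=1-2\varepsilon^2$ and $\TD=2\varepsilon\sqrt{1-\varepsilon^2}$. Hence the stated lemma fails for every $\varepsilon<\sqrt{15}/4$, and the constant $2$ cannot be improved as $\varepsilon\to0$.

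\textbf{Consequence for Claim~\ref{claim:hyb01}.} You are right that only the prefactor changes. With the corrected bound one must take $\varepsilon=\delta/2$, so the contradiction needs query weight greater than $\delta^2/(4q)$. This forces $\delta=q\sqrt{m}\,2^{-n/2+1}$, four times the stated value, i.e.\ $\epsROM=Nkq\sqrt{m}\,2^{-n/2+1}$. The paper's recommended choice $n\ge 2\log_2(Nk\sqrt{m}q/\epsROM)+2$ was already loose by exactly this factor of $4$, so it remains sufficient.
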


\begin{lemma}[{\cite{Zha12}}]
\label{lem:zhandry}
A random oracle $H : \mathcal{X} \to \mathcal{Y}$ is perfectly indistinguishable from a $2q$-wise independent hash function $H' : \mathcal{X} \to \mathcal{Y}$ against any quantum algorithm $\mathcal{A}$ making at most $q$ queries.
\end{lemma}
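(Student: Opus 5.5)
The plan is to use the polynomial method of Beals et al. Fix any quantum algorithm $\mathcal{A}$ that makes at most $q$ queries to an oracle $H:\mathcal{X}\to\mathcal{Y}$. The goal is to show that the \emph{average} final state $\mathbb{E}_H[\ketbra{\phi_q^H}{\phi_q^H}]$ is the same whether $H$ is uniformly random or drawn from a $2q$-wise independent family $H'$. Perfect indistinguishability then follows immediately: any distinguisher's acceptance probability is a linear functional of this averaged state. Auxiliary classical randomness or an initial mixed state can be purified into the algorithm, so it suffices to treat pure-state algorithms with unitaries $U_0,\dots,U_q$ interleaved with the oracle unitary $U_H\ket{x}\ket{y}=\ket{x}\ket{y\oplus H(x)}$. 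For general $\mathcal{Y}$, the XOR is read as addition in a group structure on $\mathcal{Y}$.

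\textbf{Step 1 (degree bound).} Introduce indicator variables $\delta_{x,y}(H)=[H(x)=y]$ for $x\in\mathcal{X}$, $y\in\mathcal{Y}$. Then
\begin{equation}
U_H=\sum_{x,y}\delta_{x,y}(H)\,\ketbra{x}{x}\otimes S_y,
\end{equation}
where $S_y$ is the shift by $y$. Each matrix entry of $U_H$ is therefore a polynomial of degree at most $1$ in the variables $\delta_{x,y}$, with coefficients independent of $H$. By induction on the number of queries, every amplitude of $\ket{\phi_i^H}$ is a polynomial of degree at most $i$ in the $\delta$'s, since the $U_j$ do not depend on $H$. Consequently every entry of $\ketbra{\phi_q^H}{\phi_q^H}$ is a polynomial of degree at most $2q$: it is the product of one amplitude and the complex conjugate of another, and the $\delta$'s are real.

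\textbf{Step 2 (expectations of monomials).} By linearity of expectation, it suffices to show that $\mathbb{E}_H\bigl[\prod_{t=1}^{r}\delta_{x_t,y_t}(H)\bigr]$ coincides under the two distributions for every $r\le 2q$. This expectation equals $\Pr[H(x_t)=y_t\ \forall t]$, which depends only on the joint distribution of $H$ restricted to the set $\{x_1,\dots,x_r\}$. That set has at most $2q$ distinct points. A $2q$-wise independent family has, on any at most $2q$ distinct points, exactly the uniform joint distribution, which is the same as that of a truly random function. The two expectations therefore agree, and so the averaged final states are identical.

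\textbf{Expected subtlety.} The only real care is in Step 1. One must confirm that the adversary's access pattern introduces no hidden $H$-dependence beyond the linear appearance of $\delta_{x,y}$ in $U_H$; this holds because intermediate measurements can be deferred and workspace can be purified. One must also handle an algorithm that makes fewer than $q$ queries or makes queries adaptively. Adaptivity is already absorbed into the fixed unitaries, and making fewer queries only lowers the degree. Both cases are covered by the same induction.
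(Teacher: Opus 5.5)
Your proposal is correct and is essentially Zhandry's original polynomial-method argument from \cite{Zha12}: the final density-matrix entries are polynomials of degree at most $2q$ in the indicators $[H(x)=y]$, and the expectation of each monomial depends only on the joint distribution of $H$ on at most $2q$ points. The paper does not reprove this lemma but takes it from that reference, so your argument supplies the cited reasoning.
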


\subsection{Non-local games}
\label{subsec:nlgames}

A $k$-player non-local game formalizes the task of $k$ separated parties producing correlated outputs from local inputs, without communicating during the game.

\begin{definition}[Non-local game]
\label{def:nlgame}
A $k$-player non-local game is a tuple $G = (p, \omega)$, where $p$ is a probability distribution over joint challenges $\vec{a} = (a_1, \ldots, a_k) \in \mathcal{A}_1 \times \cdots \times \mathcal{A}_k$,
and $\omega : \mathcal{A}_1 \times \cdots \times \mathcal{A}_k \times \mathcal{X}_1 \times \cdots \times \mathcal{X}_k \to [0,1]$ is a score function.
The game proceeds in three steps: (1) the referee samples $\vec{a}$ according to $p$ and sends $a_i$ to player $P_i$, (2) the players, without communicating, return responses $x_1, \ldots, x_k$, (3) the referee evaluates the score $\omega(\vec{a}, \vec{x})$.
\end{definition} 

In the non-local game, players can adopt a \emph{strategy} $\Gamma = \bigl(\rho_{Q_1 \cdots Q_k},\, \{A^{a_i}_{x_i}\}\bigr)$ which consists of a shared quantum state $\rho_{Q_1 \cdots Q_k}$, with subsystem
$Q_i$ held by player $P_i$, together with a family of measurement operators $\{A^{a_i}_{x_i}\}_{x_i}$ acting on $Q_i$.
On receiving $a_i$, player $P_i$ measures $Q_i$ with
$\{A^{a_i}_{x_i}\}_{x_i}$ and returns the outcome $x_i$ and we denote the residual post-measurement register as $Q_i'$.
The induced response distribution is
\begin{equation}
    p(\vec{x} \mid \vec{a})= \Tr\bigl[(A^{a_1}_{x_1} \otimes \cdots \otimes A^{a_k}_{x_k})\rho\bigr]
\end{equation}
and the average score of $G$ under strategy $\Gamma$ is
\begin{equation}
    \omegaGstrat{\Gamma}  = \sum_{\vec{a}, \vec{x}}
      p_{\vec{a}}\, p(\vec{x} \mid \vec{a})\, \omega(\vec{a}, \vec{x}).
\end{equation}
The optimal scores for classical strategies, quantum strategies, and strategies where exactly one player is restricted to a classical strategy while the remaining $k-1$ players are quantum are
\begin{equation}
    \omegacl = \sup_{\Gamma \in \Gamma_{\mathrm{cl}}} \omegaGstrat{\Gamma},    \qquad    \omegaq = \sup_{\Gamma \in \Gamma_q}\,\omegaGstrat{\Gamma},\qquad\omegaonecl    = \max_{u \in [k]}\, \sup_{\Gamma \in \Gamma_{q,k-1,u}} \omegaGstrat{\Gamma},
\end{equation}
respectively, where $\Gamma_{q,k-1,u}$ denotes the set of strategies in which player $u$ is classical and all other players are quantum.
A non-local game exhibits \emph{quantum advantage} if $\omegaq > \omegacl$.
Since restricting one player to a classical strategy is a special case of an unrestricted quantum strategy, these scores satisfy
$\omegacl \le \omegaonecl \le \omegaq$.
We say that $G$ has \emph{complete support} if $p_{\vec{a}} > 0$ for every joint challenge $\vec{a}$.
In particular, any game with a uniform challenge distribution has complete support. 
Any non-local game $G$ without complete support can be mapped to a complete-support non-local game $G'$ with a scaled advantage (Theorem~\ref{thm:gap_dilution}), so we assume complete support throughout.

\subsection{Blind local randomness}
\label{subsec:blind_randomness}

We formalize blind local randomness through the guessing probability.

\begin{definition}[Guessing probability]
\label{def:guessing_prob}
Let $\sigma_{XQ_1 Q_2 C}$ be a quantum state with classical registers $X$ and $C$.
The guessing probability of $X$ using subsystem $Q_1 C$ is
\begin{equation}
    \pg(X \mid Q_1 C)_\sigma
    = \max_{\{T^c_x\}}
      \sum_{x, c} p_{xc}\,
      \Tr\bigl[T^c_x\, \sigma^{xc}_{Q_1}\bigr],
\end{equation}
where the maximum is over all POVMs $\{T^c_x\}_x$ on $Q_1$ for each value of $c$.
\end{definition}

Intuitively, $\pg(X_1 \mid X_2 A_1 A_2 Q_2')$ measures how well player~1's output can be predicted by an external party with access to the other player's output, both challenge values, and the post-measurement quantum state.
If this probability is strictly less than~1, then player~1's output contains genuine randomness that cannot be recovered from any combination of the other player's information.
The following result establishes that blind local randomness is present whenever the strategy achieves a score above the classical optimum.

\begin{theorem}[Proposition~6 of \cite{Miller2017}]
\label{thm:blind_randomness}
Let $G = (p, \omega)$ be a complete-support two-player non-local game with uniform inputs.
For any strategy $\Gamma$ with post-measurement state
$\sigma_{X_1 X_2 A_1 A_2 Q_1' Q_2'}$, let
$\delta = 1 - \pg(X_1 \mid X_2 A_1 A_2 Q_2')_\sigma$.
Then there exists a classical correlation $\bar{p}(x_1, x_2 \mid a_1, a_2)$ such that
\begin{equation}
    \frac{1}{|\mathcal{A}_1||\mathcal{A}_2|}
    \sum_{a_1, a_2, x_1, x_2}
    \bigl|p(x_1, x_2 \mid a_1, a_2)
          - \bar{p}(x_1, x_2 \mid a_1, a_2)\bigr|
    \leq \sqrt{3\delta}\,|\mathcal{A}_1|.
\end{equation}
In particular, $\omegaGstrat{\Gamma} - \omegacl \leq \sqrt{3\delta}\,|\mathcal{A}_1|$.
\end{theorem}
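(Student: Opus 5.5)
The plan is to turn the near-perfect guessability of $X_1$ from player~2's side into a local hidden-variable model. Let $\{T^{a_1a_2x_2}_{x_1}\}$ be the optimal guessing POVMs on $Q_2'$ achieving $\pg(X_1\mid X_2A_1A_2Q_2')=1-\delta$. Because inputs are uniform and $G$ has complete support, write $\delta=\frac{1}{|\mathcal{A}_1||\mathcal{A}_2|}\sum_{a_1,a_2}\delta_{a_1a_2}$, where $\delta_{a_1a_2}$ is the guessing error for that input pair. I would then fold player~2's game measurement and the guess into a single measurement on $Q_2$,
\begin{equation}
M^{a_1a_2}_{x_1x_2} := \bigl(B^{a_2}_{x_2}\bigr)^{1/2}\, T^{a_1a_2x_2}_{x_1}\, \bigl(B^{a_2}_{x_2}\bigr)^{1/2}.
\end{equation}
By construction, $\Pr[\text{player 1 outputs } x_1,\ M \text{ outputs } (x_1',x_2)]$ is concentrated on $x_1=x_1'$ up to error $\delta_{a_1a_2}$.

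Next, for each fixed $a_2$, player~2 applies the measurements $M^{a_1a_2}$ sequentially on $Q_2$ for all $a_1\in\mathcal{A}_1$. This produces a table $\lambda_{a_2}:\mathcal{A}_1\to\mathcal{X}_1$ together with a value of $x_2$ taken from the first measurement. The gentle-measurement lemma, in the form \emph{small error implies small disturbance in $\sqrt{\cdot}$}, controls the damage. Each step in the sequence is nearly non-disturbing on the relevant state, with disturbance of order $\sqrt{\delta_{a_1a_2}}$. Summing over the $|\mathcal{A}_1|$ steps and applying Cauchy--Schwarz to the average gives a total deviation of order $|\mathcal{A}_1|\sqrt{\delta}$. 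As a result, the joint statistics of $(\lambda_{a_2}(a_1),x_2)$ approximate $p(x_1,x_2\mid a_1,a_2)$ in averaged $\ell_1$ distance. The constant $3$ should come from combining three error terms: the guessing error of the step actually used, the disturbance caused by the steps before it, and the stability of the $x_2$ outcome.

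The step I expect to be hardest is making this model genuinely local. The table $\lambda_{a_2}$ is produced by a procedure that depends on $a_2$, but player~1's output may only depend on $a_1$ and shared randomness. My first attempt would use a single reference input $a_2^*$. Sample the shared variable $\lambda$ from the distribution of $\lambda_{a_2^*}$. Player~1 outputs $\lambda(a_1)$, and player~2 outputs $x_2$ drawn from $p(x_2\mid a_2,\lambda)$. Here $p(x_2\mid a_2,\lambda)$ is defined by the coupling obtained from measuring $B^{a_2}$ first and then the guessing measurements. To justify the coupling I would use no-signalling: the reduced state of player~1, and hence the table of player~1's responses, does not depend on $a_2$. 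Combined with near-perfect guessability for \emph{every} $a_2$, which complete support guarantees, this should show that the distributions of $\lambda_{a_2}$ and $\lambda_{a_2^*}$ are close. Concretely, both are close to the distribution of player~1's own outputs obtained by sequentially measuring all inputs on $Q_1$. The triangle inequality then gives the claimed bound on $\bar p$.

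Finally, $\omega(\Gamma)-\omegacl\le\sqrt{3\delta}\,|\mathcal{A}_1|$ follows immediately. Since $\omega\in[0,1]$ and the input distribution is uniform, the score difference between $p$ and $\bar p$ is at most the averaged $\ell_1$ distance, and $\omega(\bar p)\le\omegacl$ because $\bar p$ is a classical correlation.
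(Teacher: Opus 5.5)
\textbf{Gap: the locality step.} The missing piece is exactly the step you flag as hardest: showing that the law of the table $\lambda_{a_2}$ built from player~2's sequential guesses is (nearly) independent of $a_2$. Neither ingredient you invoke delivers this.

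\begin{itemize}
\item No-signalling only fixes $\rho_{Q_1}$.
\item Near-perfect guessability only couples \emph{one} guess with \emph{one} measurement of player~1 on the undisturbed state. Once either side measures sequentially, that coupling is lost.
\end{itemize}

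For instance, take $\ket{\Phi^+}$, trivial $B$, $A^{1}=T^{1}=Z$ and $A^{2}=T^{2}=X$. Here $\delta=0$, yet if both players measure sequentially, their second table entries agree only with probability $1/2$. The two tables do have the same law in this example. Proving closeness of laws in general, however, needs an operator-transfer estimate iterated along the sequence. Its basic form is that $\sum_x\norm{(A^{a_1}_x\otimes\mathbb{I}-\mathbb{I}\otimes T^{a_1a_2x_2}_x)\ket{\psi_{x_2}}}^2$ equals twice the conditional guessing error, for the purified post-$B$ state and a Naimark-dilated $T$. Iterating it matches player~2's forward-order table with player~1's \emph{reverse}-order table. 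None of this is in your proposal.

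\textbf{Secondary issues.} The gentle-measurement lemma does not give the per-step non-disturbance you claim, because the outcomes of $T^{a_1a_2x_2}$ are not near-deterministic. What you need is the role-swapped version of Lemma~\ref{lem:commutation-base}, applied to a projective (dilated) $T$.

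Even if patched, your route would not give the stated $\sqrt{3\delta}\,|\mathcal{A}_1|$. The extra comparison of $\lambda_{a_2}$ with $\lambda_{a_2^*}$ adds error terms. You would also have to choose $a_2^*$ to minimize $\sum_{a_1}\sqrt{\delta_{a_1a_2^*}}$; otherwise you lose a factor $\sqrt{|\mathcal{A}_2|}$. Your ``three error terms'' account of the constant does not correspond to an actual estimate. Folding $B^{a_2}$ into every $M^{a_1a_2}$ is also an unnecessary complication: it is the sole source of your $x_2$-stability term.

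\textbf{The fix used in the paper.} The difficulty disappears if the table is built by the player whose output is being guessed. The paper does this, following Miller and Shi, in its $k$-player generalization, Theorem~\ref{thm:blind-randomness-single}.

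\begin{itemize}
\item Player~1 pre-measures its own $Q_1$ with $\{A^{1}_{x}\},\{A^{2}_{x}\},\dots$ in a fixed order and stores the outcomes.
\item Player~2 simply measures $B^{a_2}$ on its untouched $Q_2$.
\end{itemize}

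The law of the table is then independent of $a_2$ by construction, so $\bar p$ is local with no comparison step at all. The only estimate needed is that each pinching $\Phi^{Q_1}_{a_1'}$ is $(2\sqrt{\delta_{a_1'a_2}}+\delta_{a_1'a_2})$-commutative with the state $\Phi^{X_2}_{a_2}(\rho)$ on $Q_1X_2$. This is Lemma~\ref{lem:commutation-base} with $C=X_2A_1A_2$.

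From there the bound follows in four moves:
\begin{itemize}
\item telescope $\Phi^{X_1}_{a_1}\circ\Phi^{Q_1}_{a_1-1}\circ\cdots\circ\Phi^{Q_1}_{1}$ against $\Phi^{X_1}_{a_1}$;
\item use contractivity of CPTP maps;
\item use $2\sqrt{\delta}+\delta\le 3\sqrt{\delta}$;
\item apply Cauchy--Schwarz.
\end{itemize}
This gives $3C_1\sqrt{\delta}$ with $C_1^2=(|\mathcal{A}_1|-1)(2|\mathcal{A}_1|-1)/6\le|\mathcal{A}_1|^2/3$, which is the stated $\sqrt{3\delta}\,|\mathcal{A}_1|$.

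Your final step, from the correlation bound to $\omega(\Gamma)-\omegacl$, is fine.
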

This results relies on the following lemma on commutativity, where $\Phi(.)$ is $\varepsilon$-commutative with $\rho$ if $\norm{\Phi(\rho)-\rho}_1\leq\varepsilon$.
\begin{lemma}[Proposition 5 of Ref.~\cite{Miller2017}]
\label{lem:commutation-base}
Let $\rho_{Q_1 Q_2 C}$ be a quantum state that is classical on $C$. Suppose $\{A_x\}_x$ is a projective measurement on $Q_1$ such that
\begin{equation*}
    \sigma_{X Q_2 C} = \sum_x \ketbra{x}{x}_X \otimes \Tr_{Q_1}\!\left[(A_x \otimes \mathbb{I})\rho_{Q_1 Q_2 C}\right].
\end{equation*}
If $\pg(X \mid Q_2 C)_\sigma = 1 - \delta$, then the pinching map
$\Phi(\gamma) = \sum_x (A_x \otimes \mathbb{I})\gamma(A_x \otimes \mathbb{I})$ is
$(2\sqrt{\delta} + \delta)$-commutative with
$\Tr_{Q_2}[\rho_{Q_1 Q_2 C}]$.
\end{lemma}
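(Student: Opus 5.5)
The plan is to compare $\rho$ with two auxiliary states that both come from the optimal guessing measurement on $Q_2C$. We then trace out $Q_2$ at the end, using that partial traces are trace-norm contractive and commute with the pinching $\Phi$, which acts only on $Q_1$. Let $\{T^c_x\}_x$ be an optimal guessing POVM on $Q_2$ for each value $c$ of the classical register, as in Definition~\ref{def:guessing_prob}. We collect these into the operator $T_x := \sum_c T^c_x\otimes\ketbra{c}{c}_C$ on $Q_2C$, so that the hypothesis reads
\begin{equation*}
\sum_x \Tr\bigl[(A_x\otimes T_x)\rho_{Q_1Q_2C}\bigr] = 1-\delta .
\end{equation*}
Define the Kraus operators $K_x := \mathbb{I}_{Q_1}\otimes\sqrt{T_x}$ and the two states
\begin{equation*}
\tilde\rho := \sum_x K_x\rho K_x^\dagger, \qquad \hat\rho := \sum_x (A_x\otimes\mathbb{I})K_x\rho K_x^\dagger(A_x\otimes\mathbb{I}).
\end{equation*}

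\textbf{Step 1: the marginal is unchanged.} The map $\gamma\mapsto\sum_x K_x\gamma K_x^\dagger$ is a channel acting on $Q_2$ controlled by the classical $C$. It preserves both $C$ and the $Q_1C$ marginal, because $\sum_x T^c_x=\mathbb{I}$ for each $c$. Hence $\Tr_{Q_2}\tilde\rho=\rho_{Q_1C}:=\Tr_{Q_2}[\rho_{Q_1Q_2C}]$, and therefore also $\Tr_{Q_2}\Phi(\tilde\rho)=\Phi(\rho_{Q_1C})$.

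\textbf{Step 2: pinching versus the diagonal terms.} Expanding the pinching gives
\begin{equation*}
\Phi(\tilde\rho)=\sum_{x,y}(A_y\otimes\sqrt{T_x})\rho(A_y\otimes\sqrt{T_x}).
\end{equation*}
The difference $\Phi(\tilde\rho)-\hat\rho$ is exactly the sum of the terms with $y\neq x$. This is a positive operator, so its trace norm equals its trace:
\begin{equation*}
\sum_{y\neq x}\Tr\bigl[(A_y\otimes T_x)\rho\bigr]=1-(1-\delta)=\delta .
\end{equation*}
Thus $\norm{\Phi(\tilde\rho)-\hat\rho}_1=\delta$.

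\textbf{Step 3: gentle measurement.} This is the main step, and the place where projectivity of $\{A_x\}$ is used. Set $\omega_x:=K_x\rho K_x^\dagger$, with $q_x:=\Tr\omega_x$ and $\delta_x:=q_x-\Tr[(A_x\otimes\mathbb{I})\omega_x]\ge 0$. Then $\sum_x q_x=1$ and $\sum_x\delta_x=\delta$. The gentle measurement lemma for a projector $P$ and an unnormalised state $\omega$ gives
\begin{equation*}
\norm{\omega-P\omega P}_1\le 2\sqrt{\Tr\omega\,\bigl(\Tr\omega-\Tr[P\omega]\bigr)} .
\end{equation*}
To obtain it, one can purify $\omega$ and bound the trace distance of pure states by their fidelity; for unnormalised $\omega$ the bound follows by rescaling. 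Applying it with $P=A_x\otimes\mathbb{I}$, the triangle inequality and Cauchy--Schwarz yield
\begin{equation*}
\norm{\tilde\rho-\hat\rho}_1\le\sum_x 2\sqrt{q_x\delta_x}\le 2\sqrt{\Bigl(\sum_x q_x\Bigr)\Bigl(\sum_x\delta_x\Bigr)}=2\sqrt{\delta}.
\end{equation*}

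\textbf{Conclusion.} By Step 1 and the contractivity of $\Tr_{Q_2}$ under the trace norm,
\begin{equation*}
\norm{\Phi(\rho_{Q_1C})-\rho_{Q_1C}}_1=\norm{\Tr_{Q_2}\bigl[\Phi(\tilde\rho)-\tilde\rho\bigr]}_1\le\norm{\Phi(\tilde\rho)-\hat\rho}_1+\norm{\hat\rho-\tilde\rho}_1 .
\end{equation*}
Steps 2 and 3 bound the right-hand side by $\delta+2\sqrt{\delta}$, which is the claimed $(2\sqrt\delta+\delta)$-commutativity.
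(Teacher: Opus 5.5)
The paper does not prove this lemma. It imports it as Proposition~5 of Ref.~\cite{Miller2017}, so your argument has to stand as a self-contained proof, and it does.

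Step~1 uses the classicality of $C$ where it is needed: the $C$-controlled channel $\sum_x K_x(\cdot)K_x^\dagger$ would otherwise dephase the $Q_1C$ marginal. Step~2 uses $\sum_y A_y=\mathbb{I}$ and $A_y^2=A_y$ to show that the off-diagonal leakage $\Phi(\tilde\rho)-\hat\rho$ is a positive operator of trace exactly $\delta$. Step~3, combined with Cauchy--Schwarz, gives $2\sqrt{\delta}$. Together these reproduce the constant $2\sqrt{\delta}+\delta$.

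Two minor comments.

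First, the dilation to $\tilde\rho,\hat\rho$ on $Q_1Q_2C$ is not needed. Set $\rho_x:=\Tr_{Q_2}[(\mathbb{I}\otimes T_x)\rho_{Q_1Q_2C}]$. This is positive semidefinite because $T_x$ is block diagonal in the classical register $C$, and $\sum_x\rho_x=\rho_{Q_1C}$. Your Steps~2 and~3 then apply verbatim to the $\rho_x$ on $Q_1C$, with the same $q_x=\Tr\rho_x$ and $\delta_x=\Tr[(\mathbb{I}-A_x)\rho_x]$. The final partial-trace contraction step then disappears.

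Second, be careful with your one-line justification of the unnormalised gentle measurement bound. One way to read ``bound by fidelity'' is: compare with the normalised post-measurement state, then use a triangle inequality to account for the norm deficit. That route costs an extra $\epsilon$ per application and would only yield $2\sqrt{\delta}+2\delta$ overall. The sharp bound comes from a direct computation. For a unit vector $\ket{\psi}$ with $\norm{P\ket{\psi}}^2=1-\epsilon$, the rank-two operator $\ketbra{\psi}{\psi}-P\ketbra{\psi}{\psi}P$ has trace norm $\sqrt{4\epsilon-3\epsilon^2}\le 2\sqrt{\epsilon}$. Contractivity of the partial trace over the purifying system and rescaling by $q_x$ then give exactly the inequality you use.
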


%% file: Appendix/Overall_Security.tex
\section{Protocol Security}
\label{app:protocol_security}

The overall protocol soundness depends on four ingredients, namely (1) the QROM reduction, (2) the geometric conditions and reduction to a non-local game, (3) the single-round mismatch-score tradeoffs, and (4) the finite-size analysis. 
The full security guarantees, Corollaries~\ref{cor:overall_model1}
and~\ref{cor:overall_model2}, are proved in Methods.
Here we establish the four supporting ingredients in detail.

%% file: Appendix/rom.tex
\subsection{QROM Reduction in Split-and Hash Paradigm}
\label{app:qrom}

\begin{figure}
    \centering
    \begin{subfigure}[b]{0.67\textwidth}
        \includegraphics[width=\textwidth]{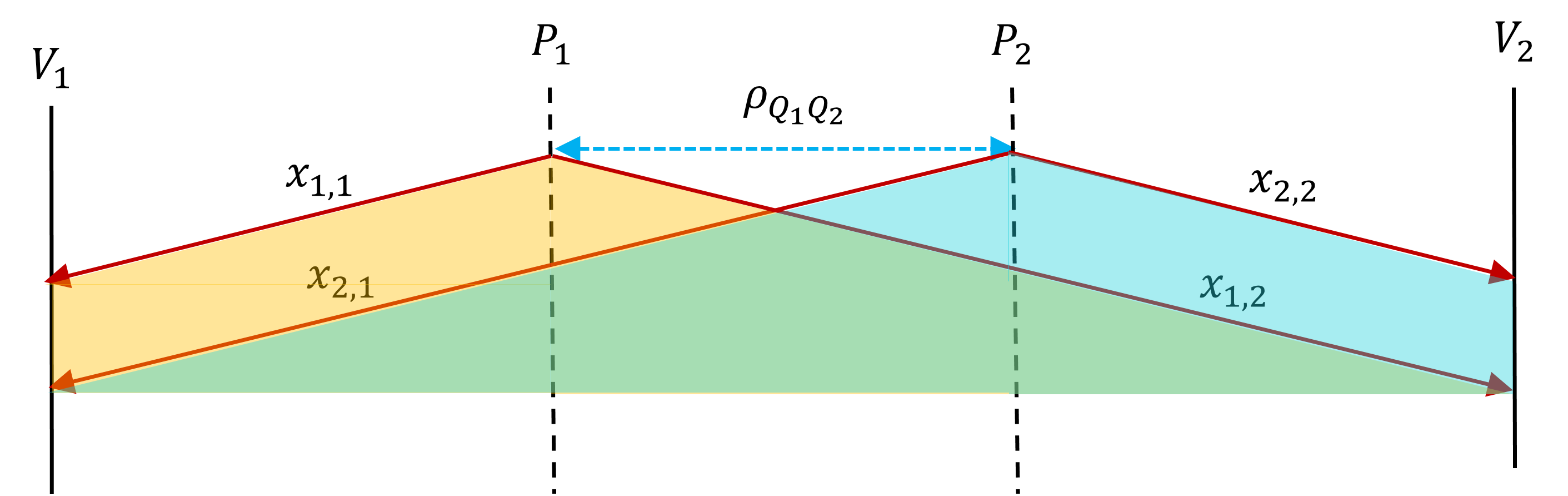}
        \caption{Simplified protocol $\tildeprotG$ for a two-prover two-verifier instantiation. }
    \end{subfigure}
    \begin{subfigure}[b]{0.3\textwidth}
        \includegraphics[width=\textwidth]{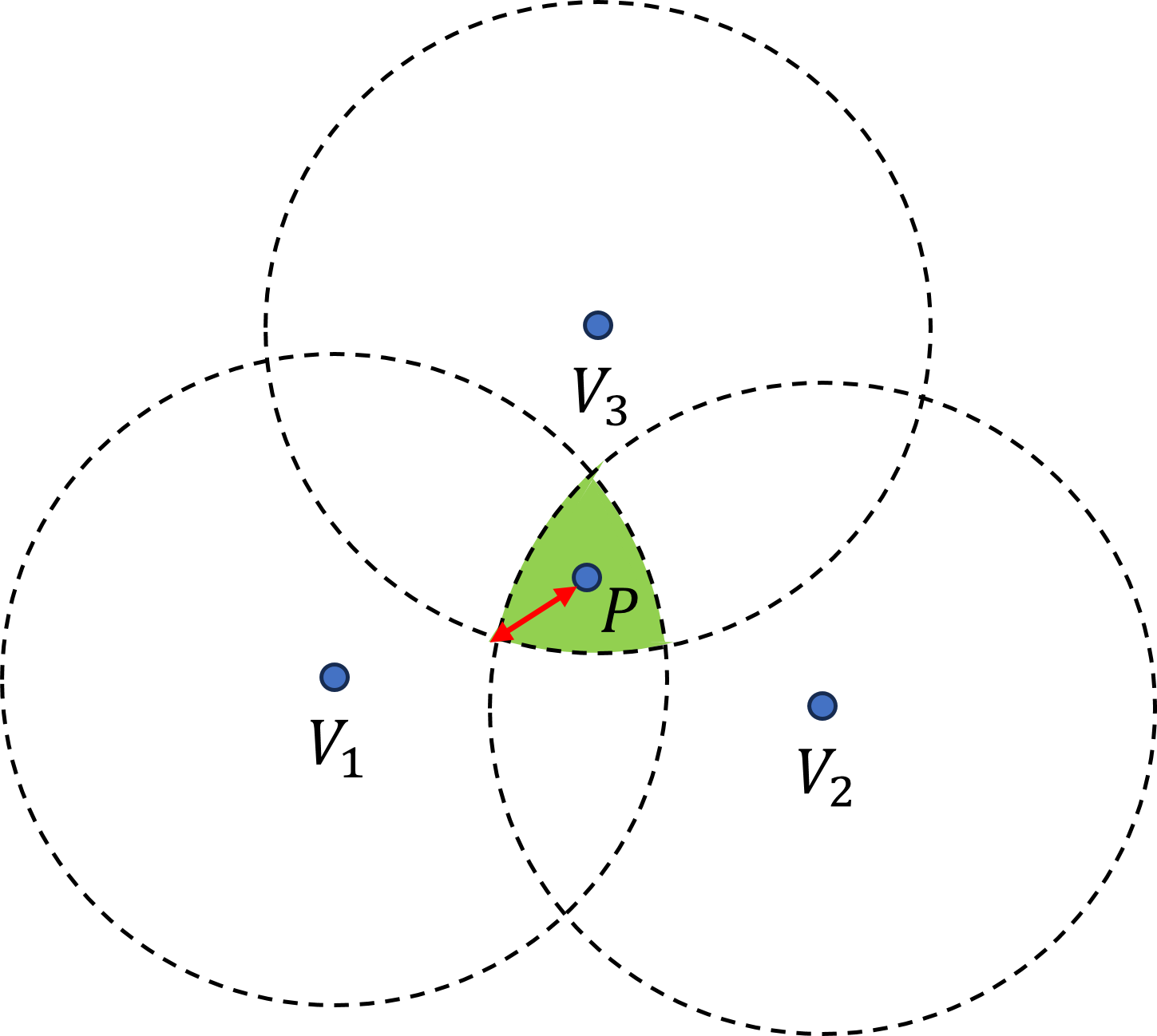}
        \caption{Illustration of a region $\Ctildeplus{a}$ for the challenge to prover $P$ in 2-dimensional space at a slice of time $t>t_0$. }
    \end{subfigure}
    \caption{Regions representing the modified challenge light cone. In (a), the challenges are replaced by regions $\Ctildeplus{a_1}$ [in orange] and $\Ctildeplus{a_2}$ [in cyan]. In (b), the dotted line denotes the cross-section of the light cone of the challenges, and the region in green represents $\Ctildeplus{a}$ and we note that the region expands faster than $c$ in the direction of the red arrow.}
    \label{fig:QROM_Figure}
\end{figure}

The split-and-hash paradigm introduced by Unruh~\cite{Unruh14} applies to single-prover position-based cryptography protocols, and we extend it to the multi-prover case here.
We demonstrate this for our CVPV protocol $\protG$, but it applies to other position-based protocols as well.
Recall Protocol $\protG$ uses hash functions $f_i : \{0,1\}^{n m} \to A_i$ for $i \in [k]$, each modeled as a random oracle in the QROM. 
For a fixed prover $P_i$, the challenge shares $a_{i,1}, \ldots, a_{i,m}$ originate from verifiers at positions $\ell_{V_1}, \ldots, \ell_{V_m}$ respectively. 
The source event of share $a_{i,j}$ is
\begin{equation}
    z_{a_{i,j}} = \bigl(t_0 - d(\ell_{V_j}, \ell_{P_i}),\; \ell_{V_j}\bigr),
\end{equation}
and the modified challenge light cone of $a_i$, $\Ctildeplus{a_i}$, is the earliest spacetime region in which all shares of $a_i$ are simultaneously available. 
The simplified protocol $\tildeprotG$ is obtained from $\protG$ by removing the challenge shares $a_{i,1}, \ldots, a_{i,m}$ and the hash oracle $f_i$, and instead making the challenge $a_i$ directly accessible to any party within the modified challenge light cone $\Ctildeplus{a_i}$.

We now prove Theorem~\ref{thm:qrom_single}, which bounds the soundness penalty incurred by replacing the hashed challenge shares with direct access to $a_i$ within the modified light cone $\tilde{C}^+(a_i)$.
\begin{proof}[Proof of Theorem~\ref{thm:qrom_single}]
The proof proceeds via a sequence of hybrid protocols, $\mathrm{Hyb}_0$, $\mathrm{Hyb}_1$, $\mathrm{Hyb}_2$, and $\mathrm{Hyb}_3$, defined as follows. 
The challenge shares for challenge $a_i$ are $a_{i,1}, \ldots, a_{i,m}$, where each $a_{i,j} \in \{0,1\}^{n_i}$ is a uniform independent bitstring with source event $z_{a_{i,j}}$.
The reconstructed challenge is $a_i = f_i(a_{i,1}, \ldots, a_{i,m})$ at events where all challenge shares have arrived, i.e.\ within $\Ctildeplus{a_i}$.

\begin{itemize}
    \item $\mathrm{Hyb}_0$: This is Protocol $\protG$ run  against adversaries that make at most $q$ queries in total to $f_i$.

    \item $\mathrm{Hyb}_1$: In this hybrid, we sample a uniformly     random hash output $w \in \{0,1\}^n$. 
    Outside the spacetime region $\Ctildeplus{a_i}$, we replace the oracle $f_i$ by the punctured oracle $f_i^\perp$ defined as
    \begin{equation}
        f_i^\perp(z) = \begin{cases}
            f_i(z), & z \neq (a_{i,1}, \ldots, a_{i,m}), \\
            w, & z = (a_{i,1}, \ldots, a_{i,m}).
        \end{cases}
        \label{eq:punctured_oracle}
    \end{equation}
    Inside $\Ctildeplus{a_i}$, the oracle $f_i$ is left unchanged.

    \item $\mathrm{Hyb}_2$: In this hybrid, we replace $f_i$ with   $f_i^\perp$ everywhere, including inside $\Ctildeplus{a_i}$. In addition, we introduce the reconstructed challenge $a_i$ with   modified future light cone $\Ctildeplus{a_i}$, making it directly accessible to any party within that region.

    \item $\mathrm{Hyb}_3$: This hybrid is equivalent to the protocol $\tildeprotG^{(i)}$. We remove the oracle $f_i^\perp$ and the challenge shares $a_{i,1}, \ldots, a_{i,m}$ completely from the protocol, but $a_i$ remains with modified future light cone $\Ctildeplus{a_i}$.
\end{itemize}

Let $p_i$ denote the optimal success probability in $\mathrm{Hyb}_i$ of a set of adversaries making at most $q$ queries. 
We establish the following three claims, which together imply the theorem.

\begin{claim}
\label{claim:hyb01}
$|p_0 - p_1| \leq q\sqrt{m} \cdot 2^{-n/2-1}$.
\end{claim}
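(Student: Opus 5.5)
The plan is to apply the one-way-to-hiding style reprogramming bound of Lemma~\ref{lem:bbbv}. Condition on the challenge shares $a_{i,1},\ldots,a_{i,m}$ and on the oracle $f_i$. Hybrids $\mathrm{Hyb}_0$ and $\mathrm{Hyb}_1$ differ only in the answers to queries made outside $\Ctildeplus{a_i}$ on the single input $z^*=(a_{i,1},\ldots,a_{i,m})$. At those queries $f_i$ is reprogrammed from $f_i(z^*)$ to the fresh value $w$. Take $\mathcal{F}$ to be the set of pairs $(t,z^*)$ in which the $t$-th query is made at a gate outside $\Ctildeplus{a_i}$. Lemma~\ref{lem:bbbv} then gives $\TD \le \tfrac12\sqrt{q\sum_{(t,z^*)\in\mathcal F}W_{z^*}(\ket{\phi_t})}$. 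Since the acceptance probability changes by at most the trace distance of the final states, it suffices to show that the expected total query weight on $z^*$ outside $\Ctildeplus{a_i}$ is at most $q\,m\,2^{-n}$. Taking the square root then gives $q\sqrt{m}\,2^{-n/2-1}$. Because the bound on $\TD$ is concave in the weight, Jensen's inequality lets us average over the shares before taking the root.

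The key step is the bound on the query weight. Any gate outside $\Ctildeplus{a_i}=\bigcap_j \Cplus{z_{a_{i,j}}}$ lies outside $\Cplus{z_{a_{i,j}}}$ for at least one $j$. By the causality constraint of Definition~\ref{def:spacetime_circuit}, the state entering such a gate is independent of the uniformly random share $a_{i,j}$. I would therefore split the outside queries into $m$ classes, assigning each query to one index $j$ for which the gate misses $\Cplus{z_{a_{i,j}}}$. Fix a query in class $j$. Its query register is independent of $a_{i,j}\in\{0,1\}^{n}$, so the weight placed on inputs whose $j$-th block equals $a_{i,j}$ is, in expectation over $a_{i,j}$, at most $2^{-n}$ times that query's total weight of $1$. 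The weight on $z^*$ is at most this, so summing over at most $q$ queries in each class gives an expected total weight of at most $q\,m\,2^{-n}$ (or $q\,2^{-n}$ if the classes are disjoint). This is the required bound.

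The main obstacle is justifying the independence claim rigorously while the oracle itself depends on $z^*$. In $\mathrm{Hyb}_0$ the oracle is unpunctured, so $f_i$ is independent of the shares. Reasoning in $\mathrm{Hyb}_0$, the reduced state at a gate outside $\Cplus{z_{a_{i,j}}}$ is then a function of $f_i$ and the other shares only, and I would use $\mathrm{Hyb}_0$ as the reference run $\ket{\phi_t}$ in Lemma~\ref{lem:bbbv}. A second difficulty is that spacetime gates are only partially ordered while Lemma~\ref{lem:bbbv} assumes a sequential algorithm. I would handle this by fixing any linear extension of the causal order, which does not change the output distribution. The lemma applies to a fixed $(f_i,\text{shares})$, and the final expectation over the shares is handled by the Jensen step above. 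If needed, Lemma~\ref{lem:zhandry} replaces $f_i$ by a $2q$-wise independent function to keep everything finite.
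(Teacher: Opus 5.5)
Your proposal is correct and is essentially the paper's argument. Both apply Lemma~\ref{lem:bbbv} to the queries on $z^*$ made outside $\Ctildeplus{a_i}$, and both control the query weight share-by-share, using that $a_{i,j}$ is information-theoretically hidden outside $\Cplus{z_{a_{i,j}}}$.

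The only difference is presentation. The paper argues by contradiction, with a union bound over the $m$ shares and an extractor that measures a uniformly random query. You instead bound the expected weight directly in $\mathrm{Hyb}_0$ and average over the shares with Jensen. This makes the averaging step explicit, and because your class assignment is disjoint, it even yields the sharper weight bound $q\,2^{-n}$.
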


\begin{proof}
Suppose for the sake of contradiction that $|p_0 - p_1| > \delta := q\sqrt{m} \cdot 2^{-n/2-1}$. 
By Lemma~\ref{lem:bbbv}, the query weight of the adversary on the input $(a_{i,1}, \ldots, a_{i,m})$ outside $\Ctildeplus{a_i}$ must be greater than $(2\delta)^2/q = qm \cdot 2^{-n}$. 
By the union bound over the $m$ shares, there exists some $j \in [m]$ such that the query weight on $(a_{i,1}, \ldots, a_{i,m})$ outside $\Cplus{z_{a_{i,j}}}$ is greater than $q \cdot 2^{-n}$.

Observe that $a_{i,j}$ is information-theoretically hidden from any party outside $\Cplus{z_{a_{i,j}}}$, since $a_{i,j}$ is a uniformly random bitstring that is only transmitted from its source event $z_{a_{i,j}}$.
However, the following extractor $\mathcal{A}$ can extract $a_{i,j}$ using only the adversarial parties located outside $\Cplus{z_{a_{i,j}}}$: 
\begin{enumerate}
    \item $\mathcal{A}$ picks a uniformly random index $r \in [q]$, runs the adversarial parties outside $\Cplus{z_{a_{i,j}}}$, and halts the simulation just before the $r$-th oracle query.
    \item If fewer than $r$ queries are made, $\mathcal{A}$ aborts.
\end{enumerate}
By the query weight argument above, the probability that $\mathcal{A}$ outputs the correct $a_{i,j}$ is at least the average weight of a random query, which is greater than $2^{-n}$. 
This contradicts the fact that $a_{i,j}$ is information-theoretically hidden from $\mathcal{A}$, completing the proof.
\end{proof}

\begin{claim}
\label{claim:hyb12}
$p_2 \geq p_1$.
\end{claim}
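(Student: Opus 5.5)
The plan is to show that every adversary against $\mathrm{Hyb}_1$ can be converted, without loss in success probability or extra queries, into an adversary against $\mathrm{Hyb}_2$. In other words, $\mathrm{Hyb}_2$ hands the adversary at least as much power as $\mathrm{Hyb}_1$. The two hybrids differ only inside $\Ctildeplus{a_i}$. There, $\mathrm{Hyb}_1$ answers queries with the true oracle $f_i$, while $\mathrm{Hyb}_2$ answers with $f_i^\perp$ but additionally supplies the reconstructed challenge $a_i$ directly. Outside $\Ctildeplus{a_i}$ both hybrids use $f_i^\perp$, so the adversarial parties there behave identically.

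First I would pin down the joint distribution of the relevant objects. In $\mathrm{Hyb}_1$ the verifiers use the honest challenge $a_i = f_i(a_{i,1},\ldots,a_{i,m})$, while the punctured value $w$ is independent and uniform. Because $f_i$ is a random oracle, $f_i$ restricted to the point $\vec a_{i} := (a_{i,1},\ldots,a_{i,m})$ is uniform and independent of $f_i$ on all other inputs. Hence the pair $(f_i^\perp, a_i)$ in $\mathrm{Hyb}_1$ has the same distribution as the pair $(f_i^\perp, a_i)$ in $\mathrm{Hyb}_2$, where $a_i$ is fresh uniform and is revealed only inside $\Ctildeplus{a_i}$. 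In both cases $f_i^\perp$ agrees with a random function everywhere except at $\vec a_i$, where it takes the independent value $w$, and $a_i$ is an independent uniform value. The verifiers' checks depend only on $a_i$ and the responses.

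Next I would build the simulation. Given a $\mathrm{Hyb}_1$ adversary, the $\mathrm{Hyb}_2$ adversary runs the same agents outside $\Ctildeplus{a_i}$ unchanged. Any agent inside $\Ctildeplus{a_i}$ has both the shares $\vec a_i$, which are still physically present in $\mathrm{Hyb}_2$, and the value $a_i$. It can therefore implement the oracle $f_i$ as
\begin{equation}
\hat f(z)=\begin{cases} f_i^\perp(z), & z\neq \vec a_i,\\ a_i, & z=\vec a_i,\end{cases}
\end{equation}
with one query to $f_i^\perp$ plus a controlled correction. This correction is a unitary that classically compares the query register with $\vec a_i$ and XORs in $f_i^\perp(\vec a_i)\oplus a_i$, which costs one additional query. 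To avoid that extra query, I would instead query $f_i^\perp$ on all inputs and then apply a conditional phase-free fix $\ket{z}\ket{y}\mapsto\ket{z}\ket{y\oplus w\oplus a_i}$ on the branch $z=\vec a_i$. Here $w$ is obtained by one classical query made in advance.

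The step I expect to be the main obstacle is query accounting. The simulation needs the single value $w=f_i^\perp(\vec a_i)$, which costs at most one extra query per agent in the cone, unless the simulator exploits the fact that $w$ is just an independent uniform string. In that case the simulator can sample $w$ itself as part of its own randomness, since by the distributional argument above its value is never otherwise constrained. With that choice the simulated view coincides exactly with $\mathrm{Hyb}_1$, the verifier's acceptance event has the same probability, and taking the supremum over adversaries gives $p_2 \ge p_1$.
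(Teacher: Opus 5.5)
Your plan matches the paper's proof. Agents outside $\Ctildeplus{a_i}$ run unchanged, and agents inside use the still-present shares $\vec a_i=(a_{i,1},\ldots,a_{i,m})$ and the supplied $a_i$ to reprogram $f_i^\perp$ at $\vec a_i$ so that it outputs $a_i$.

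The gap is the step you use to close the argument: the simulator cannot sample $w$ itself. The value $w=f_i^\perp(\vec a_i)$ is fixed by the hybrid, and it is what the real oracle returns to every agent, inside and outside the cone. Suppose you query $f_i^\perp$ and then XOR in $w'\oplus a_i$ on the branch $z=\vec a_i$, with $w'$ self-sampled. The simulated oracle then returns $w\oplus w'\oplus a_i$ at $\vec a_i$, which is uniform and independent of the verifiers' challenge $a_i$. In $\mathrm{Hyb}_1$, by contrast, an in-cone agent that queries $\vec a_i$ receives exactly $a_i$; this is how an agent mimicking the honest prover learns its challenge. So the joint distribution of the simulated view and the challenge differs from $\mathrm{Hyb}_1$, and the success probability can drop accordingly. ``Uniform and independent of $a_i$'' does not mean ``freely choosable by the simulator''. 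Also, your intermediate version still spends a classical query to learn $w$, so it does not avoid the extra query.

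The reprogramming can be implemented exactly, with one query per simulated query and no knowledge of $w$. Keep an ancilla $R$ in $\ket{\phi}=2^{-n/2}\sum_{r\in\{0,1\}^n}\ket{r}$. Controlled on $z=\vec a_i$, swap the output register $Y$ with $R$; apply $U_{f_i^\perp}$; swap back controlled on $z=\vec a_i$; finally XOR $a_i$ into $Y$ controlled on $z=\vec a_i$. Since $\ket{\phi}$ is invariant under every shift $r\mapsto r\oplus w$, the query is a no-op on the $z=\vec a_i$ branch. The net map is $\ket{z}\ket{y}\mapsto\ket{z}\ket{y\oplus\hat f(z)}$, with $\hat f=f_i^\perp$ off $\vec a_i$ and $\hat f(\vec a_i)=a_i$, and $R$ is returned unentangled. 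This gives a $\mathrm{Hyb}_2$ adversary with the same query count and identical success probability, hence $p_2\geq p_1$.

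Alternatively, you may keep the extra classical queries for $w$. Then $p_2$ must be defined without the $q$-query restriction, which is harmless here because in Claim~\ref{claim:hyb23} the reduction simulates $f_i^\perp$ itself with a suitably-wise independent function. The paper simply asserts the reprogramming step without discussing how it is implemented. Either fix fills that in; the self-sampling shortcut does not.
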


\begin{proof}
We give a reduction showing that the view of adversarial parties in $\mathrm{Hyb}_1$ can be perfectly simulated by adversarial parties in $\mathrm{Hyb}_2$, hence achieving at least the same success probability.
Let $P$ be an adversarial party in $\mathrm{Hyb}_2$ at event 
$(t, \ell) \in \mathbb{R} \times \mathbb{R}^d$, and let $P'$ be the 
corresponding party at the same spacetime event in $\mathrm{Hyb}_1$. 
We consider two cases:
\begin{enumerate}
    \item If $(t, \ell) \notin \Ctildeplus{a_i}$, then $P$ acts identically to $P'$.
    \item If $(t, \ell) \in \Ctildeplus{a_i}$, then $P$ is given access to the reconstructed challenge $a_i$ by definition of the hybrid. In addition, by definition of $\Ctildeplus{a_i}$, $P$ knows the challenge shares $a_{i,1}, \ldots, a_{i,m}$. Therefore, $P$ can use oracle access to $f_i^\perp$ as well as $(a_{i,1}, \ldots, a_{i,m}, a_i)$ to simulate the oracle $F$ for $P'$ by reprogramming $f_i^\perp$ at input $(a_{i,1}, \ldots, a_{i,m})$ to output $a_i$.
\end{enumerate}
\end{proof}

\begin{claim}
\label{claim:hyb23}
$p_3 \geq p_2$.
\end{claim}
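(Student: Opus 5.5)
To show $p_3 \geq p_2$, I will build a simulation in the opposite direction to \cref{claim:hyb12}. Any adversary that succeeds in $\mathrm{Hyb}_2$ will be converted into an adversary for $\mathrm{Hyb}_3$ with the same success probability. The key observation is that in $\mathrm{Hyb}_2$ the challenge shares $a_{i,1},\ldots,a_{i,m}$ now enter the protocol in only two ways. First, they fix the location of the puncture in $f_i^\perp$. Second, they are physically available in the light cones $\Cplus{z_{a_{i,j}}}$. The challenge $a_i$ itself is no longer derived from the shares. It is supplied directly in $\Ctildeplus{a_i}$ and is (as in $\mathrm{Hyb}_1$) uniform and independent of the oracle values $f_i(z)$ away from the puncture point. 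The verifiers' acceptance depends only on $a_i$, on the other challenges, and on the responses. It does not depend on the shares or on the oracle.

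The main step is to treat $f_i^\perp$ as a fresh random oracle that carries no information about the shares. Since $f_i$ is uniformly random, reprogramming the single point $(a_{i,1},\ldots,a_{i,m})$ to an independent uniform value $w$ gives a function $f_i^\perp$ that is again uniformly random and independent of the shares. The shares are also independent of $a_i$, because $a_i$ is only correlated with $w$, which the $\mathrm{Hyb}_2$ protocol never uses for scoring. The $\mathrm{Hyb}_3$ adversary can therefore do two things. It samples its own random oracle, or a $2q$-wise independent hash function via Lemma~\ref{lem:zhandry} so that it stays efficient and finite, and shares it among its agents as pre-shared randomness. It also samples dummy shares $a'_{i,j}$ uniformly and independently. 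It then runs the $\mathrm{Hyb}_2$ adversary with these in place of the missing objects.

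To make this legitimate in spacetime, I have to handle one point carefully. In $\mathrm{Hyb}_2$, an agent inside $\Cplus{z_{a_{i,j}}}$ sees the true share $a_{i,j}$. In the simulation, that agent must instead see the dummy $a'_{i,j}$, which is pre-shared and hence available everywhere. Making shares available earlier only enlarges what the agents can do, so access to them is not the concern. The concern is that the joint distribution of (shares, oracle, $a_i$, other challenges) must be identical in the two worlds. In both, the shares are uniform and independent of everything scored, and the oracle is uniform. Once this independence is verified, the simulated view coincides exactly with the real $\mathrm{Hyb}_2$ view at every spacetime event. Every gate placement is left unchanged, so causality is preserved, and the success probabilities are equal. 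Hence $p_3 \geq p_2$.

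I expect the main obstacle to be justifying that independence rigorously. The specific thing to rule out is any residual correlation between the puncture location and $a_i$ or the oracle. I would settle this by an explicit computation over the joint distribution of $(f_i, w, a_{i,1},\ldots,a_{i,m})$.
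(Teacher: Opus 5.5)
Your proposal is essentially the paper's proof. Before the protocol, the $\mathrm{Hyb}_3$ adversary samples dummy shares uniformly, together with an independent $2q$-wise independent hash function in place of $f_i^\perp$; it hands these to its agents and runs the $\mathrm{Hyb}_2$ adversary, and Lemma~\ref{lem:zhandry} makes the simulation perfect.

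One correction to your independence argument: $a_i$ is not correlated with $w$ at all. In $\mathrm{Hyb}_2$, $a_i$ is the original value $f_i(a_{i,1},\ldots,a_{i,m})$, which the puncturing removes from the oracle, while $w$ is fresh. So, conditioned on the shares, $f_i^\perp$ is a uniform function and $a_i$ is an independent uniform value. If $a_i$ were correlated with $w=f_i^\perp(a_{i,1},\ldots,a_{i,m})$, your fresh-oracle-plus-dummy-shares simulation would not reproduce the view.
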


\begin{proof}
We give a reduction showing that the view of adversarial parties in $\mathrm{Hyb}_2$ can be perfectly simulated by adversarial parties in $\mathrm{Hyb}_3$. 
In $\mathrm{Hyb}_2$, $f_i^\perp$ and the challenge shares $a_{i,1}, \ldots, a_{i,m}$ are independent of the rest of the protocol due to the puncturing. 
The reduction $\mathcal{A}$ in $\mathrm{Hyb}_3$ samples uniformly random values $a_{i,1}, \ldots, a_{i,m} \in \{0,1\}^{n_i}$ before the protocol starts, and independently samples a $2q$-wise independent hash function $f_i^\perp : \{0,1\}^{n_i m} \to \{0,1\}^n$. 
Using $(a_{i,1}, \ldots, a_{i,m}, f_i^\perp)$, the reduction $\mathcal{A}$ simulates the adversarial parties in $\mathrm{Hyb}_2$. 
By Lemma~\ref{lem:zhandry}, the view of the 
adversarial parties is perfectly simulated since they make at most $q$ queries in total, and a $2q$-wise independent hash function is 
perfectly indistinguishable from a random oracle against any 
$q$-query adversary.
\end{proof}

Combining Claims~\ref{claim:hyb01} to \ref{claim:hyb23} and the soundness assumption for $\tildeprotG^{(i)}$, we obtain
\begin{equation}
    p_0 \leq p_3 + q\sqrt{m} \cdot 2^{-n/2-1} \leq \epssou + q\sqrt{m} \cdot 2^{-n/2-1},
\end{equation}
which gives the stated soundness bound for Protocol $\protG$ with the single challenge $a_i$ replaced. 
\end{proof}

Recall that applying Theorem~\ref{thm:qrom_single} to all $k$ provers across all $N$ rounds via a union bound gives corollary~\ref{cor:qrom_overall} (see Methods).

%% file: Appendix/NL_Game_Reduction.tex
\subsection{Reduction to a Non-Local Game}
\label{app:reduction}

The second step is to show that any adversarial strategy for $\tildeprotG$ reduces to a valid quantum strategy for the underlying non-local game $G$. 

Recall from Definition~\ref{def:score_input_isolation} that the protocol geometry is score-input isolated with respect to the score selector $g$ and adversary-accessible region $\Radv$ if, for every $i \in [k]$ and every $s \neq i$,
\begin{equation}
    \Radv \cap \Cminus{z_{x_{i,g(i)}}} \cap \Ctildeplus{a_s} = \emptyset.
    \label{eq:score_input_isolation_app}
\end{equation}
For each score-response $x_{i,g(i)}$, define the score-output ancestor set
\begin{equation}
    \Sanc{i} := \{v \in V : z(v) \in \Radv \cap     C^-(z_{x_{i,g(i)}})\},
\end{equation}
the set of all adversary-accessible gates that can causally influence $x_{i,g(i)}$, and the score-generation gate set
\begin{equation}
    \Ssc{i} := \{v \in \Sanc{i} : 
    z(v) \in \Ctildeplus{a_i}\},
\end{equation}
the subset of gates that can both influence $x_{i,g(i)}$ and access challenge $a_i$. 
The score-generation region for prover $P_i$ is
\begin{equation}
    \Rsc{i} := \Radv \cap \Cminus{z_{x_{i,g(i)}}} \cap \Ctildeplus{a_i}.
\end{equation}
For each gate $v \in V$, define the accessible challenge set
\begin{equation}
    \Kset(v) := \{r \in [k] : z(v) \in \Ctildeplus{a_r}\}.
\end{equation}

\begin{lemma}
\label{lem:score_input_isolation_consequences}
Assume that the geometry is score-input isolated. 
Then the following three properties hold.
\begin{enumerate}
    \item Gates that can influence $X_i$ cannot depend on challenges $a_s$ for $s \neq i$:
    \begin{equation}
        v \in \Sanc{i} \implies \Kset(v) \subseteq \{i\}.
    \end{equation}
    \item For every $i \neq s$, gates that can influence $X_i$ with knowledge of $a_i$ cannot influence $X_s$:
    \begin{equation}
        \Ssc{i} \not\to \Sanc{s}.
    \end{equation}
    \item For all $i \neq s$, the score-generation gate sets are    causally disconnected and gate-disjoint:
    \begin{equation}
        \Ssc{i} \not\to \Ssc{s}, \quad 
        \Ssc{i} \cap \Ssc{s} = \emptyset.
    \end{equation}
\end{enumerate}
\end{lemma}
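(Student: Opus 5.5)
We will derive each of the three properties directly from the definitions of $\Sanc{i}$, $\Ssc{i}$ and $\Kset(v)$, combined with the score-input isolation condition \eqref{eq:score_input_isolation_app} and the causality requirement on wires in a spacetime circuit (Definition~\ref{def:spacetime_circuit}). The key auxiliary fact is transitivity of $\preceq$. If there is a directed path from gate $v$ to gate $w$, then repeated use of $(v,w)\in W \Rightarrow z(v)\preceq z(w)$ gives $z(v)\preceq z(w)$. Consequently, if $w\in\Sanc{s}$, i.e.\ $z(w)\in C^-(z_{x_{s,g(s)}})$, then $z(v)\in C^-(z_{x_{s,g(s)}})$ as well.

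For property 1, take $v\in\Sanc{i}$, so that $z(v)\in\Radv\cap C^-(z_{x_{i,g(i)}})$. Suppose some $r\neq i$ lies in $\Kset(v)$. Then $z(v)\in\Ctildeplus{a_r}$, which places $z(v)$ in $\Radv\cap\Cminus{z_{x_{i,g(i)}}}\cap\Ctildeplus{a_r}$. This set is empty by \eqref{eq:score_input_isolation_app}, a contradiction, so $\Kset(v)\subseteq\{i\}$.

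For property 2, suppose for contradiction that there is a directed path from some $v\in\Ssc{i}$ to some $w\in\Sanc{s}$ with $s\neq i$. By transitivity, $z(v)\in C^-(z_{x_{s,g(s)}})$. We also have $z(v)\in\Radv$ because $v\in\Sanc{i}$, so $v\in\Sanc{s}$. But $v\in\Ssc{i}$ also gives $z(v)\in\Ctildeplus{a_i}$, so $i\in\Kset(v)$. Property 1 applied to prover $s$ then requires $\Kset(v)\subseteq\{s\}$, which is a contradiction since $i\neq s$.

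Property 3 follows from property 2. Because $\Ssc{s}\subseteq\Sanc{s}$, the relation $\Ssc{i}\not\to\Sanc{s}$ immediately gives $\Ssc{i}\not\to\Ssc{s}$. For disjointness, a gate $v\in\Ssc{i}\cap\Ssc{s}$ would lie in $\Sanc{i}$ and would also satisfy $s\in\Kset(v)$ with $s\neq i$, contradicting property 1. Alternatively, the trivial path from $v$ to itself would contradict $\Ssc{i}\not\to\Ssc{s}$, if such a trivial path is counted.

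The only delicate point is this path convention. We read $U\not\to T$ as allowing length-zero paths, which makes disjointness a special case, but to be safe we give the separate argument via property 1 above, so the result does not depend on the convention. Otherwise the proof is routine set manipulation.
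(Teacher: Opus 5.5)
Your proof is correct and has the same structure as the paper's: Property 1 is a direct contradiction with score-input isolation, Property 2 uses causal transitivity along the directed path, and Property 3 follows from Property 2. The differences are minor. In Property 2 the paper pushes $w$ forward into $\Ctildeplus{a_i}$ (using that this set is upward-closed), whereas you pull $v$ back into $\Sanc{s}$ and invoke Property 1. For disjointness the paper tacitly relies on counting a length-zero path, and your separate argument via Property 1 avoids needing that convention.
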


\begin{proof}
\textit{Property 1.} Suppose for contradiction that there exists a gate $v \in \Sanc{i}$ such that $\Kset(v) \not\subseteq \{i\}$, i.e.\ there exists $\ell \in \Kset(v)$ with $\ell \neq i$. 
Then $z(v) \in \Ctildeplus{a_\ell}$ and $z(v) \in \Radv \cap \Cminus{z_{x_i}} \cap \Ctildeplus{a_\ell}$, contradicting score-input isolation~\eqref{eq:score_input_isolation_app}. 

\textit{Property 2.} Suppose for contradiction that there exist gates $v \in \Ssc{i}$ and $w \in \Sanc{s}$ with a directed path from $v$ to $w$ in the spacetime circuit. 
Since $v \in \Ssc{i}$, we have $z(v) \in \Ctildeplus{a_i}$. 
The directed path from $v$ to $w$ implies $z(v) \preceq z(w)$, and since $\Ctildeplus{a_i}$ is a future light cone, $z(w) \in \Ctildeplus{a_i}$. 
Moreover, $w \in \Sanc{s}$ implies $z(w) \in \Radv \cap \Cminus{z_{x_{s,g(s)}}}$. 
Therefore $z(w) \in \Radv \cap \Cminus{z_{x_s}} \cap \Ctildeplus{a_i}$, contradicting score-input isolation for score-response $x_{s,g(s)}$ and foreign challenge $a_i$. 
Hence $\Ssc{i} \not\to \Sanc{s}$.

\textit{Property 3.} Since $\Ssc{i} \not\to \Sanc{s}$ for every $i \neq s$ and $\Ssc{s} \subseteq \Sanc{s}$, we immediately obtain $\Ssc{i} \not\to \Ssc{s}$. 
If $\Ssc{i}$ and $\Ssc{s}$ shared a gate $w$, then $w \in \Sanc{s}$ and $w \in \Ssc{i}$, contradicting $\Ssc{i} \not\to \Sanc{s}$. 
Hence the score-generation gate sets are gate-disjoint and have no directed paths between distinct components.
\end{proof}

We now use Lemma~\ref{lem:score_input_isolation_consequences} to 
construct the non-local game strategy. 
Define the \emph{score-relevant preparation gate set} as the set of all gates that can influence some $x_i$ but do not have access to the corresponding challenge $a_i$,
\begin{equation}
    \Spre := \left(\bigcup_{i=1}^k \Sanc{i}\right) \setminus \left(\bigcup_{i=1}^k \Ssc{i}\right).
    \label{eq:Spre}
\end{equation}
Consider the \emph{preparation cut}, consisting of all wires leaving $\Spre$ and entering a score-generation component $\Ssc{i}$. 
For each $i \in [k]$, define $Q_i$ as the tensor product of all registers crossing the preparation cut into $\Ssc{i}$,
\begin{equation}
    Q_i := \bigotimes_{e \text{ crosses from } \Spre    \text{ into } \Ssc{i}} Q_e.
    \label{eq:Qj_def}
\end{equation}
If $\Ssc{i} = \emptyset$ for some $i$, then $x_{i,g(i)}$ is independent of $a_i$.
In this case we introduce a dummy local component that outputs $x_{i,g(i)}$ from a precomputed boundary register and ignores $a_i$, and define $Q_i$ to be that boundary register. 
Such precomputed outputs are valid local response maps for the non-local game since they correspond to a deterministic strategy that is independent of the challenge.

Define the \emph{spectator register} $E$ as the tensor product of all remaining registers produced by $\Spre$ that do not enter any score-generation component or dummy component,
\begin{equation}
    E := \bigotimes_{\substack{e \text{ leaves } \Spre \\ e \text{ does not enter any } \Ssc{i}}} Q_e.
    \label{eq:E_def}
\end{equation}
The register $E$ collects all classical and quantum systems produced during the preparation phase that are not used to generate any score output. 

Executing $\Spre$ produces a joint boundary state $\rho_{Q_1 \cdots Q_k E}$ that may be entangled across $Q_1, \ldots, Q_k, E$. 
Since the spacetime circuit is acyclic, after the preparation cut the circuit decomposes into independent subcircuits acting on disjoint boundary registers $Q_1, \ldots, Q_k$, together with the untouched spectator register $E$. 
By Property~1 of Lemma~\ref{lem:score_input_isolation_consequences}, every gate in $\Ssc{i}$ can access only challenge $a_i$. 
Therefore $\Ssc{i}$ defines a channel $\Emap{i} : Q_i \to X_i Q_i'$ that depends only on $a_i$, where $X_i$ is the classical register holding the output $x_i$, identified with the score-response $x_{i,g(i)}$ of the CVPV protocol, and $Q_i'$ is the post-measurement register. 
By Properties~2 and~3 of Lemma~\ref{lem:score_input_isolation_consequences}, the score-generation components $\Ssc{1}, \ldots, \Ssc{k}$ are gate-disjoint and have no directed paths between them, so the post-cut score-generating circuit factors as a tensor product
\begin{equation}
    \Emap{1} \otimes \cdots \otimes \Emap{k} 
    \otimes \mathbb{I}_E.
    \label{eq:tensor_product_channels}
\end{equation}
The post-score-generation state is therefore
\begin{equation}
    \rho_{AXQ'E} = \sum_{\vec{a}} p_{\vec{a}} 
    \ketbra{\vec{a}}{\vec{a}}_A \otimes 
    \left(\Emap{1} \otimes \cdots \otimes 
    \Emap{k} \otimes \mathbb{I}_E\right)
    (\rho_{Q_1 \cdots Q_k E}),
    \label{eq:post_score_state}
\end{equation}
where the sum is over all joint challenges $\vec{a} = (a_1, \ldots, a_k)$ with distribution $p_{\vec{a}}$. 
Tracing out $Q'E$ gives a conditional distribution $p(\vec{x} \mid \vec{a})$ generated by local challenge-dependent maps on a shared quantum state. 
Since the protocol score is computed from the selected score copies $x_1, \ldots, x_k$ using the same score function $\omega$ as the underlying game, this conditional distribution is a valid quantum strategy for $G$ with the same score.

This construction can be summarized as the theorem
\begin{theorem}[Reduction to non-local game]
\label{thm:reduction_nlg_full}
Consider any adversarial strategy for $\tildeprotG$ 
satisfying the assumptions of Section~\ref{sec:adversary_model}, with a geometry that is score-input isolated. 
Then there exists a valid quantum strategy $\Gamma = (\rho_{Q_1 \cdots Q_k E}, \{\channel^{a_j}_j\})$ for the underlying non-local game $G$ that achieves the same score, where $\rho_{Q_1 \cdots Q_k E}$ is a joint state across the score-generating subsystems $Q_1, \ldots, Q_k$ and a spectator register $E$, and $\channel^{a_j}_j : Q_j \to X_j Q'_j$ is a channel acting only on $Q_j$ and depending only on challenge $a_j$.
\end{theorem}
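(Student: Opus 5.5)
The plan is to make the construction preceding the statement rigorous, in four steps, using Lemma~\ref{lem:score_input_isolation_consequences} as the main tool.

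\textbf{Step 1: the preparation phase.} I fix an arbitrary adversarial spacetime circuit $\mathcal{N}=(V,W)$ for one round of $\tildeprotG$. By the finite-dimensionality assumption, every wire carries a finite-dimensional register. I then partition the gates relevant to the score copies into $\Spre$ and the components $\Ssc{1},\dots,\Ssc{k}$. By construction, no gate of $\Spre$ lies in $\Ctildeplus{a_i}$ while also lying in $\Sanc{i}$. Combined with Property~1 of Lemma~\ref{lem:score_input_isolation_consequences}, this shows that gates in $\Spre$ have access to no challenge at all. Hence the joint output of $\Spre$ on the wires crossing the preparation cut is a state $\rho_{Q_1\cdots Q_kE}$ that does not depend on $\vec{a}$. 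Prior-round systems carried over by the adversary can be absorbed into this state, since they are also challenge-independent within the round.

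\textbf{Step 2: closure of the decomposition.} I check that every wire entering $\Ssc{i}$ from outside originates either in $\Spre$ or in another $\Ssc{s}$. The second case is excluded by Property~3. I also need to rule out wires from gates outside $\bigcup_i\Sanc{i}$. Any gate feeding into $\Ssc{i}\subseteq\Sanc{i}$ is itself in $C^-(z_{x_{i,g(i)}})$, since causality is transitive. Adversary gates are assumed to lie in $\Radv$, and verifier-side inputs are exactly the challenges. So such a gate belongs to $\Sanc{i}$, and therefore to $\Spre\cup\Ssc{i}$. This step is the main obstacle: I must argue carefully that the topological ordering lets me execute all of $\Spre$ first, and that $\Spre$ does not receive inputs from any $\Ssc{s}$. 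The latter would follow from Property~2, because a path $\Ssc{s}\to\Spre\cap\Sanc{i}$ with $i\neq s$ is forbidden. For $i=s$, a gate reached from $\Ssc{i}$ lies in $\Ctildeplus{a_i}$ (future cone) and in $\Sanc{i}$, so it belongs to $\Ssc{i}$, not $\Spre$.

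\textbf{Step 3: local channels.} With closure established, composing the gates of $\Ssc{i}$ in topological order gives a CPTP map from $Q_i$ (together with the classical input $a_i$) to the output register holding $x_{i,g(i)}$ plus the remaining wires, which I collect into $Q_i'$. By Property~1 this map depends only on $a_i$, so it defines $\Emap{i}$. The dummy-component convention covers the case $\Ssc{i}=\emptyset$. Because the components $\Ssc{i}$ are gate-disjoint with no paths between them, the overall map is the tensor product $\Emap{1}\otimes\cdots\otimes\Emap{k}\otimes\mathbb{I}_E$, which yields the state in \eqref{eq:post_score_state}.

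\textbf{Step 4: conclusion.} Tracing out $Q'E$ gives $p(\vec{x}\mid\vec{a})=\Tr[(\Emap{1}\otimes\cdots\otimes\Emap{k})(\rho)]$ evaluated on the $x$-outcomes. Each channel followed by reading out the classical output is a local POVM on $Q_i$ (Naimark dilation can be used if projective measurements are needed). This is therefore a quantum strategy for $G$. Its score equals the protocol score because the verifiers apply the same $\omega$ to the same $(\vec{a},x_{1,g(1)},\dots,x_{k,g(k)})$ with challenge distribution $p_{\vec{a}}$. Rounds where the timing check fails only lower the protocol score, so the game strategy's score is at least the protocol score. If the responses are made on time, the two scores are equal.
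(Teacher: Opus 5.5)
Your proposal is correct and follows the paper's proof. Cutting the circuit at $\Spre$ gives a challenge-independent state $\rho_{Q_1\cdots Q_kE}$; Properties~1--3 of Lemma~\ref{lem:score_input_isolation_consequences} then turn each $\Ssc{i}$ into a channel $\Emap{i}$ that depends only on $a_i$, and factor the post-cut map as a tensor product. Your Step~2 closure argument (predecessors of $\Sanc{i}$ stay in $\Sanc{i}$, and $\Spre$ receives no input from any $\Ssc{s}$) and your remark that timing failures only make the game score at least the protocol score spell out details the paper leaves implicit.
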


\begin{remark}[Role of the spectator register $E$]
\label{rem:spectator}
The spectator register $E$ collects all systems produced during the preparation phase that are not used to generate any score output. 
Including $E$ in the strategy state $\rho_{Q_1 \cdots Q_k E}$ ensures that the reduction is valid for the most general adversarial strategy, including those that use pre-shared entanglement and classical side information not directly involved in score generation. 
Since $E$ is available to the non-local game strategy as additional side information independent of the local challenges, it can only increase the adversary's power in the non-local game, and therefore does not weaken the security argument. 
\end{remark}

%% file: Appendix/Blind_Randomness.tex
\subsection{Blind Randomness}
\label{app:blind_randomness}

The second step of the security reduction established that any adversarial strategy for $\tildeprotG$ satisfying score-input isolation maps to a valid quantum strategy for the underlying non-local game $G$. 
The consistency check then require the adversary to produce copy-response $x_{i,j}$ that match the committed score-response $x_{i,g(i)}$ using only the side information $I_{i,j}E$ accessible from the geometry. 
Here, we bound the probability of passing these checks in terms of the score gap above the classical optimum.

We now make this precise for general complete-support $k$-player non-local games with arbitrary input distributions. 
For each player $u$, fix an ordering of the challenge alphabet $A_u = \{1,\ldots,\abs{A_u}\}$ and define the constants
\begin{equation}
\label{eqn:Cuqa_defn}
    C_u^2 = \sum_{\vec{a}} \frac{\left(q^{(u)}_{\vec{a}}\right)^2}    {p_{\vec{a}}}, \qquad q^{(u)}_{\vec{a}} = \sum_{a'_u > a_u} p_{\vec{a}_{[k]\setminus u}, a'_u}.
\end{equation}
These constants depend only on the input distribution $p$ and measure how unevenly the marginal distribution of $a_u$ is spread across its alphabet. 
For uniform inputs, one recovers $C_u^2 = (\abs{A_u}-1)(2\abs{A_u}-1)/6$. 

\subsubsection{Auxiliary Lemmas}

The bounds on the probability of passing the checks rests on two lemmas concerning the relationship between guessing probability and the approximate commutativity of measurements, which are extensions of Lemma~\ref{lem:commutation-base} needed to handle the multi-player setting when there could be spectator subsystems.
\begin{lemma}
\label{lem:commutation-extended}
Let $\rho_{Q_1 Q_2 Q_3 C}$ be a quantum state that is classical on $C$.
Suppose $\{A_x\}_x$ is a projective measurement on $Q_2$ such that
\begin{equation*}
    \sigma_{X Q_3 C}= \sum_x \ketbra{x}{x}_X \otimes \Tr_{Q_1 Q_2}\!\left[ (\mathbb{I} \otimes A_x \otimes \mathbb{I})\rho_{Q_1 Q_2 Q_3 C}  \right].
\end{equation*}
If $\pg(X \mid Q_3 C)_\sigma = 1 - \delta$, then the superoperator $\Phi(\gamma) = \sum_x (\mathbb{I} \otimes A_x \otimes \mathbb{I})\gamma(\mathbb{I} \otimes A_x \otimes \mathbb{I})$ is $(2\sqrt{\delta} + \delta)$-commutative with $\Tr_{Q_3}[\rho_{Q_1 Q_2 Q_3 C}]$.
\end{lemma}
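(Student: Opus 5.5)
The plan is to reduce Lemma~\ref{lem:commutation-extended} directly to Lemma~\ref{lem:commutation-base} by regrouping subsystems. The measured system is $Q_2$, the guesser holds $Q_3 C$, and $Q_1$ is a spectator that neither measures nor guesses. I will absorb $Q_1$ into the measured side. Define the composite system $\hat Q_1 := Q_1 Q_2$ and the projective measurement $\hat A_x := \mathbb{I}_{Q_1}\otimes A_x$ on $\hat Q_1$. Because $\{A_x\}$ is projective, $\hat A_x$ is again a projective measurement: the operators are orthogonal projectors summing to the identity.

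With $\hat Q_2 := Q_3$ and the same classical register $C$, the state $\rho_{\hat Q_1 \hat Q_2 C}$ is exactly $\rho_{Q_1Q_2Q_3C}$. The state $\sigma_{X\hat Q_2 C} = \sum_x \ketbra{x}{x}\otimes \Tr_{\hat Q_1}[(\hat A_x\otimes\mathbb{I})\rho]$ coincides with the $\sigma_{XQ_3C}$ of the statement, since $\Tr_{\hat Q_1}=\Tr_{Q_1Q_2}$. The hypothesis $\pg(X\mid Q_3C)_\sigma = 1-\delta$ is therefore precisely the hypothesis of Lemma~\ref{lem:commutation-base} for this regrouped instance. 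Applying that lemma, the pinching $\hat\Phi(\gamma)=\sum_x(\hat A_x\otimes\mathbb{I})\gamma(\hat A_x\otimes\mathbb{I})$ is $(2\sqrt\delta+\delta)$-commutative with $\Tr_{\hat Q_2}[\rho] = \Tr_{Q_3}[\rho_{Q_1Q_2Q_3C}]$.

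It remains to identify $\hat\Phi$ with the map $\Phi$ in the statement. Here the identity on $\hat Q_2$ in Lemma~\ref{lem:commutation-base} is the identity on the unmeasured factor, which is $C$ once $Q_3$ has been traced out. Hence $\hat A_x\otimes\mathbb{I}_C = \mathbb{I}_{Q_1}\otimes A_x\otimes\mathbb{I}_C$, so $\hat\Phi=\Phi$ on the reduced state, and the conclusion follows.

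The one point to be careful about is that Lemma~\ref{lem:commutation-base} allows an arbitrary finite-dimensional first system with an arbitrary projective measurement on it. Its proof in Ref.~\cite{Miller2017} uses only projectivity and the guessing-probability bound via a fidelity or pretty-good-measurement argument, so regrouping is legitimate. If one insisted on re-deriving it, I would follow the same route. First, the guessing probability $1-\delta$ yields a measurement $\{T_x^c\}$ on $Q_3$ with $\sum_x\Tr[(A_x\otimes T_x^c)\rho^c]\ge 1-\delta$. Then a gentle-measurement style bound gives $\norm{\Phi(\rho_{Q_1Q_2C})-\rho_{Q_1Q_2C}}_1\le 2\sqrt\delta+\delta$, with $Q_1$ riding along as an inert tensor factor.

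I expect no real obstacle. The only check is that the trace norm and commutativity notion are over $Q_1Q_2C$ jointly, and this is exactly what the regrouping produces.
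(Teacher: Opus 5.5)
Your proposal is correct and takes the same approach as the paper: both regroup $Q_1Q_2$ as the measured system and $Q_3$ as the guesser's system, then apply Lemma~\ref{lem:commutation-base} directly. Your explicit check that $\{\mathbb{I}_{Q_1}\otimes A_x\}_x$ is again a projective measurement on $Q_1Q_2$ states precisely the point the paper's proof makes only informally.
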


\begin{proof}
Apply Lemma~\ref{lem:commutation-base} directly, mapping $Q_1 Q_2 \mapsto Q_1$, $Q_3 \mapsto Q_2$, and $C \mapsto C$. 
The measurement $\{A_x\}_x$ acts only on $Q_2$ and not on the full $Q_1 Q_2$, but this is consistent with the guessing probability where only $Q_3 C$ is used to guess $X$, which is generated from $Q_2$ alone. 
The presence of $Q_1$ does not affect the guessing probability or the commutation bound.
\end{proof}

\begin{lemma}[CPTP invariance of guessing probability]
\label{lem:cptp-invariance}
Let $\rho_{Q_1 Q_2 C}$ be a quantum state and let $\channel_{Q_1}$ be any CPTP map acting only on $Q_1$. 
Define $\rho'_{Q_1 Q_2 C} = \channel_{Q_1}(\rho_{Q_1 Q_2 C})$. 
Then
\begin{equation*}
    \pg(X \mid Q_2 C)_{\sigma'} = \pg(X \mid Q_2 C)_{\sigma},
\end{equation*}
where $\sigma_{Q_2 C} = \Tr_{Q_1}[\rho_{Q_1 Q_2 C}]$ and $\sigma'_{Q_2 C} = \Tr_{Q_1}[\rho'_{Q_1 Q_2 C}]$.
\end{lemma}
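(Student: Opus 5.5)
The plan is to compare the two optimization problems defining $\pg(X \mid Q_2 C)$ directly. Both are maximizations over POVM families $\{T^c_x\}_x$ acting on $Q_2$, indexed by the classical value $c$. The objective is
\[
\sum_{x,c} p_{xc}\,\Tr\bigl[T^c_x\,\sigma^{xc}_{Q_2}\bigr].
\]
It depends on the underlying state only through the classical-quantum marginal on $X Q_2 C$, that is, through the weights $p_{xc}$ and the conditional states $\sigma^{xc}_{Q_2}$. So it suffices to show that applying $\channel_{Q_1}$ leaves this marginal unchanged. Equality of the two maxima then follows immediately, since they maximize the same objective over the same feasible set.

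The key step is a partial-trace identity. For any state $\tau_{Q_1 Q_2 \cdots}$ and any CPTP map $\channel_{Q_1}$ acting only on $Q_1$,
\[
\Tr_{Q_1}\bigl[(\channel_{Q_1}\otimes\mathrm{id})(\tau)\bigr] = \Tr_{Q_1}[\tau].
\]
To verify this, write $\tau = \sum_\alpha M_\alpha \otimes N_\alpha$. Then
\[
\Tr_{Q_1}\bigl[(\channel\otimes\mathrm{id})\tau\bigr] = \sum_\alpha \Tr[\channel(M_\alpha)]\,N_\alpha = \sum_\alpha \Tr[M_\alpha]\,N_\alpha,
\]
where the second equality uses that $\channel$ is trace preserving. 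The identity holds for arbitrary operators by linearity. I would apply it blockwise to each classical branch $(x,c)$: the unnormalized operator $p_{xc}\sigma^{xc}_{Q_2}$ is obtained by projecting onto $\ketbra{x}{x}_X\otimes\ketbra{c}{c}_C$ and tracing out $Q_1$. This projection acts on registers disjoint from $Q_1$, so it commutes with $\channel_{Q_1}$. Hence each unnormalized block is identical before and after the channel. In particular $p'_{xc}=p_{xc}$ and $\sigma'^{xc}_{Q_2}=\sigma^{xc}_{Q_2}$.

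The main subtlety is bookkeeping about where $X$ lives. In the intended use, $X$ is generated by a measurement on a register that is not $Q_1$, for instance a score-generating measurement on another player's subsystem, or $X$ is already a classical register separate from $Q_1$. I would state this explicitly: $X$ and $C$ are registers disjoint from $Q_1$, so $\sigma$ and $\sigma'$ are obtained from $\rho$ and $\rho'$ by the same operations on non-$Q_1$ systems. If instead $X$ were produced from $Q_1$ after the channel, the claim would fail. So the proof will note that the map $\channel_{Q_1}$ commutes with the preparation of the $X$ register, and that the marginals coincide as operators, not just as distributions. With that settled, there is no real obstacle; the lemma is essentially no-signalling, in that local operations on $Q_1$ cannot change the reduced state accessible to a guesser holding $Q_2 C$.
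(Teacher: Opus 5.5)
Your proposal is correct and follows essentially the paper's argument: both reduce the claim to the no-signalling identity $\Tr_{Q_1}[(\channel_{Q_1}\otimes\mathrm{id})(\rho)] = \Tr_{Q_1}[\rho]$, which the paper derives from a Kraus decomposition and cyclicity of the partial trace, whereas you derive it from an operator expansion and trace preservation. Your blockwise treatment of the classical $X$ and $C$ registers, together with the remark that $X$ must not be generated from $Q_1$, is a useful clarification that the paper's statement (which only writes $\sigma_{Q_2 C}$) leaves implicit.
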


\begin{proof}
Write $\channel_{Q_1}(\cdot) = \sum_m (\hat{\kappa}_m \otimes \mathbb{I})(\cdot) (\hat{\kappa}_m^\dagger \otimes \mathbb{I})$ with Kraus operators satisfying $\sum_m \hat{\kappa}_m^\dagger \hat{\kappa}_m = \mathbb{I}_{Q_1}$. 
Then
\begin{align*}
    \sigma'_{Q_2 C}
    &= \Tr_{Q_1}\!\left[
         \sum_m (\hat{\kappa}_m \otimes \mathbb{I})
         \rho_{Q_1 Q_2 C}
         (\hat{\kappa}_m^\dagger \otimes \mathbb{I})
       \right] \\
    &= \Tr_{Q_1}\!\left[\left(\sum_m \hat{\kappa}_m^\dagger \hat{\kappa}_m\right) \otimes \mathbb{I} \cdot \rho\right]
     = \sigma_{Q_2 C},
\end{align*}
using the cyclic property of the partial trace over $Q_1$.
Since $\sigma'_{Q_2 C} = \sigma_{Q_2 C}$, the guessing probability, which depends only on the reduced state on $Q_2 C$, is unchanged.
\end{proof}

\subsubsection{Single-Player Blind Randomness (Model 1)}

Recall in Model~1 (see Sec.~\ref{sec:adversary_model}) that an adversary may occupy all prover locations except one, say $P_u$. 
It is therefore forced to produce the consistency-check response of $P_u$ without access to the quantum system that generated $X_u$. 

The relevant guessing task is to predict $X_u$ given access to all other players' outputs $X_{[k]\setminus u}$, all challenge values $A$, and the post-measurement quantum states $Q_{[k]\setminus u}'$, the most generous possible side information consistent with the adversary not being at $P_u$. 
The following theorem, which generalizes the two-player result of Ref.~\cite{Miller2017} to arbitrary complete-support $k$-player games with non-uniform inputs, quantifies how well this guessing task can be performed.
\begin{theorem}
\label{thm:blind-randomness-single}
Let $G = (p, \omega)$ be a complete-support $k$-player non-local game with arbitrary input distribution $p_\vec{a}$. 
For any $k$-player strategy $\Gamma$ with post-measurement state $\sigma_{XAQ'}$, define the guessing deviation
\begin{equation*}
    \delta_u= 1 - p_g\!\left(X_u \,\middle|\, A\, X_{[k]\setminus u}\, Q'_{[k]\setminus u}\right)_{\!\sigma}.
\end{equation*}
Then there exists a correlation $\bar{p}(x \mid a)$ in which player $u$ is classical such that
\begin{equation}
    \sum_{\vec{a}} p_\vec{a} \sum_\vec{x} \left|p(\vec{x} \mid \vec{a}) - \bar{p}(\vec{x} \mid \vec{a})\right|
    \leq 3\, C_u \sqrt{\delta_u}.
\end{equation}
In particular, if $\omega(\Gamma) \geq \omega_{q,k-1,u}$, then
\begin{equation}
    \delta_u
    \geq \frac{4}{9C_u^2}
         \left(\omega(\Gamma) - \omega_{q,k-1,u}\right)^2,
\end{equation}
where $\omega_{q,k-1,u}$ is the optimal score when player $u$ alone is restricted to a classical strategy.
\end{theorem}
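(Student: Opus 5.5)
We follow the route of Proposition~6 of Ref.~\cite{Miller2017} (Theorem~\ref{thm:blind_randomness}), generalised to $k$ players, non-uniform inputs and spectator systems. The plan is to build a ``player-$u$-classical'' correlation by replacing player $u$'s challenge-dependent measurements with a fixed sequential product of them, and to control the error step by step through approximate commutativity.

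First, by Naimark dilation we may take each channel $\Emap{u}$, and hence each measurement $\{A^{a_u}_{x_u}\}$ of player $u$, to be projective on an enlarged $Q_u$. By Lemma~\ref{lem:cptp-invariance}, the dilation and any local processing on the other players' systems leave the relevant guessing probabilities unchanged. The other players' operations act as CPTP maps on $Q_{[k]\setminus u}$. So $\pg(X_u\mid A X_{[k]\setminus u}Q'_{[k]\setminus u})$ equals a guessing probability of $X_u$ from a register $Q_3$ (the other players' post-measurement systems) and a classical register $C=(A,X_{[k]\setminus u})$.

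Next, decompose $\delta_u=\sum_{\vec a}p_{\vec a}\delta_{\vec a}$ into per-challenge deviations. Lemma~\ref{lem:commutation-extended} then gives that the pinching $\Phi_{a_u}$ by $\{A^{a_u}_{x_u}\}$ is $(2\sqrt{\delta_{\vec a}}+\delta_{\vec a})$-commutative with the state of $Q_u$ conditioned on the other players' challenges and outcomes. Complete support is used here: every $\vec a$ has $p_{\vec a}>0$, so the commutation statement for a conditional state is available for every value of $a_u$ paired with every $\vec a_{[k]\setminus u}$.

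With this in hand, define $\bar p$ by letting player $u$ apply the measurements for $a_u=1,2,\dots,|\mathcal A_u|$ in sequence, record all outcomes, and output the one indexed by the received $a_u$. This is a single challenge-independent measurement followed by classical post-processing, so player $u$ is classical, while the other players keep their quantum strategy.

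To bound $\sum_{\vec a}p_{\vec a}\sum_{\vec x}|p-\bar p|$, telescope: for a fixed $\vec a$, the error in moving the measurement for $a_u$ to the front is at most the sum of commutation errors of the pinchings $\Phi_{a'_u}$ with $a'_u>a_u$ (or $<a_u$, depending on ordering). Summing with weights $p_{\vec a}$ regroups the terms into $\sum_{\vec a}q^{(u)}_{\vec a}\,(2\sqrt{\delta_{\vec a}}+\delta_{\vec a})$. Applying Cauchy--Schwarz, $\sum q\sqrt{\delta_{\vec a}}\le\big(\sum q^2/p\big)^{1/2}\big(\sum p\,\delta_{\vec a}\big)^{1/2}=C_u\sqrt{\delta_u}$. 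Absorbing the linear $\delta$ term using $\delta\le\sqrt\delta$ (with care for the constant) gives the bound $3C_u\sqrt{\delta_u}$.

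Finally, $\omega\in[0,1]$, so $|\omega(\Gamma)-\omega(\bar p)|\le\tfrac12\sum p_{\vec a}\sum_{\vec x}|p-\bar p|$ (the factor $\tfrac12$ follows by shifting $\omega$ by $\tfrac12$). Together with $\omega(\bar p)\le\omegaqclass{u}$ this gives $\omega-\omegaqclass{u}\le\tfrac32C_u\sqrt{\delta_u}$, which squares to the stated bound.

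The main obstacle is the telescoping step. Commutativity is established with respect to states conditioned on other players' outcomes, and the bound must stay valid after earlier pinchings have disturbed the state. Our first attempt is to use monotonicity of trace norm under the pinchings themselves, which are CPTP, and to condition everything on $\vec a_{[k]\setminus u}$ so that the other players' measurement operators commute trivially with player $u$'s.
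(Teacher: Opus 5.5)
Your proposal is correct and follows essentially the same route as the paper's proof. The shared steps are: per-challenge deviations $\delta_{\vec a}$; the commutation lemma applied to the cq-state $(\bigotimes_{j\neq u}\Phi^{X_j}_{a_j})(\rho_Q)$; player $u$ pre-measuring in increasing challenge order; a telescoping bound in which each pinching's commutation error is evaluated on the undisturbed state via contractivity of CPTP maps, which is exactly how your ``main obstacle'' is resolved; regrouping into $q^{(u)}_{\vec a}$; Cauchy--Schwarz; and the factor $\tfrac12$ for the score gap. The differences are cosmetic. The paper invokes Lemma~\ref{lem:commutation-base} directly rather than its extended form. No dilation or invariance argument is needed on the other players' side. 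Your claim that processing there leaves the guessing probability unchanged is not right in general, but enlarging their post-measurement registers only lowers $\delta_u$, which is the harmless direction.
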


\begin{proof}
For player $j$ and challenge $a_j$, we define the nondestructive measurement channel 
\begin{equation}
    \Phi^j_{a_j}(\gamma)=\sum_{x_j\in\mathcal{X}_j}\ketbra{x_j}{x_j}_{X_j}\otimes A^{a_j}_{x_j}\gamma A^{a_j}_{x_j}
\end{equation}
and its marginals $\Phi^{X_j}_{a_j} = \Tr_{Q_j} \circ \Phi^j_{a_j}$ (classical outcome kept) and $\Phi^{Q_j'}_{a_j} =\Tr_{X_j} \circ \Phi^j_{a_j}$ (post-measured state kept). 
We note here that by the projective nature of $A^{a_j}_{x_j}$, $Q_j$ and $Q_j'$ occupy the same Hilbert space, and we use the $Q_j$ label for simplicity. 
Since the guessing probability is evaluated without access to $Q_u'$, it can equivalently be evaluated over
\begin{equation}
    \sigma'_{\vec{X}\vec{A}Q_{[k]\setminus u}'}= \sum_{\vec{a}} p_\vec{a}\, \ketbra{\vec{a}}{\vec{a}} \otimes \Tr_{Q_u'}\!\left[\left(\bigotimes_j \Phi^j_{a_j}\right)(\rho_Q) \right].
\end{equation}
Define the conditional guessing errors by
\begin{equation}
    \pg\!\left(
      X_u \mid X_{[k]\setminus u}\, Q_{[k]\setminus u}',\, \vec{A} = \vec{a}
    \right)_{\Tr_{Q_u'}[\sigma^a]}
    = 1 - \delta_\vec{a},
\end{equation}
so that $\delta_u = \sum_\vec{a} p_\vec{a}\, \delta_\vec{a}$.

By Lemma~\ref{lem:commutation-base}, $\Phi^{Q_u}_{a_u} \otimes \mathbb{I}$ is
$(2\sqrt{\delta_\vec{a}} + \delta_\vec{a})$-commuting with
\begin{equation}
    \Tr_{Q_{[k]\setminus u}'}[\sigma^\vec{a}]
    = \left(\bigotimes_{j \neq u} \Phi^{X_j}_{a_j}\right)(\rho_Q)
    \label{eq:C5}
\end{equation}
for each $\vec{a}$. 
Construct the strategy $\bar{\Gamma}$ as follows: 
\begin{enumerate}
    \item Prior to the protocol, player $u$ pre-measures its quantum system with challenges in increasing order and stores outcomes $x^{a_u}_u$.
    \item During the protocol, when provided challenge $a_u$, player $u$ responds with $x^{a_u}_u$, while all other players perform measurements on their quantum subsystems according to their challenges.
\end{enumerate}
This gives a strategy in which player $u$ is classical and induces
\begin{equation}
    \bar{p}(\vec{x} \mid \vec{a}) = \left[\left(\bigotimes_{j \neq u} \Phi^{X_j}_{a_j}\right)\otimes \Phi^{X_u}_{a_u} \circ \Phi^{Q_u}_{a_u - 1}\circ \cdots \circ \Phi^{Q_u}_1\right](\rho_Q).
\end{equation}
We can therefore bound the statistical distance by
\begin{align}
    &\sum_\vec{a} p_\vec{a} \sum_\vec{x} \left|p(x \mid a) - \bar{p}(x \mid a)\right|\\
    \leq &\sum_\vec{a} p_\vec{a} \sum_{a_u=1}^{|A_u|-1}\left\|
        \left(\Phi^{Q_u}_{a_u} - \mathbb{I}\right)\!\left[      \left(\bigotimes_{j \neq u} \Phi^{X_j}_{a_j}\right)(\rho_Q) \right]\right\|_1\\
    \leq &3 \sum_\vec{a} p_\vec{a} \sum_{a_u=1}^{|A_u|-1}  \sqrt{\delta_{a_{[k]\setminus u},\, a_u}}\\
    \quad= &3 \sum_{a_{[k]\setminus u},\, a_u}  q^{(u)}_{a_{[k]\setminus u},\, a_u}
      \sqrt{\delta_{(a_{[k]\setminus u},\, a_u)}}
    \notag \\
    \leq &3
      \sqrt{
        \sum_{a_{[k]\setminus u},\, a_u}
        \frac{\left(q^{(u)}_{a_{[k]\setminus u},\, a_u}\right)^2}
             {p_{a_{[k]\setminus u},\, a_u}}
      }
      \cdot
      \sqrt{
        \sum_{a_{[k]\setminus u},\, a_u}
        p_{a_{[k]\setminus u},\, a_u}\,
        \delta_{(a_{[k]\setminus u},\, a_u)}
      }
    \notag \\
    = &3\, C_u \sqrt{\delta_u},
\end{align}
where the second line expands the trace norm using triangle inequality and uses the fact that CPTP maps do not increase trace norm, the third line uses $2\sqrt{\delta} + \delta \leq 3\sqrt{\delta}$ for $\delta \in [0,1]$, the fourth line collects all final inputs $a'_u > a_u$, and the fifth line uses Cauchy-Schwarz.

Since $\omega(a,x) \in [0,1]$, the score gap is bounded by the statistical distance,
\begin{equation}
    \omegaGstrat{\Gamma} - \omegaGstrat{\bar{\Gamma}}\leq \frac{1}{2}      \sum_{\vec{a}} p_{\vec{a}} \sum_{\vec{x}} \left|p(x \mid a) - \bar{p}(x \mid a)\right|\leq \frac{3}{2}\, C_u \sqrt{\delta_u}.
\end{equation}
Since $\omegaGstrat{\bar{\Gamma}} \leq \omegaqclass{u} \leq \omegaGstrat{\Gamma}$, rearranging gives the first bound.
\end{proof}

This theorem demonstrates that if the strategy achieves a score strictly above $\omegaqclass{u}$, then $\delta_u > 0$, meaning $X_u$ cannot be perfectly predicted even with full access to all other players' information. 
Combining this with a guess-forcing check, we obtain the central single-round security statement for Model~1, Theorem~\ref{thm:single_round_model1}. 
\begin{proof}[Proof of Theorem~\ref{thm:single_round_model1}]
Theorem~\ref{thm:reduction_nlg_full} implies that the score-input isolated geometry maps the adversary strategy to a valid quantum strategy for $G$ with the same score, producing registers $x_i$, $Q_i'$. 
The guess-forcing condition on $X_{u,j}$ implies
\begin{equation}
    \Pr[X_{u,j} = X_u] \leq 
    \pg(X_u \,|\, I_{u,j} E)_{\rho_{AXQ'E}}\leq 
    \pg\!\left(X_u \,|\, A X_{[k]\setminus u} 
    Q_{[k]\setminus u}' E\right)_{\rho_{AXQ'E}} = 1 - \delta_u.
\end{equation}
since providing more information can only increase the guessing probability.
Furthermore, the mismatch probability is lower bounded by the probability of any single mismatch event,
\begin{equation}
    \pmis \geq \Pr[X_{u,j} \neq X_u] 
    \geq \delta_u  \geq \frac{4}{9C_u^2}\left(\omega - \omegaqclass{u}\right)^2,
\end{equation}
where the last inequality applies Theorem~\ref{thm:blind-randomness-single}.
\end{proof}

Since the adversary is free to choose which prover location $u$ to vacate, the geometry must supply a guess-forcing check for every possible $u$, and the resulting bound, Theorem~\ref{cor:model1_all}, is the minimum over all choices.
We restate
\begin{proof}[Proof of Corollary~\ref{cor:model1_all}]
Apply Theorem~\ref{thm:single_round_model1} for each $u \in [k]$. 
Since the adversary optimally chooses the missing location to minimize the mismatch bound, the resulting bound is $\min_u \frac{4}{9C_u^2}(\omega - \omegaqclass{u})^2$. 
Using $\omegaonecl \geq \omegaqclass{u}$ and $\Cmax \geq C_u$ for all $u$ gives the stated bound.
\end{proof}

This corollary establishes that the protocol is asymptotically secure against Model~1 adversaries whenever the honest score exceeds $\omegaonecl$. 
The threshold $\omegaonecl$ for Model~1 has a natural operational interpretation. 
An adversary that can occupy all but one prover location $P_u$ can simply co-locate agents with all other provers $P_i$, $i \neq u$, and have those agents execute the honest quantum strategy.
For the missing prover $P_u$, the adversary has no quantum system at the right location and is therefore forced to respond classically.
The best score such adversary can achieve is therefore exactly $\omegaonecl$, where one player is classical and the rest are quantum.
An honest score exceeding $\omegaonecl$ is thus both necessary and sufficient to detect this class of attacks.

\subsubsection{Sequential Blind Randomness (Model 2)}

Recall in Model~2, the adversary is excluded from all prover locations simultaneously, or more generally from the entire region containing the prover devices. 
The consistency checks can therefore be arranged so that each response copy $X_{u,j_u}$ must be produced without access to the score outputs of any later player in a fixed ordering. 
The relevant guessing task is sequential, requiring player $u$'s output to be predicted using only the outputs and post-measurement states of players $1$ through $u-1$, together with all inputs, i.e guessing $X_u$ with outputs and post-measurement state $X_{[u-1]}Q_{[u-1]}'$ and inputs $A$. 
We note that this ordering is a design choice that must be reflected in the geometry.
\begin{theorem}
\label{thm:sequential_blind}
Let $G = (p, \omega)$ be a complete-support $k$-player non-local game with arbitrary input distribution $p_\vec{a}$. 
For any $k$-player strategy $\Gamma$ with post-measurement state $\sigma_{XAQ'}$, define the sequential guessing deviations
\begin{equation}
    \tilde{\delta}_u = 1 - p_g\!\left(X_u \,\middle|\, A\, X_{[u-1]}\, Q'_{[u-1]}\right)_{\!\sigma}, \quad u = 2, \ldots, k.
\end{equation}
Then there exists a fully classical correlation $\bar{p}(x \mid a)$ such that
\begin{equation}
    \sum_\vec{a} p_\vec{a} \sum_\vec{x} \left|p(x \mid a) - \bar{p}(x \mid a)\right|\leq 3 \sum_{u=2}^k C_u\sqrt{\tilde{\delta}_u}.
\end{equation}
In particular, if $\omega(\Gamma) \geq \omega_{\mathrm{cl}}$, then
\begin{equation}
    \tilde{\delta}_{\mathrm{mis}} := \max_{u \geq 2}\, \tilde{\delta}_u \geq \frac{4}{9} \left(  \frac{\omega - \omega_{\mathrm{cl}}}{\sum_{u=2}^k C_u}  \right)^{\!2}.
\end{equation}
\end{theorem}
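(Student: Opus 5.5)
The plan is to run the argument of Theorem~\ref{thm:blind-randomness-single} iteratively, classicalizing players in the order $k, k-1, \ldots, 2$ and leaving player $1$ quantum. Player $1$ can be made classical for free because a single quantum player with product measurements on a state is a local hidden variable model. Throughout we assume projective measurements, as in the proof of Theorem~\ref{thm:blind-randomness-single}; this is without loss of generality by Naimark dilation into the local registers, which do not enter the guessing problems we condition on.

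For each $u \geq 2$ and each $\vec a$, set $1-\delta^{(u)}_{\vec a}$ to be the conditional guessing probability of $X_u$ from $X_{[u-1]}Q'_{[u-1]}$ given $\vec A=\vec a$, so that $\tilde\delta_u=\sum_{\vec a}p_{\vec a}\delta^{(u)}_{\vec a}$. The key observation comes from the guessing probability used in $\tilde\delta_u$. That quantity is evaluated on the state with $Q'_u,\ldots,Q'_k$ and $X_{u+1},\ldots,X_k$ traced out. By Lemma~\ref{lem:cptp-invariance}, the measurements of players $u+1,\ldots,k$ can therefore be replaced by \emph{any} CPTP maps on their registers without changing this reduced state. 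In particular, replacing them by the pre-measurement sequences $\Phi^{Q_s}_{|A_s|}\circ\cdots\circ\Phi^{Q_s}_1$ followed by readout leaves the state seen by the guesser unchanged.

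Consider the hybrid correlations $p^{(u)}$, in which players $u,\ldots,k$ are classical (pre-measured in increasing challenge order) and players $1,\ldots,u-1$ measure honestly. Let $p^{(k+1)}=p$ and $p^{(2)}=\bar p$. The telescoping triangle inequality gives
\[
\sum_{\vec a} p_{\vec a}\,\bigl\|p^{(u+1)}(\cdot\mid\vec a)-p^{(u)}(\cdot\mid\vec a)\bigr\|_1 .
\]
For each such term we expand the difference, exactly as in Theorem~\ref{thm:blind-randomness-single}, as a sum over $a'_u < a_u$ of $\|(\Phi^{Q_u}_{a'_u}-\mathbb{I})[\,\cdot\,]\|_1$. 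Here the argument is the state after players $1,\ldots,u-1$ apply $\Phi^{X_j}_{a_j}$ and players $u+1,\ldots,k$ have been classicalized; the classicalization maps act on other registers and commute with $\Phi^{Q_u}$. We then invoke Lemma~\ref{lem:commutation-extended}, with $Q_2\mapsto Q_u$, $Q_3 C\mapsto Q'_{[u-1]}X_{[u-1]}$ and $Q_1\mapsto Q_{u+1}\cdots Q_k$ as spectator. This shows $\Phi^{Q_u}_{a'_u}$ is $(2\sqrt{\delta}+\delta)$-commutative with the relevant marginal. The needed $\delta$ is the guessing error for challenge tuple $(\vec a_{[k]\setminus u},a'_u)$, and by the invariance observation it equals $\delta^{(u)}_{(\vec a_{[k]\setminus u},a'_u)}$, because the later players' operation does not affect it. The data-processing step, which applies the readout maps after the commutation, uses contractivity of the trace norm.

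Regrouping the sum over $a'_u$ with weights $q^{(u)}_{\vec a}$ and applying Cauchy--Schwarz reproduces the bound $3C_u\sqrt{\tilde\delta_u}$ for each hybrid step, and summing over $u=2,\ldots,k$ gives the claimed statistical distance. The score statement then follows as before. We have $\omega(\Gamma)-\omega(\bar\Gamma)\le\tfrac32\sum_u C_u\sqrt{\tilde\delta_u}\le\tfrac32\bigl(\sum_u C_u\bigr)\sqrt{\tilde\delta_{\mathrm{mis}}}$ and $\omega(\bar\Gamma)\le\omegacl$, and rearranging yields the claim.

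The main obstacle is the middle step. The commutation must be applied to the correct state, so that the $\delta$ appearing is indeed $\tilde\delta_u$ for the \emph{original} strategy rather than for the hybrid one. Lemma~\ref{lem:cptp-invariance} is the tool for this, but care is needed, since $\Phi^{X_j}$ for $j<u$ retains $X_j$ while tracing the post-measurement state. To handle this, I would keep $Q'_{[u-1]}$ in the conditioning, which only increases the guessing power, and note that the commutation bound for the marginal containing $X_{[u-1]}Q'_{[u-1]}$ implies the bound after tracing out $Q'_{[u-1]}$.
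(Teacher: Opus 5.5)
Your proposal is correct and is essentially the paper's proof. The paper's triangle inequality over players is your hybrid chain: players are classicalized in the order $k,\dots,2$, so that players $1,\dots,u-1$ are still honest when player $u$ is treated. Each step then uses Lemma~\ref{lem:commutation-extended} with $Q_{u+1}\cdots Q_k$ as spectator, regroups with $q^{(u)}_{\vec a}$, and applies Cauchy--Schwarz, exactly as you do.

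The worry in your last paragraph is unnecessary. With $Q_3C\mapsto Q'_{[u-1]}X_{[u-1]}$, the lemma already gives commutativity with the state in which $Q'_{[u-1]}$ is traced out, and that is exactly the state appearing in each term. Commutativity with the state that retains $Q'_{[u-1]}$ is false in general: take part of $Q'_1$ maximally entangled with $Q_u$, both read in the computational basis, so $\tilde\delta_u=0$ yet the pinching changes that state by trace norm $1$. So do not route the argument through it.
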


\begin{proof}
For each $u$, the relevant guessing probability can equivalently be evaluated over
\begin{equation}
    \tilde{\sigma}_{X_{[u]} \vec{A} Q_{[u]}' Q_{[k]\setminus[u]}}   = \sum_\vec{a} p_\vec{a}\, \ketbra{\vec{a}}{\vec{a}}     \otimes \left[  \left(\bigotimes_{j=1}^u \Phi^j_{a_j}\otimes \mathbb{I}_{Q_{[k]\setminus[u]}}\right)(\rho_{Q_1\dots Q_k}) \right],
\end{equation}
where $\bigotimes_{j=1}^u \Phi^j_{a_j}$ acts as the identity on $Q_{[k]\setminus[u]}$. 
Operations on $Q_{[k]\setminus[u]}$ are not used to generate or guess $X_u$, so by Lemma~\ref{lem:cptp-invariance} they can be ignored without changing the relevant guessing probability. 
Define conditional errors by
\begin{equation}
    \pg\!\left(
      X_u \mid X_{[u-1]}\, Q_{[u-1]}',\, \vec{A} = \vec{a}
    \right)_{\Tr_{Q_u' Q_{[k]\setminus[u]}}[\tilde{\sigma}^\vec{a}]}
    = 1 - \tilde{\delta}_{u,\vec{a}},
    \label{eq:C16}
\end{equation}
so that $\tilde{\delta}_u = \sum_\vec{a} p_\vec{a}\, \tilde{\delta}_{u,a}$.

By Lemma~\ref{lem:commutation-extended}, $\Phi^{Q_u}_{a_u}$ is $(2\sqrt{\tilde{\delta}_{u,a}} + \tilde{\delta}_{u,a})$-commuting with
\begin{equation}
    \tilde{\sigma}^a_{X_{[u]}\, Q_{[k]\setminus[u-1]}} = \left(\bigotimes_{j=1}^{u-1} \Phi^{X_j}_{a_j}\right)(\rho_Q)
\end{equation}
for all $\vec{a}$. 
Construct $\bar{\Gamma}$ by having each player $u \geq 2$ pre-measure its challenges in increasing order and store outcomes $x^{a_u}_u$, while player 1 uses the original measurement during the game. 
The pre-measurements turn players $u \geq 2$ into classical response tables, and player 1's residual local response can be absorbed into a local stochastic map.
This produces a fully classical correlation
\begin{equation}
    \bar{p}(\vec{x} \mid \vec{a}) = \left[ \Phi^{X_1}_{a_1} \otimes     \bigotimes_{j=2}^k \Psi^{X_j}_{a_j} \right](\rho), \qquad
    \Psi^{X_j}_{a_j} = \Phi^{X_j}_{a_j} \circ \Phi^{Q_j}_{a_j-1}   \circ \cdots \circ \Phi^{Q_j}_1.
\end{equation}
We can similarly bound the statistical distance 
\begin{align}
    &\sum_\vec{a} p_\vec{a} \sum_\vec{x} \left|p(\vec{x} \mid \vec{a}) - \bar{p}(\vec{x} \mid \vec{a})\right|
    \notag \\
    &\quad\leq \sum_\vec{a} p_\vec{a} \sum_{u=2}^k \sum_{a_u=1}^{|A_u|-1}
      \left\|
        \left[\mathbb{I} \otimes \left(\Phi^{Q_u}_{a_u} - \mathbb{I}\right) \otimes \mathbb{I}\right]
        \!\left[
          \left(
            \bigotimes_{j=1}^{u-1} \Phi^{X_j}_{a_j} \otimes \mathbb{I}
          \right)(\rho_Q)
        \right]
      \right\|_1
    \notag \\
    &\quad\leq 3 \sum_{u=2}^k \sum_\vec{a} p_\vec{a} \sum_{a_u'=1}^{|A_u|-1}
      \sqrt{\tilde{\delta}_{u,\,(a_{[k]\setminus u},\, a_u')}}
    \notag \\
    &\quad= 3 \sum_{u=2}^k
      \sum_{\vec{a}}
      q^{(u)}_{\vec{a}}
      \sqrt{\tilde{\delta}_{u,\,\vec{a}}}
    \notag \\
    &\quad\leq 3 \sum_{u=2}^k C_u \sqrt{\tilde{\delta}_u},
\end{align}
where the second line performs triangle inequality over players and over prior challenge values and uses the fact that CPTP maps do not increase trace distance, the fourth line the inputs $a_u$, while the last line applies Cauchy-Schwarz separately for each $u$.

For the second bound, since $\omega(\vec{a},\vec{x}) \in [0,1]$, the gap is bounded by the statistical distance,
\begin{equation}
    \omegaGstrat{\Gamma} - \omegacl \leq \omegaGstrat{\Gamma} - \omegaGstrat{\bar{\Gamma}} \leq \frac{3}{2} \sum_{u=2}^k C_u \sqrt{\tilde{\delta}_u}.
\end{equation}
Using
\begin{equation}
    \sum_{u=2}^k C_u \sqrt{\tilde{\delta}_u}\leq \sqrt{\tilde{\delta}_{\mathrm{mis}}} \sum_{u=2}^k C_u
\end{equation}
gives the second bound.
\end{proof}
 
The sequential guessing structure reflects the fact that, in a well-designed geometry, the adversary's agents must commit to each response copy in a fixed causal order, with each successive guess having access only to the outputs already committed. 
The classical-quantum gap $\omega - \omegacl$ directly controls how hard this sequential guessing task is.
A larger score violation means more genuine randomness in the outputs, making the adversary's task harder.
Applying this to the protocol under Model~2 geometry gives Theorem~\ref{thm:single_round_model2}.
\begin{proof}[Proof of Theorem~\ref{thm:single_round_model2}]
The score-input isolated geometry maps the adversary strategy to a valid quantum strategy for $G$ with the same score. 
For each $u \in \{2, \ldots, k\}$, the guess-forcing condition on $X_{u,j_u}$ such that
\begin{equation}
    \Pr[X_{u,j_u} \neq X_u]  \geq 1 - \pg(X_u \,|\, I_{u,j_u} E)   \geq 1 - \pg\!\left(X_u \,|\, A X_{[u-1]} Q'_{[u-1]} E\right)
    = \tilde{\delta}_u.
\end{equation}
The overall mismatch probability is lower bounded by the largest single consistency check failure,
\begin{equation}
    \pmis \geq \max_{u \geq 2}\, \tilde{\delta}_u     = \tilde{\delta}_{\mathrm{mis}} \geq \frac{4}{9}\left(\frac{\omega - \omegacl} {\sum_{u=2}^k C_u}\right)^2,
\end{equation}
where the last inequality applies Theorem~\ref{thm:sequential_blind}.
\end{proof}

Comparing the two models, the security thresholds reflect the adversary's capabilities. 
Model~1 requires $\omega > \omegaonecl$, which is a stronger condition than $\omega > \omegacl$ since $\omegaonecl \geq \omegacl$ always. 
This is consistent with the fact that Model~2 places a stronger restriction on the adversary, so a weaker honest performance suffices to achieve security.

%% file: Appendix/Finite_Size.tex
\subsection{Finite-Size Analysis}
\label{app:finite_size}

The single-round analysis establishes that any adversarial strategy for a single round of $\tildeprotG$ obeys a mismatch-score tradeoff. 
Here, we lift this to finite-size security, using the fact that the single-round tradeoff $\omega_r \leq c_1 \pmisi{r} + c_2$ holds conditionally on any prior history, together with the concentration bound of Ref.~\cite{Vanhimbeeck2019}.

\begin{theorem}
\label{thm:finite-size}
Consider an adversarial strategy for Protocol $\tildeprotG$ in which the single-round score $\omega_r$ and mismatch $p_{\mathrm{mis},r}$ satisfy the tradeoff $\omega_r \leq c_1\, p_{\mathrm{mis},r} + c_2$ for fixed constants $c_1 > 0$ and $c_2$, conditioned on any quantum state and classical information carried forward from prior rounds. 
Then,
\begin{equation}
    \Pr\!\left[ \frac{1}{N}\sum_{r=1}^N \omega_r \geq \omega_{\mathrm{th}},\; \frac{1}{N}\sum_{r=1}^N T_r \leq p_{\mathrm{mis,th}} \right] \leq \varepsilon,
\end{equation}
where $\omega_r$ and $T_r$ are the score and mismatch indicator in round $r$, provided the thresholds satisfy the condition of Eq.~\eqref{eq:threshold_condition} with $Y_{\max}$ and $V$ defined in Eq.~\eqref{eq:variance_proxy}.
\end{theorem}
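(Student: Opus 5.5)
The plan is to reduce the $N$-round statement to a supermartingale concentration bound, following the structure suggested by Lemma~\ref{lem:linearisation}. For each round $r$, let $\mathcal{F}_{r-1}$ denote the classical history and quantum state carried forward from rounds $1,\dots,r-1$. Define the round variable
\begin{equation}
    Y_r := \omega_r - c_1 T_r - c_2 .
\end{equation}
Conditioned on $\mathcal{F}_{r-1}$, the adversary's round-$r$ strategy is a single-round strategy for $\tildeprotG$. We have $\mathbb{E}[\omega_r\mid\mathcal{F}_{r-1}]=\omega$ and $\mathbb{E}[T_r\mid\mathcal{F}_{r-1}]=p_{\mathrm{mis},r}$. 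The assumed tradeoff $\omega_r\le c_1 p_{\mathrm{mis},r}+c_2$, which holds conditionally on any prior history, therefore gives $\mathbb{E}[Y_r\mid\mathcal{F}_{r-1}]\le 0$. Hence $M_n:=\sum_{r\le n}Y_r$ is a supermartingale with respect to the filtration generated by the protocol transcript. The fact that the round-$r$ tradeoff holds for an arbitrary pre-shared state and any classical side information follows from the single-round analysis (Theorems~\ref{thm:reduction_nlg_full} and~\ref{thm:single_round_model1}/\ref{thm:single_round_model2}), since carried-over systems can be absorbed into the spectator register $E$ and the preparation state.

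Next, I relate the acceptance event to $M_N$. On the event $\frac1N\sum\omega_r\ge\omegath$ and $\frac1N\sum T_r\le\pmisth$ we have
\begin{equation}
    \frac1N M_N\ge \omegath-c_1\pmisth-c_2 =: t .
\end{equation}
It therefore suffices to bound $\Pr[M_N\ge Nt]$. I would do this with the Bernstein/Freedman-type martingale inequality of Ref.~\cite{Vanhimbeeck2019}. That inequality requires two inputs: a one-sided bound on the increments, $Y_r-\mathbb{E}[Y_r\mid\mathcal{F}_{r-1}]\le Y_{\max}$, and a bound $V$ on the conditional variance. It then yields
\begin{equation}
    \Pr[M_N\ge Nt]\le \exp\!\left(-\frac{N t^2}{2V+2Y_{\max}t/3}\right).
\end{equation}
Setting the right-hand side equal to $\varepsilon$ and inverting (the standard sufficient inversion) gives $t=\sqrt{2V\ln(1/\varepsilon)/N}+Y_{\max}\ln(1/\varepsilon)/(3N)$. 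This is exactly Eq.~\eqref{eq:threshold_condition}.

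The remaining work, and the main obstacle, is checking that the stated $Y_{\max}=1-c_2$ and $V$ are valid, uniformly over adversarial conditional distributions. For $Y_{\max}$: since $\omega_r\in[0,1]$, $T_r\in\{0,1\}$ and $c_1>0$, we have $Y_r\le 1-c_2$. Combined with $\mathbb{E}[Y_r\mid\mathcal{F}_{r-1}]\le 0$, this suggests using the raw increment bound, or else applying the inequality to the uncentered $Y_r$ in its form for supermartingale differences bounded above. I would use the version of Ref.~\cite{Vanhimbeeck2019} stated for $\mathbb{E}[Y_r\mid\cdot]\le0$ and $Y_r\le Y_{\max}$.

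For the variance, I bound $\mathbb{E}[Y_r^2\mid\mathcal{F}_{r-1}]$ by expanding
\begin{equation}
    Y_r^2=\omega_r^2-2c_2\omega_r+c_2^2+c_1^2T_r+2c_1c_2T_r-2c_1\omega_rT_r .
\end{equation}
Using $\omega_r^2\le\omega_r\le1$ and $T_r^2=T_r\le1$, and dropping the nonpositive cross term $-2c_1\omega_rT_r$, this is at most $\max\{1-2c_2,0\}+c_2^2+\max\{c_1^2+2c_1c_2,0\}$, matching Eq.~\eqref{eq:variance_proxy}. The delicate point is the sign handling when $c_2$ may be negative or exceed $1/2$, which is why the maxima appear. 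I would verify each term separately for both signs.

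Finally, the linearisation of Lemma~\ref{lem:linearisation} follows from the concavity bound $\sqrt{p}\le \frac{\sqrt{p_0}}{2}+\frac{p}{2\sqrt{p_0}}$ (the tangent of $\sqrt{\cdot}$ at $p_0$). This turns $\omega\le\omega_*+c\sqrt{p_{\mathrm{mis}}}$ into the linear tradeoff with the stated $c_1,c_2$.
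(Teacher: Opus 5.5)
Your architecture matches the paper's proof step for step. You use the same $Y_r=\omega_r-c_1T_r-c_2$, the supermartingale-difference property from the conditional tradeoff, $Y_{\max}=1-c_2$, the second-moment expansion with the two maxima, and the observation that acceptance forces $\frac1N\sum_r Y_r\ge\omegath-c_2-c_1\pmisth$.

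\textbf{The failing step.} What fails is the inversion of the tail bound you wrote down. Let $L=\ln(1/\varepsilon)$. Setting $\exp\bigl(-Nt^2/(2V+2Y_{\max}t/3)\bigr)=\varepsilon$ gives $Nt^2-\tfrac{2}{3}Y_{\max}L\,t-2VL=0$, whose positive root is
\[
t^*=\frac{Y_{\max}L}{3N}+\sqrt{\frac{Y_{\max}^2L^2}{9N^2}+\frac{2VL}{N}}\;>\;\sqrt{\frac{2VL}{N}}+\frac{Y_{\max}L}{3N}\qquad(Y_{\max}\neq0).
\]
So at the threshold of Eq.~\eqref{eq:threshold_condition} your exponential bound evaluates to something strictly larger than $\varepsilon$. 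The sufficient inversion of that closed form is $\sqrt{2VL/N}+2Y_{\max}L/(3N)$. That proves a weaker statement, with twice the linear term.

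\textbf{The fix.} Use the sub-gamma form of the martingale Bernstein inequality rather than its closed exponential form; this is the form in which the paper invokes Proposition~11 of Ref.~\cite{Vanhimbeeck2019}. Since $\phi(u)=(e^u-1-u)/u^2$ is increasing, the hypotheses $Y_r\le Y_{\max}$, $\mathbb{E}[Y_r\mid\mathcal{F}_{r-1}]\le0$ and $\mathbb{E}[Y_r^2\mid\mathcal{F}_{r-1}]\le V$ give, for $0\le\lambda<3/Y_{\max}$,
\[
\ln\mathbb{E}\bigl[e^{\lambda Y_r}\mid\mathcal{F}_{r-1}\bigr]\le V\lambda^2\phi(\lambda Y_{\max})\le\frac{V\lambda^2}{2(1-\lambda Y_{\max}/3)}.
\]
Chaining over rounds via the tower property and applying Chernoff's bound with optimised $\lambda$ yields $\Pr\bigl[\sum_rY_r\ge\sqrt{2NVL}+Y_{\max}L/3\bigr]\le\varepsilon$. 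This is exactly the deviation in Eq.~\eqref{eq:threshold_condition}. The argument uses the uncentred second moment and the one-sided bound on the uncentred $Y_r$, which is the version you correctly said you needed.

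With that replacement your proof coincides with the paper's. Your closing linearisation paragraph belongs to Lemma~\ref{lem:linearisation} rather than to this theorem.
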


\begin{proof}
For a sequential protocol, the adversary's strategy for each round $r$ is described by a channel that takes as input the quantum state and classical information carried forward from all prior rounds, and outputs the responses for round $r$ together with a score $\omega_r$ and mismatch indicator $T_r$. 
By assumption, the single-round tradeoff holds conditionally, so
\begin{equation}
    \mathbb{E}[\omega_r \mid \mathcal{F}_{r-1}] \leq c_1\, \mathbb{E}[T_r \mid \mathcal{F}_{r-1}] + c_2
\end{equation}
for any prior history $\mathcal{F}_{r-1}$, where $\mathcal{F}_{r-1}$ denotes the quantum state and classical information available before round $r$.

Define $Y_r = \omega_r - c_1 T_r - c_2$. The conditional tradeoff implies
\begin{equation}
    \mathbb{E}[Y_r \mid \mathcal{F}_{r-1}] \leq 0,
\end{equation}
so $\{Y_r\}$ is a supermartingale difference sequence. 
Since $\omega_r \in [0,1]$ and, for $c_1 > 0$, $-c_1 T_r$ is maximized at $T_r = 0$, we have the upper bound
\begin{equation}
    Y_r \leq 1 - c_2 =: Y_{\max}.
\end{equation}
 
Expanding $Y_r^2$,
\begin{equation}
    Y_r^2 = \omega_r^2 + c_2^2 + c_1^2 T_r^2- 2c_1 \omega_r T_r - 2c_2 \omega_r + 2c_1 c_2 T_r.
\end{equation}
Using $\omega_r^2 \leq \omega_r$ (since $\omega_r \in [0,1]$) and $T_r^2 = T_r$ (since $T_r \in \{0,1\}$),
\begin{align}
    \mathbb{E}[Y_r^2 \mid \mathcal{F}_{r-1}]\leq (1 - 2c_2)\,\mathbb{E}[\omega_r \mid \mathcal{F}_{r-1}]
      + c_2^2
      + (c_1^2 + 2c_1 c_2)\,\mathbb{E}[T_r \mid \mathcal{F}_{r-1}]
    - 2c_1\,\mathbb{E}[\omega_r T_r \mid \mathcal{F}_{r-1}].
\end{align}
The cross term $-2c_1\,\mathbb{E}[\omega_r T_r \mid \mathcal{F}_{r-1}]$ is non-positive for $c_1 > 0$ and can be dropped. 
Applying the bounds $\mathbb{E}[\omega_r \mid \mathcal{F}_{r-1}] \leq 1$ and $\mathbb{E}[T_r \mid \mathcal{F}_{r-1}] \leq 1$, we obtain the second-moment bound,
\begin{equation}
    \mathbb{E}[Y_r^2 \mid \mathcal{F}_{r-1}] \leq \max\{1 - 2c_2,\, 0\} + c_2^2   + \max\{c_1^2 + 2c_1 c_2,\, 0\}=: \Var.
\end{equation}

The sequence $\{Y_r\}$ therefore satisfies the conditions of the concentration inequality of Ref.~\cite{Vanhimbeeck2019} (Proposition~11). 
Applying the proposition,
\begin{equation}
    \Pr\!\left[\frac{1}{N}\sum_{r=1}^N Y_r \geq \delta\right] \leq \varepsilon
\end{equation}
for
\begin{equation}
    \delta = \sqrt{\frac{2\Var\ln(1/\varepsilon)}{N}} + \frac{Y_{\max}}{3N}\ln(1/\varepsilon).
\end{equation}

The acceptance event $\frac{1}{N}\sum_r \omega_r \geq \omegath$ and $\frac{1}{N}\sum_r T_r \leq \pmisth$ implies
\begin{equation}
    \frac{1}{N}\sum_{r=1}^N Y_r = \frac{1}{N}\sum_{r=1}^N \omega_r    -  \frac{c_1}{N}\sum_{r=1}^N T_r - c_2 \geq \omegath - c_2 - c_1\, \pmisth.
\end{equation}
Setting $\delta = \omegath - c_2 - c_1\, \pmisth$ gives
\begin{equation}
    \Pr[\textsc{pass}]
    \leq \Pr\!\left[\frac{1}{N}\sum_{r=1}^N Y_r \geq \delta\right]
    \leq \varepsilon,
\end{equation}
which completes the proof.
\end{proof}

The condition $c_1 > 0$ reflects the fact that a higher mismatch tolerance gives the adversary more room to achieve a higher score, so the acceptance region must account for this tradeoff. 
The threshold condition~\eqref{eq:threshold_condition} determines the minimum gap between the honest score and the acceptance threshold $\omegath$ required to achieve soundness error $\varepsilon$ in $N$ rounds, for a given mismatch threshold $\pmisth$.
After establishing Theorem~\ref{thm:finite-size}, it remains to convert the quadratic single-round bound into the affine form, which  Lemma~\ref{lem:linearisation} achieves by linearizing the square root term.
\begin{proof}[Proof of Lemma~\ref{lem:linearisation}]
For $\omegaG \leq \omega_*$, an adversary can achieve $p_{\mathrm{mis}} = 0$, so it suffices to consider strategies with $\omegaG \geq \omega_*$. 
Since the square root is concave, its tangent at $p_0$ is a global upper bound,
\begin{equation}
    \label{eq:sqrt_linearisation}
    \sqrt{p_{\mathrm{mis}}} \leq     \frac{1}{2\sqrt{p_0}}\, p_{\mathrm{mis}} + \frac{\sqrt{p_0}}{2}.
\end{equation}
This gives the affine relation $\omegaG \leq c_1 p_{\mathrm{mis}} + c_2$ with coefficients as stated in the lemma. 
Since $c > 0$ and $p_0 > 0$, we have $c_1 > 0$. 
Applying Theorem~\ref{thm:finite-size} yields the result.
\end{proof}
The operating point $p_0$ is a free parameter that can be optimized for a given experimental setup. 
In practice, $p_0$ is chosen close to the expected honest mismatch probability, which minimizes the gap between the honest score and the acceptance threshold $\omega_{\mathrm{th}}$ and thereby minimizes the number of rounds $N$ required to achieve a target soundness error $\varepsilon$.

%% file: Appendix/Two_prover_two_verifier_appendix.tex
\section{CHSH Instantiation}
\label{app:chsh}

\subsection{Model 1}

\begin{figure}
    \centering
    \includegraphics[width=0.8\linewidth]{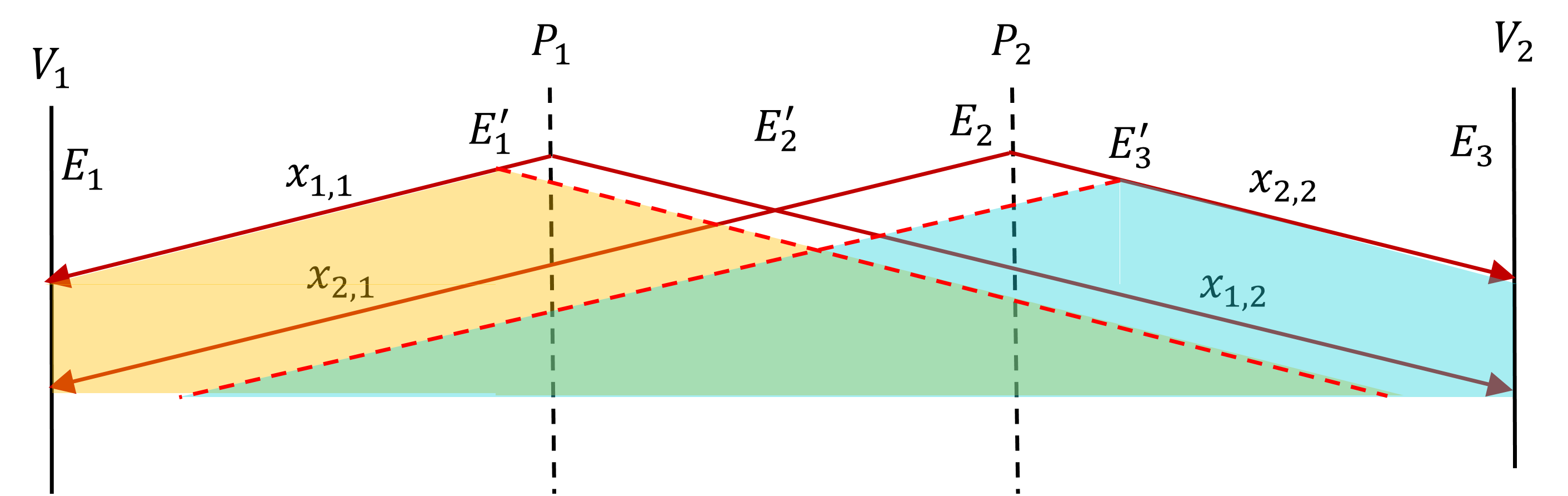}
    \caption{CHSH Instantiation of $\tildeprotGG{CHSH}$, with three arbitrary agents $E_1'$, $E_2'$ and $E_3'$. The orange and cyan regions represents the future light cone of $E_1'$ and $E_3'$ after they receive $a_1$ and $a_2$ respectively.}
    \label{fig:twopartycanonical}
\end{figure}

In Model~1, the adversary may place agents anywhere except at least one prover location. 
By symmetry, it suffices to consider the case where no agent occupies $P_1$. 
We show that any such adversarial strategy can be reduced to one involving only two agents, $E_1$ located at $V_1$ and $E_2$ located at $P_2$.

From Fig.~\ref{fig:twopartycanonical}, any agent $E_1'$ to the left of  $P_1$ can only influence the $V_1$-bound responses $x_{1,1}$ and $x_{2,1}$, and has the same challenge availability as $E_1$ at $V_1$, so its actions can be absorbed into $E_1$'s register. 
For agents $E_2'$ between $P_1$ and $P_2$, their $V_1$-bound contributions concern only $x_{2,1}$.
Since agent $E_2$ can simply relay $x_{2,2}$ to $E_1$, the consistency check $x_{2,1} = x_{2,2}$ can be satisfied without penalty to the score. 
Furthermore, any $V_1$-routed contributions has the same challenge availability as $E_2$.
As such, any agent $E_2'$ can be absorbed into the actions of $E_2$. 
For any agent $E_3'$ to the right of $P_2$, they can be no more powerful than an agent $E_2$ located at the honest prover location $P_2$ since the challenge access and influence match.
Therefore, the actions of any agent $E_3'$ can be absorbed into that of agent $E_2$.
The pre-shared entanglement and any input-independent operations across all agents can be absorbed into the joint initial state $\rho_{Q_1 Q_2}$.

The resulting two-agent strategy is parameterized by $(\rho_{Q_1 Q_2}, \{A^{a_1}_{x_{1}}\}, \{B^{a_2}_{x_2}\}, \{C^{a_1,a_2}_{x_{1}}\})$, where $E_1$ applies $\{A^{a_1}_{x_{1}}\}$ to produce $x_{1,1}$, $E_2$ first applies $\{B^{a_2}_{x_2}\}$ to produce $x_{2,2}$, and then applies $\{C^{a_1,a_2}_{x_{1}}\}$ on the post-measurement register $Q_2'$ to produce $x_{1,2}$. 
The consistency $x_{2,1} = x_{2,2}$ is enforced by $E_1$ receiving $x_{2}$ from $E_2$ via classical communication. 
This two-agent configuration achieves the same score and mismatch probability as the original adversarial strategy.

Here, we formulate a tight numerical bound to bound the score-mismatch tradeoff against an adversary model 1.
Without loss of generality, by purification and Naimark's dilation theorem, we consider the pure state $\ket{\psi}_{Q_1 Q_2}$ and projective measurements. 
Since $B^{a_2}_{x_2}$ is projective, it does not change the Hilbert space, so $C^{a_1,a_2}_{x_1}$ can in principle act on $Q_2$ as well. 
The mismatch probability is
\begin{equation}
    p_{\mathrm{mis}} = \sum_{a_1a_2} p_{a_1a_2}\sum_{\substack{x_2, x_{1} \neq x_{1}'}} \bra{\psi}\, A^{a_1}_{x_{1}} \otimes B^{a_2}_{x_2}C^{a_1,a_2}_{x_{1}'} B^{a_2}_{x_2}\, \ket{\psi},
\end{equation}
and the average score is
\begin{equation}
    \omega = \sum_{\substack{a_1a_2x_1x_2 \\ \omega(a_1,a_2,x_1,x_2)=1}}
      p_{a_1a_2}\, \bra{\psi}\,  A^{a_1}_{x_1} \otimes B^{a_2}_{x_2}\, \ket{\psi},
\end{equation}
where the score assignment is $\omega(a_1,a_2,x_1,x_2)=\begin{cases}1 & x_1\oplus x_2=a_1\cdot a_2\\ 0 & otherwise\end{cases}$.
To obtain a tradeoff of mismatch and score, we can minimize $p_{\mathrm{mis}}$ at fixed $\omega$.
This minimization problem can be relaxed via the NPA hierarchy~\cite{NPA2008} to the SDP 
\begin{equation*}
\label{eqn:SDP_CHSH_Model_1}
\begin{split}
    \min &\sum_{a_1a_2} p_{a_1a_2}\sum_{\substack{x_2, x_{1} \neq x_{1}'}}    \bra{\psi}\, A^{a_1}_{x_{1}} B^{a_2}_{x_2}C^{a_1,a_2}_{x_{1}'} B^{a_2}_{x_2}\,  \ket{\psi}\\
    \text{subject to}& \quad \sum_{\substack{a_1a_2x_1x_2 \\ \omega(a_1,a_2,x_1,x_2)=1}} p_{a_1a_2}\,\bra{\psi}\, A^{a_1}_{x_1} B^{a_2}_{x_2}\,\ket{\psi} = \omega, \\
    &
    [A^{a_1}_{x_1},\, B^{a_2}_{x_2}] = 0 \quad \forall\, x_1, x_2, a_1, a_2, \\
    &
    [A^{a_1}_{x_{1}},\, C^{a_1',a_2}_{x_{1}'}] = 0 \quad \forall\, a_1, a_1', a_2, x_{1}, x_{1}', \\
    &    G \geq 0.
\end{split}
\end{equation*}
where $G_{ij} = \bra{\psi} O^\dagger_i O_j \ket{\psi}$ are the elements of the Gram matrix formed from operators in the set $\Theta = \{\prod_i P_i\}$ with $P_i \in \{\mathbb{I},\, A^{a_1}_{x_{1,1}},\, B^{a_2}_{x_2},\,  C^{a_1,a_2}_{x_{1,2}}\}$. 

\subsection{Model 2}
\label{app:CHSH_model_2_reduction}

In Model~2, the adversary is excluded from the entire interval $[\ell_{P_1}, \ell_{P_2}]$, so all agents lie either to the left of $P_1$ or to the right of $P_2$. 
The reduction to a two-agent strategy between $E_1$ and $E_3$ can be performed similarly to the Model 1 case.
From Fig.~\ref{fig:twopartycanonical}, any agent $E_1'$ to the left of $P_1$ has the same challenge availability and response deadlines as an agent at $V_1$, and its actions can be absorbed into $E_1$'s register near $V_1$. 
Any agent $E_3'$ to the right of $P_2$ similarly has the same challenge availability as an agent at $V_2$, and its actions can be absorbed into $E_3$'s register near $V_2$. 
The pre-shared entanglement and input-independent operations are absorbed into the joint initial state $\rho_{Q_1 Q_2}$.

The resulting two-agent strategy is parameterized by $(\rho_{Q_1 Q_2}, \{A^{a_1}_{x_1}\}, \{B^{a_2}_{x_2}\}, \{D^{a_1,a_2}_{x_2}\}, \{C^{a_1,a_2}_{x_1}\})$, where $E_1$ near $V_1$ first applies $\{A^{a_1}_{x_1}\}$ to produce $x_{1,1}$ and then applies $\{D^{a_1,a_2}_{x_2}\}$ on the post-measurement register to produce $x_{2,1}$, while $E_2$ near $V_2$ first applies $\{B^{a_2}_{x_2}\}$ to produce $x_{2,2}$ and then applies $\{C^{a_1,a_2}_{x_1}\}$ to produce $x_{1,2}$.
Unlike Model~1, there is no intermediate agent.
As such, both agents must independently guess the other's output from their own post-measurement state and the available inputs. 

We expect little difference between adversary Models 1 and 2 for two-prover non-local games, since a single guess suffices to enforce classical behavior.
The numerical results against adversary model 1 also appear to be tight.

\subsection{Link to DI-QRNG Global Randomness}
\label{sec:chsh_alternative}

To connect $\protGG{CHSH}$ with device-independent quantum random number generation (DI-QRNG) global randomness, consider adding a third verifier $V_3$ at position $\ell_{V_3}=\frac{\ell_{P_1} + \ell_{P_2}}{2}$ in the two-prover two-verifier setting. 
In this setting, $V_3$ sends the challenge share $a_{2,3}$ to $P_2$ and the share $a_{1,3}$ to $P_1$, and receives the inner copy-responses $x_{1,2}$ from $P_1$ and $x_{2,1}$ from $P_2$. 
The adversary is excluded from both prover locations but may act freely elsewhere, including between the provers.
The presence of $V_3$ between the provers introduces a natural connection to global randomness in DI-QRNG. 
The copy-responses $x_{1,2}$ and $x_{2,1}$ received at $V_3$ are precisely the outputs that an eavesdropper co-located $V_3$ must reproduce in DI-QRNG, with access only to the side information at $\ell_{V_3}$, namely the challenges $a_1$ and 
$a_2$, and without access to the quantum systems held by $P_1$ and $P_2$. 
The probability that the adversary can correctly guess both $x_{1,2}$ and $x_{2,1}$ is therefore upper bounded by the joint guessing probability $\pg(X_1, X_2 \,|\, A_1, A_2, E)$, which is tied to the global randomness in a single-round of DI-QRNG.

%% file: Appendix/Counterexample.tex
\section{GHZ Instantiation}
\label{app:ghz}

\subsection{Full Construction of the GHZ Counterexample}

We present in full detail the attack on the symmetric equilateral-triangle geometry instantiated with the GHZ game, showing that quantum advantage alone is not sufficient for multi-prover CVPV security.
Let the verifiers $V_1, V_2, V_3$ be located at the vertices of an equilateral triangle of length $R$, centered at the origin. 
Let the honest provers $P_1, P_2, P_3$ be located at the vertices of a smaller equilateral triangle of length $r < R$, also centered at the origin with the same orientation (see Fig.~\ref{fig:ghz_counterexample}).
The protocol is instantiated with the GHZ game, as described in Sec.~\ref{sec:ghz}.

While this protocol appears secure, the following explicit five-agent strategy succeeds with no adversarial agent at $P_1$. 
Agents $E_2$ and $E_3$ are co-located with $P_2$ and $P_3$, $E_1$, $E_4$, $E_5$ co-located with $V_1$, $V_2$, $V_3$ respectively. 
The strategy is as follows:
\begin{enumerate}
    \item Agents $E_1$, $E_2$ and $E_3$ pre-share a GHZ state.
    \item Agents $E_2$ and $E_3$ act honestly (identically to $P_2$ and $P_3$), so their responses are indistinguishable from the honest case.
    \item Agent $E_1$ waits for shares $a_{1,1}, a_{1,2}, a_{1,3}$ to arrive at $V_1$. By the geometry, all three shares arrive at $V_1$ before a response from $P_1$ would be expected there, since $d(\ell_{V_j}, \ell_{V_1}) \leq d(\ell_{V_j}, \ell_{P_1}) + d(\ell_{P_1}, \ell_{V_1})$ for all $j$. Agent $E_1$ reconstructs $a_1$, measures its share of the pre-shared GHZ state in the appropriate basis, and delivers a valid $x_1$ to $V_1$ on time.
    \item Agent $E_4$ waits to receive $x_2$, $x_3$, and the full challenge tuple $a$, then deduces the unique $x_1$ satisfying the GHZ winning predicate and delivers $x_{1,2}=x_1$ to $V_2$. This response $x_1$ arrives on time, since the response from $P_1$ is expected at $V_2$ no earlier than those from $P_2$ and $P_3$. 
    \item Agent $E_5$ acts analogously to $E_4$, sending the copy-response $x_{1,3}=x_1$ to $V_3$.
\end{enumerate}
This strategy uses exactly one shared GHZ state per round and wins every round perfectly. 
It succeeds because the guessing probability $\pg(X_1 \mid X_2, X_3, A) = 1$ for the GHZ game, i.e. knowing $x_2$, $x_3$, and $a$ uniquely determines $x_1$ via the winning predicate. 

\subsection{Converting Non-Local Games to Complete Support Non-Local Games}
\label{app:ghz_gap_dilution}

The GHZ game does not have complete support since challenges are sampled from only four of the eight possible joint challenge combinations.
As such, the adversary can deterministically infer missing responses from the winning predicate. 
One can in general convert any non-local game without complete support to one with complete support at the cost of a reduced classical-quantum gap.

\begin{theorem}
\label{thm:gap_dilution}
Let $G = (p, \omega)$ be a $k$-player non-local game without complete support, and let $Z = \{\vec{a} : p_{\vec{a}} = 0\}$. 
Fix any $\lambda \in (0,1)$ and any constant $c \in [0,1]$, and define the complete-support game $G' = (p', \omega')$ by
\begin{equation}
\label{eq:gap_dilution_def}
    p'_{\vec{a}} = \begin{cases} (1-\lambda)\, p_{\vec{a}}, & \vec{a} \notin Z, \\ \lambda/|Z|, & \vec{a} \in Z,    \end{cases}  \qquad  
    \omega'(\vec{a}, \vec{x}) = \begin{cases} \omega(\vec{a}, \vec{x}), & \vec{a} \notin Z, \\ c, & \vec{a} \in Z.     \end{cases}    
\end{equation}
Then $G'$ has complete support, and for every strategy $\Gamma$,
\begin{equation}
    \omega'(\Gamma) = (1-\lambda)\,\omegaGstrat{\Gamma} + \lambda c.
    \label{eq:gap_dilution_score}
\end{equation}
In particular, $G'$ exhibits a classical-quantum gap if and only if $G$ does, with the gap scaled by $1 - \lambda$.
\end{theorem}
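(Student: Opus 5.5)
The plan is to verify the three claims of Theorem~\ref{thm:gap_dilution} directly from the definition in Eq.~\eqref{eq:gap_dilution_def}: complete support, the score identity, and the gap statement.

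\emph{Complete support and normalization.} For $\vec{a}\notin Z$ we have $p_{\vec{a}}>0$, hence $p'_{\vec{a}}=(1-\lambda)p_{\vec{a}}>0$ because $\lambda<1$. For $\vec{a}\in Z$ we have $p'_{\vec{a}}=\lambda/|Z|>0$ because $\lambda>0$, and $Z\neq\emptyset$ since $G$ lacks complete support. For normalization, $\sum_{\vec{a}\notin Z}p_{\vec{a}}=1$, so $\sum p'=(1-\lambda)+\lambda=1$. Finally, $\omega'$ takes values in $[0,1]$ because $c\in[0,1]$, so $G'$ is a valid game.

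\emph{Score identity.} A strategy $\Gamma=(\rho,\{A^{a_i}_{x_i}\})$ is defined independently of the input distribution, and it induces the same conditional distribution $p(\vec{x}\mid\vec{a})$ in both games. I will split the score sum over $\vec{a}\notin Z$ and $\vec{a}\in Z$. The first part equals $(1-\lambda)\sum_{\vec{a}\notin Z}p_{\vec{a}}\sum_{\vec{x}}p(\vec{x}\mid\vec{a})\omega(\vec{a},\vec{x})=(1-\lambda)\omegaGstrat{\Gamma}$, because the $Z$ terms contribute nothing to $\omegaGstrat{\Gamma}$. The second part equals $\sum_{\vec{a}\in Z}(\lambda/|Z|)\,c\sum_{\vec{x}}p(\vec{x}\mid\vec{a})=\lambda c$, using normalization of $p(\cdot\mid\vec{a})$. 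Adding the two parts gives Eq.~\eqref{eq:gap_dilution_score}.

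\emph{Gap scaling.} The map $\omegaGstrat{\Gamma}\mapsto(1-\lambda)\omegaGstrat{\Gamma}+\lambda c$ is the same strictly increasing affine function for every $\Gamma$. The classes $\Gamma_{\mathrm{cl}}$, $\Gamma_q$ and $\Gamma_{q,k-1,u}$ are defined without reference to the input distribution, so taking suprema over each class commutes with this map. Hence $\omega'_{\mathrm{cl}}=(1-\lambda)\omegacl+\lambda c$ and $\omega'_q=(1-\lambda)\omegaq+\lambda c$, and likewise for $\omegaonecl$. It follows that $\omega'_q-\omega'_{\mathrm{cl}}=(1-\lambda)(\omegaq-\omegacl)$, which is positive if and only if $\omegaq-\omegacl$ is, since $1-\lambda>0$.

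The only point that needs care is this last step: one must confirm that the strategy sets do not depend on $p$, so that the suprema transform correctly. Everything else is bookkeeping.
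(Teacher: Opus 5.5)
Your proposal is correct and matches the paper's proof: complete support by construction, the same strategy inducing the same $p(\vec{x}\mid\vec{a})$ in both games, splitting the score sum over $\vec{a}\notin Z$ and $\vec{a}\in Z$, and passing the affine relation through the suprema over the classical and quantum strategy classes. Your explicit checks of normalization, $Z\neq\emptyset$, and $\omega'\in[0,1]$ spell out details the paper covers only with ``by construction''.
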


\begin{proof}
By construction, $p'_\vec{a}$ is a valid probability distribution with $p'_\vec{a} > 0$ for all $\vec{a}$, so $G'$ has complete support. 
Since $G'$ shares the challenge and response alphabets of $G$, every strategy $\Gamma = (\rho, \{A^{a_i}_{x_i}\})$ for $G$ is a strategy for $G'$, and vice versa. 
Crucially, since the marginal challenge probability $p_{a_i} > 0$ for all $i$ and $a_i$ under $p'$, each local challenge $a_i$ has a corresponding measurement $\{A^{a_i}_{x_i}\}$ specified by $\Gamma$. 
The induced correlation $p(x \mid \vec{a})$ is therefore well-defined and identical for both games on every joint challenge $a$, including those in $Z$. 
For any strategy $\Gamma$,
\begin{equation}
\begin{split}
    \omega'(\Gamma)
    &= \sum_{\vec{a},x} p'_\vec{a}\, p(x \mid \vec{a})\, S'(a,x) \\
    &= \sum_{\vec{a} \notin Z}\, \sum_x (1 - \lambda)\, p_\vec{a}\, p(x \mid \vec{a})\, \omega(\vec{a},x) + \sum_{\vec{a} \in Z}\, \sum_x \frac{\lambda}{|Z|}\, p(x \mid \vec{a})\, c \\
    &= (1 - \lambda)\,\omega(\Gamma) + \lambda c.
\end{split}
\end{equation}

For the gap statement, we observe that the same relation applies to both the set of classical and quantum strategies.
Therefore, optimizing over the set of classical and quantum strategies respectively, $\omegaq'-\omegacl'=(1-\lambda)(\omegaq-\omegacl)$.
\end{proof}

Applying Theorem~\ref{thm:gap_dilution} to the GHZ game with mixing parameter $\lambda$ and $c = 0$ yields a complete-support game $G'_\mathrm{GHZ}$ with classical-quantum gap scaled by $1 - \lambda$. 

\subsection{Numerical Bounds}

For the complete-support GHZ game $G'_{\mathrm{GHZ}}$ with the modified geometry of Section~\ref{sec:ghz}, the mismatch-score tradeoff can be computed numerically, similar to the CHSH instantiation. 
Under the geometry modification, the score-input isolation and guess-forcing conditions are satisfied.
The relevant guessing task for Model~1 is to predict $X_1$ from $A X_2 X_3 Q_2' Q_3'$. 
Let the three players' measurement operators for the non-local game be $\{A^{a_1}_{x_1}\}$, $\{B^{a_2}_{x_2}\}$, $\{C^{a_3}_{x_3}\}$, and the measurement operator for the guess be $\{D^{a_1,a_2,a_3}_{x_1}\}$.
The mismatch probability is
\begin{equation}
    p_{\mathrm{mis}} = 1 - \sum_{a_1,a_2,a_3} p_{a_1a_2a_3} \sum_{\substack{x_1,x_2,x_3}}  \bra{\psi}\,  A^{a_1}_{x_1}   \otimes  \bigl((B^{a_2}_{x_2}\otimes C^{a_3}_{x_3}) D^{a_1,a_2,a_3}_{x_1} (B^{a_2}_{x_2}\otimes C^{a_3}_{x_3}) \bigr) \,\ket{\psi},
\end{equation}
and the average score is
\begin{equation}
    \omega = \sum_{\substack{a_1,a_2,a_3,x_1,x_2,x_3 \\ \omega(a_1,a_2,a_3,x_1,x_2,x_3)=1}} p_{a_1a_2a_3}\,  \bra{\psi}\, A^{a_1}_{x_1} \otimes B^{a_2}_{x_2}\otimes C^{a_3}_{x_3}\, \ket{\psi},
\end{equation}
where the score assignment is $1$ if $x_1\oplus x_2\oplus x_3=0$ when $(a_1,a_2,a_3)=0$ or $x_1\oplus x_2\oplus x_3=1$ when $(a_1,a_2,a_3)\in\{(0,1,1),(1,0,1),(1,1,0)\}$, and $0$ otherwise.
This minimization problem can be relaxed via the NPA hierarchy~\cite{NPA2008} to the SDP
\begin{equation}
\label{eqn:SDP_GHZ_Model_1}
\begin{split}
    \min&\; 1 - \sum_{a_1,a_2,a_3} p_{a_1a_2a_3} \sum_{\substack{x_1,x_2,x_3}}  \bra{\psi}\,  A^{a_1}_{x_1}   B^{a_2}_{x_2}C^{a_3}_{x_3} D^{a_1,a_2,a_3}_{x_1} B^{a_2}_{x_2} C^{a_3}_{x_3} \,\ket{\psi}, \\
    \text{subject to}&\sum_{\substack{a_1,a_2,a_3,x_1,x_2,x_3 \\ \omega(a_1,a_2,a_3,x_1,x_2,x_3)=1}} p_{a_1a_2a_3}\,  \bra{\psi}\, A^{a_1}_{x_1} B^{a_2}_{x_2} C^{a_3}_{x_3}\, \ket{\psi} = \omega, \\
    & [A^{a_1}_{x_1},\, B^{a_2}_{x_2}] = 0,\, [A^{a_1}_{x_1},\, C^{a_3}_{x_3}] = 0,\, [B^{a_2}_{x_2},\, C^{a_3}_{x_3}] = 0\quad \forall\, x_1, x_2, x_3, a_1, a_2, a_3 \\
    & [A^{a_1}_{x_1},\, D^{a_1',a_2,a_3}_{x_1'}] = 0 \quad \forall\, a_1,a_1,a_2,a_3,x_1,x_1',\\
    &G \geq 0,
\end{split}
\end{equation}
where the Gram matrix is formed from operators in the set $\Theta = \{\prod_i P_i\}$ with $P_i \in \{\mathbb{I},\, A^{a_1}_{x_{1}},\, B^{a_2}_{x_2},\,  C^{a_3}_{x_3},\, D_{x_1}^{a_1,a_2,a_3}\}$. 

For Model~2, the relevant guessing tasks are to predict $X_2$ from $A_2A_3X_3Q_3'$ and to predict $X_1$ from $AX_2X_3Q_2'Q_3'$.
This requires the introduction of an additional measurement operator $\{E^{a_2,a_3}_{x_2}\}$.
The mismatch probability can be modified to be a mixture of the probability of a wrong guess for $x_1$ and a wrong guess of $x_2$,
\begin{equation}
    p_{\mathrm{mis}} = 1 - \frac{1}{2}\sum_{a_1,a_2,a_3} p_{a_1a_2a_3} \sum_{\substack{x_1,x_2,x_3}}  \bra{\psi}\,  A^{a_1}_{x_1}   \otimes  \bigl((B^{a_2}_{x_2}\otimes C^{a_3}_{x_3}) (E^{a_2,a_3}_{x_2}+D^{a_1,a_2,a_3}_{x_1})(B^{a_2}_{x_2}\otimes C^{a_3}_{x_3}) \bigr) \,\ket{\psi},
\end{equation}
noting that $\pmis\geq\Pr[\{X_1\neq X_{1,3}\}\lor \{X_2\neq X_{2,3}\}]\geq p\Pr[X_1\neq X_{1,3}]+(1-p)\Pr[X_2\neq X_{2,3}]$ for any $p\in[0,1]$ and where we have chosen $p=0.5$ for simplicity.
In practice, one can optimize over the choice of $p$.
Therefore, we can construct a minimization problem that after relaxation~\cite{NPA2008} reduces to the SDP
\begin{equation}
\label{eqn:SDP_GHZ_Model_2}
\begin{split}
    \min&\; 1-\frac{1}{2}\sum_{a_1,a_2,a_3} p_{a_1a_2a_3} \sum_{\substack{x_1,x_2,x_3}}  \bra{\psi}\,  A^{a_1}_{x_1}   \otimes  \bigl((B^{a_2}_{x_2}\otimes C^{a_3}_{x_3}) (E^{a_2,a_3}_{x_2}+D^{a_1,a_2,a_3}_{x_1})(B^{a_2}_{x_2}\otimes C^{a_3}_{x_3}) \bigr) \,\ket{\psi}, \\
    \text{subject to}&\sum_{\substack{a_1,a_2,a_3,x_1,x_2,x_3 \\ \omega(a_1,a_2,a_3,x_1,x_2,x_3)=1}} p_{a_1a_2a_3}\,  \bra{\psi}\, A^{a_1}_{x_1} B^{a_2}_{x_2} C^{a_3}_{x_3}\, \ket{\psi} = \omega, \\
    & [A^{a_1}_{x_1},\, B^{a_2}_{x_2}] = 0,\, [A^{a_1}_{x_1},\, C^{a_3}_{x_3}] = 0,\, [B^{a_2}_{x_2},\, C^{a_3}_{x_3}] = 0\quad \forall\, x_1, x_2, x_3, a_1, a_2, a_3 \\
    & [A^{a_1}_{x_1},\, D^{a_1',a_2,a_3}_{x_1'}] = 0, \quad \forall\, a_1,a_1',a_2,a_3,x_1,x_1',\\
    & [A^{a_1}_{x_1},\, E^{a_2,a_3}_{x_2}] = 0,\, [B^{a_2}_{x_2},\, E^{a_2',a_3}_{x_2'}] = 0 \quad \forall\, a_1,a_2,a_3,a_2',x_1,x_2,x_2',\\
    &G \geq 0,
\end{split}
\end{equation} 
where the Gram matrix is formed from operators in the set $\Theta = \{\prod_i P_i\}$ with $P_i \in \{\mathbb{I},\, A^{a_1}_{x_{1}},\, B^{a_2}_{x_2},\,  C^{a_3}_{x_3},\, D_{x_1}^{a_1,a_2,a_3}, E_{x_2}^{a_2,a_3}\}$.

%% file: Appendix/Geometry.tex
\section{Geometric Conditions (Detailed)}
\label{app:Geometry}

The security proof relies on two geometric conditions, score-input isolation and guess-forcing checks, which were stated as formal conditions on the adversary-accessible spacetime region $\Radv$. 
In practice, these conditions must be verified from a concrete placement of provers and verifiers. 
Here, we translate the abstract conditions into guidelines for constructing verifier placements that satisfy them that was presented in Sec.~\ref{sec:geometry}. 
For clarity, we work in flat spacetime and write $d(\cdot, \cdot)$ for the Euclidean distance in $\mathbb{R}^d$. 
For illustration purposes we focus on $d = 2$, though the geometric guidelines extend to $d \geq 3$.

\subsection{Ellipse Characterization}
\label{app:ellipse}

The starting point is a geometric representation of the spacetime regions relevant to the security conditions. 
Consider a spacetime event that has access to a challenge share $a_{i,j}$ sent from verifier $V_j$ to prover $P_i$, and that can causally influence a copy-response $x_{i',j'}$ received at verifier $V_{j'}$ from prover $P_{i'}$. 
Let $t_0$ be the target time at which all challenges reach their respective provers. 
Let the source event of share $a_{i,j}$ be $z_{a_{i,j}} = \bigl(t_0 - d(\ell_{V_j}, \ell_{P_i}),\; \ell_{V_j}\bigr)$ and the target event of response copy $x_{i',j'}$ be $z_{x_{i',j'}} = \bigl(t_0 + d(\ell_{V_{j'}}, \ell_{P_{i'}}),\; \ell_{V_{j'}}\bigr)$. 
The set of spatial positions lying in both $C^+(z_{a_{i,j}})$ and $C^-(z_{x_{i',j'}})$ is characterized by the ellipse
\begin{equation}
    \label{eq:ellipse_app}
    \ellipse_{iji'j'} = \Bigl\{ X \in \mathbb{R}^d : 
    d(\ell_{V_{j'}}, X) + d(\ell_{V_j}, X) \leq 
    d(\ell_{V_{j'}}, \ell_{P_{i'}}) + d(\ell_{V_j}, \ell_{P_i}) \Bigr\},
\end{equation}
with foci at $\ell_{V_j}$ and $\ell_{V_{j'}}$ and sum of focal distances $d(\ell_{V_{j'}}, \ell_{P_{i'}}) + d(\ell_{V_j}, \ell_{P_i})$. 
Intuitively, $\ellipse_{iji'j'}$ is the set of positions from which an adversary can both receive share $a_{i,j}$ and influence response copy $x_{i',j'}$ in time.

Using this representation, the \textbf{score-generating region} 
$\Rsc{i}$ for prover $P_i$ (the set of adversary-accessible positions that can both access challenge $a_i$ and influence the selected score-response $x_{i,g(i)}$) is represented as the intersection of ellipses $\bigcap_{j} \ellipse_{ijig(i)}$ over all verifiers $V_j$. 
Fig.~\ref{fig:ellipse_geometry_app}a illustrates the score-generating regions for a five-prover five-verifier setup, showing how the intersection of ellipses produces isolated score-generating regions for each prover.

\begin{figure}[t]
    \centering
    \includegraphics[width=\textwidth]{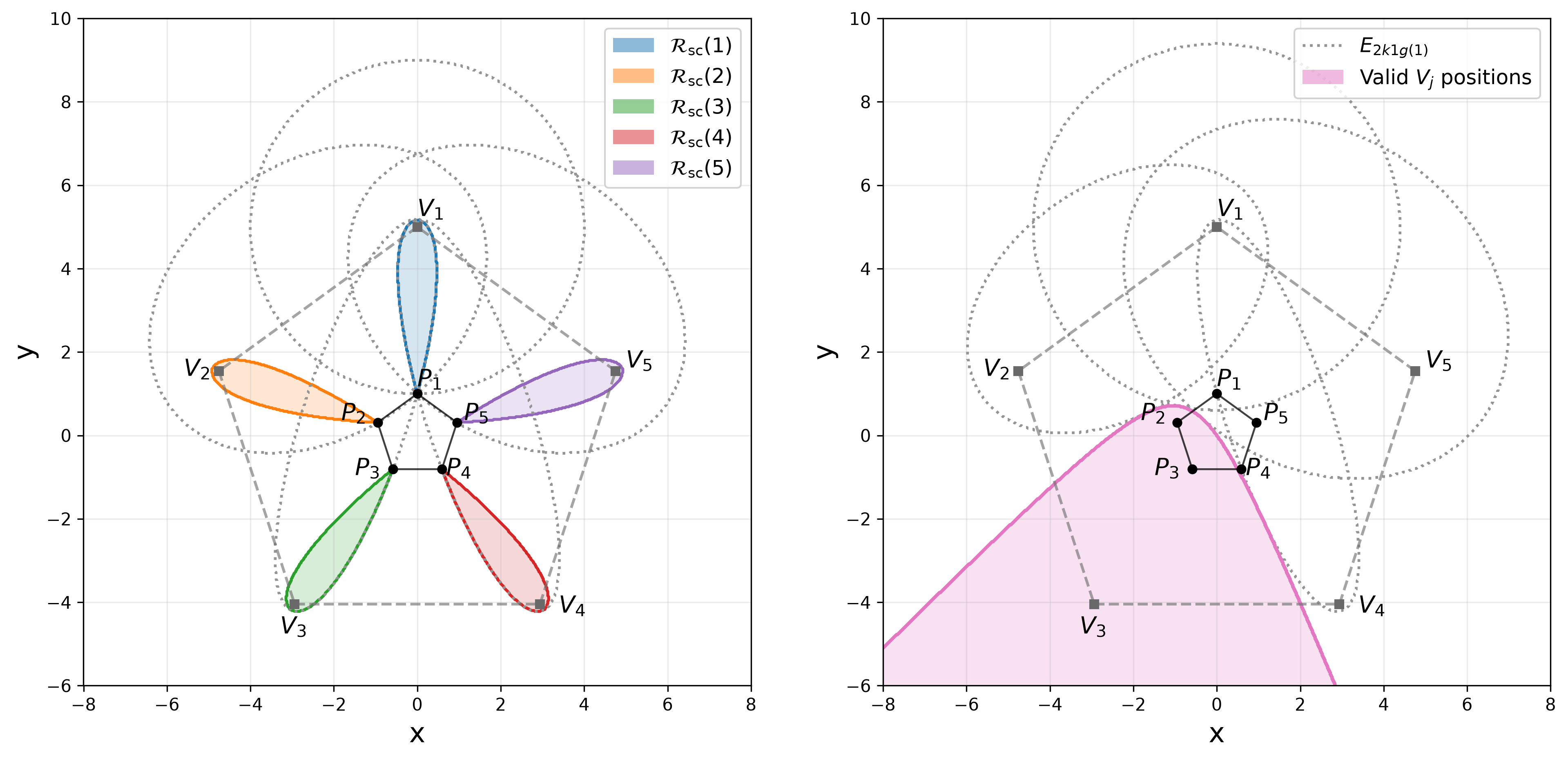}
    \caption{(a) Score-generating regions for a five-prover five-verifier arrangement of Fig.~\ref{fig:geometry_final}, represented as intersections of ellipses. The five ellipses $\ellipse_{1j1g(1)}$ making up the score-generating region $\Rsc{1}$ are shown as dotted lines. The five score-generating regions are isolated from each other. (b) Illustration of the score-input isolation condition, showing the ellipses $\ellipse_{2j1g(1)}$, with ellipse $\ellipse_{251g(1)}$ empty. The figure also shows the range of valid verifier positions $\ell_{V_j}$ satisfying the distance inequality~\eqref{eq:score_input_isolation_dist_app}.}
    \label{fig:ellipse_geometry_app}
\end{figure}

\subsection{Verifier placement guidelines}
\label{app:placement_guidelines}

The ellipse characterization translates the two geometric security 
conditions directly into distance inequalities, which serve as practical guidelines for verifier placement.

\textbf{Score-input isolation} requires that no adversary-accessible position can both access a foreign challenge $a_s$ and influence the score-response $x_{i,g(i)}$ for $i \neq s$, i.e. $ \Cminus{z_{x_{i,g(i)}}} \cap \Ctildeplus{a_s} = \emptyset$. 
This translates to the condition that the intersection of ellipses
\begin{equation}
    \label{eq:score_input_isolation_ellipse}
    \bigcap_j \ellipse_{sjig(i)} =\emptyset
\end{equation}
for all $i \neq s$. 
Since $\ell_{V_{g(i)}}$ is a focus of every ellipse $\ellipse_{sjig(i)}$, this intersection is empty if at least one ellipse is empty, i.e.\ there exists a verifier $V_j$ such that
\begin{equation}
    \label{eq:score_input_isolation_dist_app}
    d(\ell_{V_{g(i)}}, \ell_{V_j}) > d(\ell_{V_{g(i)}}, \ell_{P_i}) + d(\ell_{V_j}, \ell_{P_s}).
\end{equation}
This condition has a natural geometric interpretation. 
Verifier $V_j$ must be far enough from $V_{g(i)}$ that no position can receive the challenge share from $V_j$ destined for $P_s$ and still influence the score-response at $V_{g(i)}$ in time. 
The most reliable way to enforce this is to place a dedicated verifier collinear with $\ell_{P_i}$ and $\ell_{V_{g(i)}}$ on the far side of $P_i$.
Any share from that verifier destined for $P_s$ must travel past $P_i$ and therefore arrives too late to influence $x_i$ at $V_{g(i)}$. 
When a strictly collinear placement is not feasible due to physical constraints, the condition reduces to the distance inequality~\eqref{eq:score_input_isolation_dist_app}, and the range of valid verifier positions $\ell_{V_j}$ satisfying it can be read off directly from the ellipse geometry. 
Panel Fig.~\ref{fig:ellipse_geometry_app}b illustrates the range of valid placements for a given score-generating region.
In this example, the region includes $V_3$, so an additional verifier placement is unnecessary to enforce score-input isolation.

\textbf{Guess-forcing checks} require that the score-generating region $\Rsc{i}$ lies outside the past light cone of the response deadline $z_{x_{i,j'}}$ for some $j' \neq g(i)$, except at the honest prover location $\ell_{P_i}$ itself, i.e. $\Cplus{\Rsc{i}}\cap \Cminus{z_{x_{i,j'}}}=\{(t_0,\ell_{P_i})\}$. 
A sufficient condition to impose this requirement is to have the set of positions that can generate both the score copy $x_{i,g(i)}$ and influence $x_{i,j'}$ is the honest prover location,
\begin{equation}
    \label{eq:guess_forcing_ellipse}
    \left(\bigcap_j \ellipse_{ijig(i)}\right) \cap 
    \left(\bigcap_j \ellipse_{ijij'}\right) = \{\ell_{P_i}\}.
\end{equation} 
This ensures that any adversary not at $\ell_{P_i}$ is genuinely forced to guess the committed score output $X_i$ rather than copy it. 
An illustration of the guess-forcing geometry is shown in Fig.~\ref{fig:guess_forcing_app}.

\begin{figure}
    \centering
    \includegraphics[width=\textwidth]{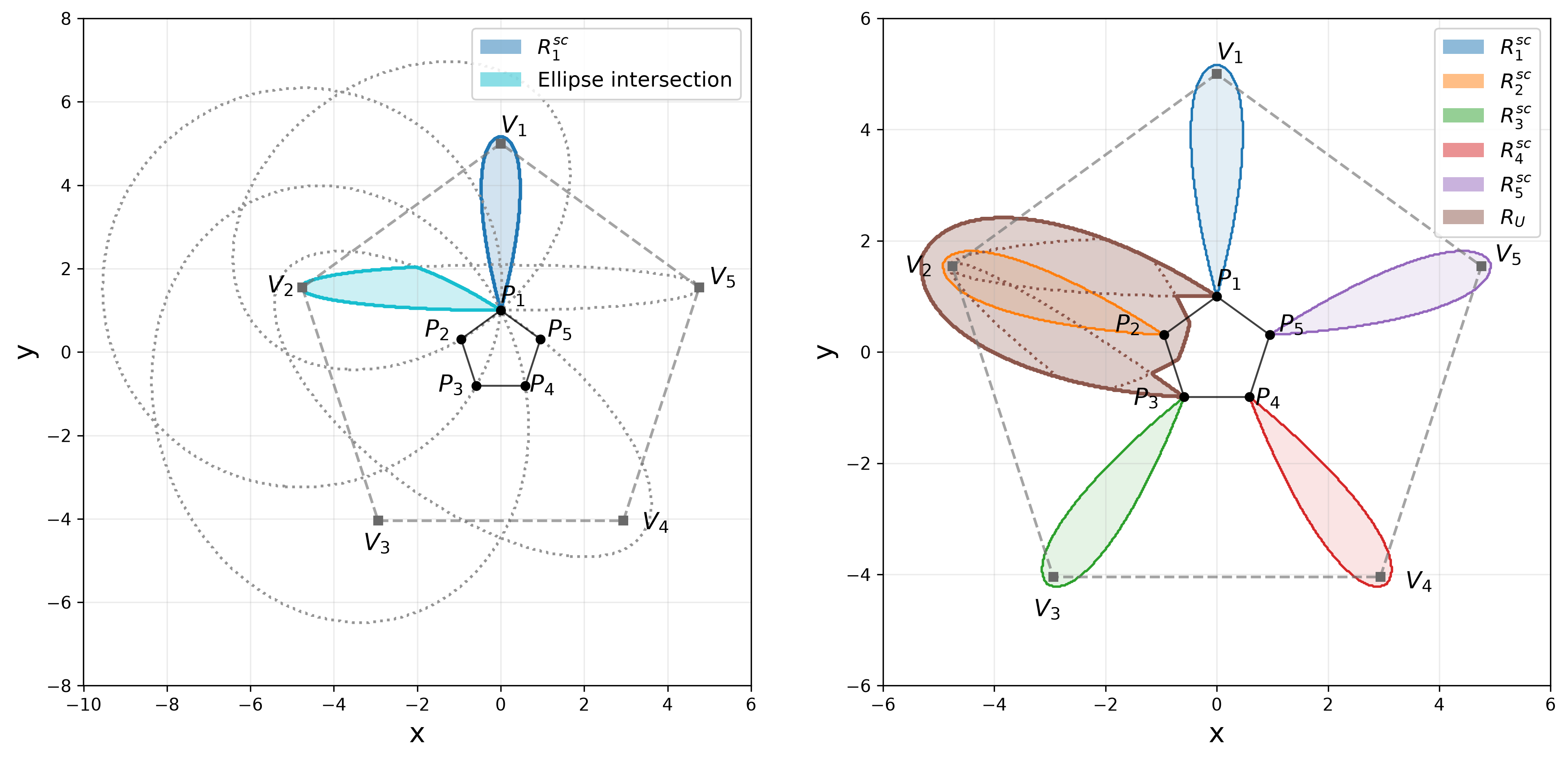}
    \caption{Guess-forcing geometry for the arrangement of Fig.~\ref{fig:geometry_final}. (a) Ellipses representing the score-generating region $\Rsc{1}$ and the intersection of ellipses $\ellipse_{1j12}$ (i.e. response $x_{1,2}$ that is influenced by $a_1$). The two regions only intersect at $P_1$, ensuring a guess-forcing of $x_{1,2}$. (b) Union region $\cup_{s}\left(\bigcap_j \ellipse_{sj21}\right)$ overlapped with the score-generating regions. The response $x_{1,2}$ can be influenced by the events in the score-generating regions of $P_1$, $P_2$ and $P_3$.}
    \label{fig:guess_forcing_app}
\end{figure}

For Model~2, which requires a series of matching checks, the geometry must additionally ensure that the score-generating region of each prover $P_{i'}$ with $i' > i$ lies outside the past light cone of the copy-response deadline for $P_i$. 
This corresponds to the condition
\begin{equation}
    \label{eq:sequential_matching_ellipse}
    \left(\bigcap_j \ellipse_{i'ji'g(i')}\right) \cap 
    \left(\bigcap_j \ellipse_{ijij'}\right) = \emptyset 
    \quad \text{for all } i' > i,
\end{equation}
which prevents the adversary from using the score output of a later prover to assist in producing the matching response for an earlier prover. 
This is a strictly stronger requirement than the individual guess-forcing condition~\eqref{eq:guess_forcing_ellipse}, and in general requires a more careful choice of verifier orientations.
For visualization, we can consider the union region $\mathcal{R}_{U}=\cup_{i'}\left(\bigcap_j \ellipse_{i'jij'}\right)$.
If this region does not intersect with any score-generating region $\Rsc{i'}$, then the condition is satisfied for $P_{i'}$.
Fig.~\ref{fig:guess_forcing_app}b shows the union region, where only the information from the score-generating region of $P_1$, $P_2$ and $P_3$ can influence $x_{1,2}$ response.
We note that the exclusion of an adversary from these overlap regions can remove the access to information for the agent responding to $x_{1,2}$.

\subsection{Symmetric \texorpdfstring{$k$}{k}-gon arrangement}
\label{app:kgon}

To illustrate a concrete verifier placement, let us consider a simple symmetric $k$-gon arrangement for the provers, in which the $k$ provers are placed at the vertices of a regular $k$-gon of radius $r$ centred at the origin, at angles $\theta_i = 2\pi i/k$ for $i \in [k]$.
For simplicity, let us consider verifiers that are placed on a concentric ring of radius $R > r$.
A natural and practically convenient choice places $k$ verifiers, with $V_{g(i)}$ at angle $\theta_i$, so that $V_{g(i)}$ is the closest verifier to $P_i$.

The score-input isolation condition for this arrangement requires that for each prover $P_i$ and a second prover $P_s$, there exists a verifier $V_j$ located at angle $\phi_j$ satisfying
\begin{equation}
    \label{eq:kgon_score_input_isolation}
    2R\abs{\sin\!\left(\frac{\phi_j-\theta_i}{2}\right)} > (R-r) + \sqrt{R^2 + r^2 - 2Rr\cos(\phi_j - \theta_s)}.
\end{equation}
This ensures that the share from $V_j$ destined for $P_s$ cannot reach any position in time to also influence the score copy at $V_{g(i)}$. 
Depending on the ratio $r/R$, a single additional verifier per prover may be required to satisfy this condition for all foreign challenges simultaneously.

A symmetric arrangement of verifiers is sufficient to enforce individual guess-forcing checks for each prover, and therefore suffices for Model~1 security. 
To satisfy the matching check conditions for Model~2 security, one can rotate the verifier ring slightly relative to the prover ring. 
This asymmetry ensures that the score-generating regions of different provers are causally ordered in the way required by the sequential guessing in Theorem~\ref{thm:single_round_model2}. 
An illustration of the symmetric $k$-gon arrangement for Model~2 is shown in Fig.~\ref{fig:geometry_final}b.
Here, the copy-responses $x_{u,u+1}$ are only influenced by $AQ_{u+1}'X_{u+1}$.
Therefore, we can construct the sequential matching requirement by allowing the adversary to have more information when guessing larger $X_u$ indices.